\documentclass[12pt, draftclsnofoot, onecolumn]{IEEEtran}
\linespread{1.3}
\hyphenation{op-tical net-works semi-conduc-tor}
\usepackage{longtable}
\usepackage{color}
\usepackage{soul}
\usepackage{amsmath}
\usepackage{cancel}
\usepackage{amssymb}
\usepackage{lscape}
\usepackage{graphicx}
\usepackage{longtable}
\usepackage{multirow}
\usepackage{accents} 
\usepackage{dblfloatfix}
\usepackage{array}
\usepackage{setspace}
\usepackage{float}
\usepackage{subcaption}
\usepackage{lettrine} 
\usepackage[noadjust]{cite}
\usepackage{balance}
\usepackage[english]{babel}
\usepackage[table,xcdraw]{xcolor}
\usepackage{diagbox}
\usepackage{adjustbox}
\usepackage{caption, multirow, makecell}
\setcellgapes{3pt}
\usepackage{multirow}
\usepackage{algorithm,algpseudocode}
\usepackage{lscape}
\usepackage{suffix}
\usepackage{mathtools}
\usepackage{url} 
\usepackage{cleveref}
\usepackage{soulutf8}
\usepackage{color,soul}

\newtheorem{lemma}{Lemma}
\newtheorem{proposition}{Proposition}

\newtheorem{corollary}{Corollary}
\newenvironment{proof}{{\textit{Proof:}}}{\hfill$\blacksquare$}
\newcommand{\qed}{\hfill $\blacksquare$}

\DeclarePairedDelimiterX\MeijerM[3]{\lparen}{\rparen}%
{#3\delimsize\vert\,\begin{smallmatrix}#1 \\ #2\end{smallmatrix}}

\newcommand\MeijerG[8][]{%
	G^{#2,#3}_{#4,#5}\MeijerM[#1]{#6}{#7}{#8}}

\WithSuffix\newcommand\MeijerG*[7]{%
	G^{#1,#2}_{#3,#4}\MeijerM*{#5}{#6}{#7}}
		
\definecolor{maroon}{RGB}{186,0,0}
\definecolor{purple}{RGB}{96,26,149}
\definecolor{mavi}{RGB}{46,76,255}
\definecolor{haki}{RGB}{38,99,33}

\newcommand{\vect}[1]{\boldsymbol{#1}}

\usepackage{tikz}
\usepackage{mathdots}
\usepackage{cancel}
\usepackage{color}
\usepackage{siunitx}
\usepackage{array}
\usepackage{multirow}
\usepackage{amssymb}
\usepackage{tabularx}
\usepackage{booktabs}
\usetikzlibrary{fadings}
\usetikzlibrary{patterns}
\usetikzlibrary{shadows.blur}
\usetikzlibrary{shapes}
\begin{document} 
\title{\huge UAV-Assisted Cooperative \& Cognitive NOMA: Deployment, Clustering, and Resource Allocation}	
\author{\normalsize
Sultangali~Arzykulov,~\IEEEmembership{\normalsize Member,~IEEE}, 
Abdulkadir~Celik,~\IEEEmembership{\normalsize Senior~Member,~IEEE},
Galymzhan~Nauryzbayev,~\IEEEmembership{\normalsize Member,~IEEE},
and Ahmed~M.~Eltawil,~\IEEEmembership{\normalsize Senior Member,~IEEE}.

\thanks{S. Arzykulov, A. Celik, and A. M. Eltawil are with Computer, Electrical, and Mathematical Sciences \& Engineering (CEMSE) Division at King Abdullah University of Science and Technology (KAUST), Thuwal, KSA 23955-6900 (e-mail: sultangali.arzykulov@gmail.com; \{abdulkadir.celik,ahmed.eltawil\}@kaust.edu.sa).}
\thanks{G. Nauryzbayev is with the Department of Electrical and Computer Engineering School of Engineering and Digital Sciences, Nazarbayev University, Nur-Sultan, Kazakhstan (e-mail: galymzhan.nauryzbayev@nu.edu.kz). }
}
\maketitle
\thispagestyle{empty}
\begin{abstract}
Cooperative and cognitive non-orthogonal multiple access (CCR-NOMA) has been recognized as a promising technique to overcome issues of spectrum scarcity and support massive connectivity envisioned in next-generation wireless networks. In this paper, we investigate the deployment of an unmanned aerial vehicle (UAV) as a relay that fairly serves a large number of secondary users in a hot-spot region. The UAV deployment algorithm must jointly account for user clustering, channel assignment, and resource allocation sub-problems. We propose a solution methodology that obtains user clustering and channel assignment based on the optimal resource allocations for a given UAV location. To this end, we derive closed-form optimal power and time allocations and show it delivers optimal max-min fair throughput by consuming less energy and time than geometric programming. Based on optimal resource allocation, the optimal coverage probability is also provided in closed-form, which takes channel estimation errors, hardware impairments, and primary network interference into account. The optimal coverage probabilities are used by the proposed max-min fair user clustering and channel assignment approaches. The results show that the proposed method achieves 100\% accuracy in more than five orders of magnitude less time than the optimal benchmark.
\end{abstract}
\IEEEpeerreviewmaketitle
\section{Introduction}
\lettrine{T}{he} main requirements of beyond fifth-generation (B5G) wireless networks are typically categorized into three primary service classes \cite{popovski20185g}: enhanced mobile broadband (eMBB) to provide an improved network capacity and peak data rates for high throughput demanding users; massive machine-type communication (mMTC) to support the ever-increasing number of low-power low-cost Internet of things (IoT) devices; and ultra-reliable low-latency (URLLC) communication for mission-critical applications. Optimizing the network resources to achieve these goals jointly is a multi-objective combinatorial problem, which is hard to solve in real-time, even for small-scale networks. The interwoven relations among these goals are coupled by spectral efficiency (SE), which is determined by the interference between users competing for scarce network resources \cite{Celik6G}. 

Legacy orthogonal multiple access (OMA) schemes have struggled to deliver adequate support for eMBB and mMTC service requirements. As a remedy, non-orthogonal multiple access (NOMA) technology has been recognized as a promising technology to improve the SE by simultaneously serving multiple users on the same resource via multiplexing them in either power or code domains \cite{dai2015non}. Coupling NOMA with cooperative communications (CC) and cognitive radios (CR), show significant promise for B5G networks. In CC-NOMA, a relay node with a strong channel decodes the messages intended to weak channel users and fully exploits such prior information to improve the weak user performance \cite{Access_CR_NOMA}. In the CR-NOMA, the unlicensed/secondary users (SUs) operating on the NOMA scheme are permitted to transmit over spectrum bands licensed to primary users (PUs) in an opportunistic and non-intrusive manner \cite{Celik2016Green, Celik2016Multi, Celik2017Hybrid}. Therefore, the conflation of cooperative and cognitive NOMA (CCR-NOMA) concepts has recently attracted much attention to provide dense wireless networks with more significant SE.    

Recently, unmanned aerial vehicles (UAVs) have received attention to serve as an aerial relay to enhance the coverage of geographical regions with high user density and heavy traffic loads, which are also known as (a.k.a.) hot-spots. Since the relay coordinates mainly determine the channel quality and capacity of links to/from the UAV, UAV deployment has a significant impact on the overall network performance. For a given UAV location, CCR-NOMA gain is also affected by the user pairing/clustering strategy, channel allocation due to the varying interference to/from PUs, power control mechanisms, and allocated time portions at each hop. Therefore, this paper investigates the UAV deployment problem in CCR-NOMA networks, where we account for user clustering/pairing, interference to/from a primary network, and resource (i.e., power and time) allocation aspects. 

\subsection{Related Works}
\label{sec:related}
Recent works on CCR-NOMA can be exemplified as follows: In \cite{Li}, the authors derived the closed-form expressions of the outage probability (OP) and ergodic sum-rate (ESR) for full-duplex cooperative NOMA relaying systems with in-phase and quadrature (I/Q) components' imbalance and imperfect successive interference cancellation (SIC). The work \cite{Li2} evaluated the impact of imperfect SIC, non-ideal channel state information, and residual hardware impairments on the OP and ergodic capacity metrics in cooperative NOMA networks over $\alpha-\mu$ fading environment. Due to the fact that cooperative NOMA is susceptible to the inter-user interference (IUI) caused by its operation in the power domain, the spatial modulation was proposed to resolve this issue by avoiding the SIC and IUI terms from the space domain \cite{Chen1, Chen2}. The authors in \cite{tgcn} analyzed the outage performance of underlay CR-NOMA networks with multiple SUs. The throughput maximization problem for the similar system model was solved in \cite{Xu1} by splitting into two subproblems, i.e., NOMA-SU assignment and power allocation, while ensuring the SUs' fairness. The authors in \cite{Xu2} proposed a robust resourc allocation (RA) algorithm to maximize the sum energy efficiency of underlay CR-NOMA networks under channel uncertainties. 

The recent NOMA works on the user clustering and RA can be exemplified as follows: In \cite{Nguyen}, the power allocation, user pairing, and UAV deployment in NOMA networks were jointly studied to maximize the minimum sum-rate per each user pair. The authors in \cite{Celik1} proposed a distributed cluster formation  and RA framework for imperfect NOMA-based interference-limited wireless networks. In addition, in \cite{Celik2}, a similar system model was considered for which the distributed cluster formation and power-bandwidth allocation schemes were proposed. In \cite{Xu}, a joint user clustering and robust beamforming design were proposed to minimize the total transmit power while satisfying the users' quality-of-service (QoS) requirements in downlink NOMA networks with multiple UAV-BSs. A max-min fair UE clustering problem was addressed in \cite{Elkashlan} using three different sub-optimal approaches. The work \cite{Chatzinotas} presented an iterative user clustering method, where each iteration aims at joint optimization of the beamforming and power allocation for given clusters.

The recent works on UAV deployment can be exemplified as follows: in \cite{Mozaffari}, the authors investigated the mobility and optimal deployment of several UAVs to ensure energy-efficient data collection from IoT devices. \cite{Chen} investigated the deployment problem of a single UAV-BS aimed at achieving the maximum reliability performance metrics, such as power losses, bit-error rates, and overall outage. It was demonstrated that the optimal altitude values of both static and mobile UAVs are not identical for different metrics. On the other hand, the authors in \cite{Savkin} optimized the coverage performance of a wireless network with multiple UAV-BSs through minimizing the average distances between the UAV and end-users. The authors in \cite{Alzenad} also studied the wireless network with multiple UAV-BSs and proposed a low-complexity algorithm to solve the deployment problem while maximizing the number of covered end-users with different QoS requirements. In \cite{Mozaffari1}, the optimal three dimensional ($3$D) placement of multiple UAVs deployed with directional antennas was investigated to maximize the total coverage area. The authors in \cite{Kalantari} solved the efficient UAV-BSs' deployment problem for maximizing the coverage performance and defined the minimum number of UAVs to serve all given end-users in a certain area. A stochastic geometry based UAV deployment approaches were developed in \cite{bushnaq2019aeronautical, bushnaq2020optimal}; while the former determined the number and location of the hovering stations to minimize the total time spent for data aggregation from a large scale IoT network, the latter found the optimal locations of tethered and regular UAVs to maximize the coverage of users in a hot-spot region. The authors in \cite{Zeng1,Zeng2,Wu} studied the energy-efficient UAV communication by optimizing the UAV trajectory and transmit power of the UAV while the UAV altitude and bandwidth were optimized by assuming interference-free scenario in \cite{He}.

\subsection{Main Contributions}
\label{sec:contributions}

The main contributions of this paper can be summarized as follows:
\begin{itemize}

\item Considering the effect of system impairments such as channel state information (CSI) and hardware as well as transmission power constraint imposed by the primary network, we analytically derive the end-to-end coverage probability for secondary NOMA users considering the Nakagami-$m$ statistical model.  Unlike free space LoS channel assumption in \cite{Zeng1, Zeng2}, we consider the channel path loss taking into account the effect of both LOS and NLOS components which varies with the UAV location, building density, and height distribution. Finally, Monte Carlo simulations are used to validate the correctness of analytical derivations.

\item For a given set of users (i.e., cluster) on a primary channel, we formulate a fair CCR-NOMA optimization framework for two joint sub-problems: power control and phase-time allocation while taking into consideration the constraints on maximum transmit power of the secondary base station and UAV as well as an interference constraint imposed by the primary network. We provide closed-form max-min power and phase-time allocations, which are validated with numerical results obtained by geometric programming. The obtained results show that closed-form approaches provide more energy-efficient solutions much faster than geometric programming. The optimal end-to-end coverage probability is then obtained by using the optimal RA values.  

\item By using optimal end-to-end probabilities, we propose a fast yet highly accurate user clustering and channel allocation approach, which maximizes the minimum coverage probability of the secondary network. Unlike the common min-sum assignment (a.k.a. Hungarian Algorithm) approach used for the maximum sum-rate objective, we exploit the linear bottleneck assignment (LBA) approach. The numerical results show that the LBA achieves $100$\% accuracy at five orders of magnitude faster time duration than an optimal integer linear programming benchmark. The proposed clustering and channel allocation approach are also suitable for real-time applications as it provides solutions for $200$ users and channels in around $64$ milliseconds.

\item Based on the above steps, we lastly investigate an optimal $3$D deployment of the UAV to maximize the minimum throughput and coverage performance of the CCR-NOMA network. That is, the proposed deployment takes all user clustering, channel assignment, and resource allocation subproblems into account. In this regard, our contributions are distinct from others that examine the 2D positioning of the UAV without altitude optimization \cite{Wu, Zeng1} or do not consider inter-user/inter-network interference \cite{Alzenad, He}.

\end{itemize}

\subsection{Notations and Paper Organization}
\label{sec:notations}
Throughout the paper, sets and their cardinality are denoted with calligraphic and regular uppercase letters (e.g., $| \mathcal{A}|=A$), respectively. Vectors and matrices are represented in lowercase and uppercase boldfaces (e.g., $\vect{a}$ and $\vect{A}$), respectively. The i$^{th}$ row vector of $\vect{A}$ is denoted by $\vect{A}_i$. Subscripts $p$ and $s$ refers to the primary and secondary base stations, respectively. Likewise, subscripts $k$ and $n$ is used for indexing primary users/channels and secondary users, respectively. The subscript $r$ represents the relaying UAV. The notation $a_b^c$ denotes the parameter/variable $a$ from $b$ to $c$, $(a,b) \in \{ p, s, k, n, r,\}$. For example, $d_s^r$/$h_s^r$ denotes the distance/channel from the secondary base station to the relaying UAV. This notation is also extended to $a_{b,c}^{d,e}$ to describe parameters/variables related to different transmitters, receiver, channels, and destination. 

The remainder of the paper is organized as follows. Section \ref{sec:sys} presents the considered network and channel models for the UAV-assisted CCR-NOMA network. Section \ref{sec:problem} discusses the problem formulation and introduces the proposed solution methodology. Furthermore, Section \ref{sec:perform_analysis} derives new analytical expressions for the coverage probability over Nakagami-$m$ fading channels while Section \ref{sec:SP1} provides the closed-form max-min fair resource allocation derivations. Then, Section \ref{sec:SP23} proposes the LBA-based user clustering, channel assignment, and UAV deployment approach. Lastly, Section \ref{sec:res} presents numerical and simulation results and Section \ref{sec:conc} concludes the paper by remarking the key findings.

\section{System Model}
\label{sec:sys}
\subsection{Network Model}
\label{sec:network}
We consider a downlink CCR-NOMA network that consists of a primary network (PN) and a secondary network (SN), as illustrated in Fig. \ref{fig:UAV_sys_model}. The PN comprises of a single primary BS (PBS) that serves $K$ primary users (PUs) over $K$ primary channels (PCs) in an orthogonal and time-slotted fashion. At each time-slot duration of $T$,  the PBS transmits on PC$_k$ with power $P_p^k$ such that overall power consumption cannot exceed the total transmit power $P_p$, i.e., $\sum_{k=1}^K P_p^k \leq P_p$. The set of SUs allocated to the same PC is referred to as a cluster and denoted by $\mathcal{C}_k=\{n \: \vert  \:  \chi_k^n=1 \}$, where $\chi_k^n \in \{0,1\}$ is the binary channel allocation indicator variable. The cluster size is represented by $C_k \triangleq \vert \mathcal{C}_k \vert=\sum_n \chi_k^n, \: \forall k$. Each cluster/PC/PU has a bandwidth of $W$ Hz\footnote{There is a one-to-one correspondence between $\mathcal{C}_k$ and PC$_k$. Thus, they are interchangeably used throughout the paper.}.  The PN operator allows the SN to use PC$_k$ in a cognitive underlay manner such that the SN ensures that its total interference to PU$_k$ on PC$_k$ cannot exceed a predetermined interference temperature constraint (ITC) threshold \cite{tvt}, $\text{ITC}_k$. 
\begin{figure}[t]
	\centering
	\includegraphics[width=0.7\textwidth]{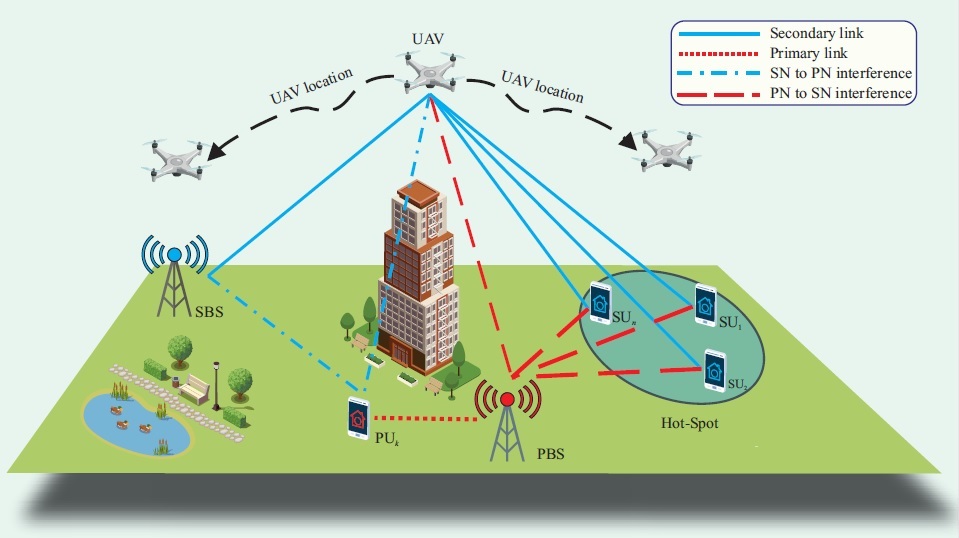}
	\caption{The illustration of the considered underlay CCR-NOMA network.} 
	\label{fig:UAV_sys_model}
\end{figure}

On the other hand, the SN consists of a secondary base station (SBS), a relaying UAV, and SUs. The SBS is overloaded with geographical regions that experience high-user density and heavy traffic conditions, which are also referred to as \textit{hot-spots}. In order to keep up with the growing QoS demands of $N$ SUs located within the hot-spot, the SBS is required to improve the hot-spot coverage by offloading some of its traffic onto PCs while ensuring that ITCs are not violated. With the aim of enhancing the traffic offloading performance, the network operator deploys a UAV that hovers around the hot-spot region and relays the hot-spot traffic from the SBS. Hence, each time-slot is divided into broadcasting and relaying phases with durations of $\lambda_k T$ and $(1-\lambda_k) T$, respectively. In the first phase, the SBS transmits on PC$_k$ with power $P_s^k$ such that overall power consumption cannot exceed the total transmit power $P_s$, i.e., $\sum_{k=1}^K P_s^k \leq P_s$. Likewise, in the second phase, the UAV transmits on PC$_k$ with power $P_r^k$ such that overall power consumption cannot exceed the total transmit power $P_r$, i.e., $\sum_{k=1}^K P_r^k \leq P_r$\footnote{In the remainder of the paper, we assume that the total downlink transmission power of the PBS/SBS/UAV are evenly distributed among channels/clusters. Thus, we have $P_p^k=P_p/K$, $P_s^k=P_s/K$, and $P_r^k=P_r/K$ for the PBS, SBS, and UAV, respectively.}.

\subsection{Channel Model}
\label{sec:channel}
For a generic transmitter node $i$ and receiver node $j$, the composite channel gain is given by 
\begin{equation}
\label{eq:comp_ch}
g_i^j= \sqrt{\ell_i^j} h_i^j, \: i \in \{ p, s, r\} , \: j \in \{k, n ,r\}, \: i\neq j,
\end{equation}
where $\ell_i^j$ is the spatial expectation of attenuation over the probabilities of having line-of-sight (LoS) and non-line-of-sight (NLoS) links, and $h_i^j$ is the channel gain that represents small scale fading. Denoting $\upsilon_i^j(\rm{LoS})\triangleq 1-\upsilon_i^j(\rm{NLoS})$ as the probability of having a LoS link between $i$ and $j$, the spatial expectation of attenuation factor is given by 
\begin{equation}
\label{eq:spatial_ch}
\ell_i^j=\prod_{l\in\{\rm{LoS}, \rm{NLoS}\}}\left[ \eta_i^l \rm{FSPL} \right]^{-\upsilon_i^j(l)}
\end{equation}
where $\eta_i^l$ is the attenuation coefficient related to LoS/NLoS links, and \rm{FSPL} is the free-space path loss. Similarly, the spatial expectation of the path loss (measured in dB) is given by \cite{Hourani}
\begin{equation}
\label{eq:spatial_PL}
\text{PL}_i^j=\sum_{l\in\{\rm{LoS}, \rm{NLoS}\}} \upsilon_i^j(l) \left( \text{FSPL} +  \eta_i^l \right) ,
\end{equation}
and \rm{FSPL} is calculated as  $\text{FSPL} =20 \log_{10}(d_i^j) + 20  \log_{10}\left(f_c\right) + 20 \log_{10}\left(\frac{4\pi}{c}\right)$,
where $f_c$ is carrier frequency and $c$ is the speed of light. 
For the fixed PBS/SBS heights $H_p$/$H_s$, the probability of having an LoS transmission on the SBS-UAV backhaul link and PBS-UAV interference link is given by
\begin{align}
\label{LoS_i_r}
	\upsilon^{r}_{i} (\rm{LoS}) &= \frac{1}{ 1+a_i \exp\left[-b_i \left(\arctan\left(\frac{H_r - H_i}{d^r_i} \right) - a_i  \right)  \right] },
\end{align}
where $\forall i \in \{p,s\}$, $a_i$ and $b_i$ are approximation parameters depending on $H_i$, building heights distribution, the ratio of land area covered by buildings to total land area, and the mean number of buildings per km$^2$ \cite{ITU}. Similarly, assuming a zero height for SU$_n$, the LoS probability for UAV-SU$_n$ access link and PBS-SU$_n$ interference link can be derived as
\begin{align}
\label{LoS_i_N}
\upsilon^n_j (\rm{LoS}) &= \frac{1}{ 1+a_n \exp\left[-b_n \left(\arctan\left(\frac{H_j}{d^n_j} \right) - a_n  \right)  \right]},
\end{align}
where $\forall j \in \{p,r\}$. Assuming Nakagami-$m$ small scale fading, $\vert h_i^j \vert^2$ follows the Gamma distribution with the following probability density function (PDF) \cite{simon_alouini_2005}  
\begin{align}
f_{\vert h^i_j\vert^2} (x) = \frac{m^m x^{m-1} \exp\left[-m x\right]}{\Gamma (m)}, 
\end{align}
where $\Gamma(\cdot)$ is the gamma function and $m$ is the fading parameter, from which Rayleigh fading channel can be modeled by setting $m = 1$ and the Rician channel is approximated by setting $m>1$. To capture CSI imperfections, we model channel coefficients using the minimum mean square error channel estimator as $h_i^j = \tilde{h}_i^j + e_i^j$,  where $\tilde{h}_i^j$ and $e_i^j\sim  \mathcal{CN}(0,\zeta_i^j)$ are the estimated channel coefficient and channel estimation error, respectively. The error variance is modeled as $\zeta_i^j \triangleq \theta \rho^{- \mu}$, where $\rho = \frac{P}{\sigma^2}$ is the transmitted signal-to-noise ratio (SNR), and $\mu \geqslant 0 $, $\theta > 0$  \cite{impCSI}. This model allows us to describe the SNR-dependent and independent imperfect CSI scenarios by setting $\mu \neq 0$ and $\mu = 0$, respectively. 
\section{Problem Statement and Proposed Solution Methodology}
\label{sec:problem}
In this section, we first present a formal problem statement and then discuss an outline of the proposed solution methodology. 
\subsection{Problem Statement}
Our objective is to constitute a max-min fair coverage enhancement for SUs residing within the hot-spot region. Accordingly, the objective of interest can be expressed as 
\begin{equation}
\label{eq:problem_statement}
\underset{\vect{c}, \vect{\chi}, \vect{\lambda}, \vect{\alpha}, \vect{\beta}}{\max} \left( \underset{\forall k, \forall n}{\min} \left[ \wp_k^n (\vect{\chi}_k) \chi_k^n \right] \right), 
\end{equation}
where $\vect{c} \in \mathbb{R}^3$ is the UAV location; $\vect{\chi} \in \{0,1\}^{K \times N}$ is the assignment matrix; $\vect{\lambda} \in \mathbb{R}^{K}$ is the vector of phase duration portions; $\vect{\alpha} \in \mathbb{R}^{K \times N}$ / $\vect{\beta} \in \mathbb{R}^{K \times N}$ is the matrix of power allocation factors in the broadcasting/relaying phases; and $\wp_k^n (\vect{\chi}_k)$ is the coverage probability of SU$_n \in \mathcal{C}_k$. The standard formulation of the max-min coverage problem can be given as follows
\begin{equation}
\hspace*{0pt}
 \begin{aligned}
 & \hspace*{0pt} \vect{\mathrm{P}_\mathrm{o}}: \underset{\vect{c}, \vect{\chi}, \vect{\alpha}, \vect{\beta}, \vect{\lambda}, \psi}{\max}
& & \hspace*{3 pt} \psi \\
& \hspace*{0pt} \mbox{$\mathrm{C_o^1(k,n)}$: }\hspace*{20pt} \text{s.t.}
 &&   \chi_k^n \wp_k^n (\vect{\chi}_k)+(1-\chi_k^n)
 \geq \psi , \textbf{ } \forall k,n\\ 
 &
 \hspace*{0 pt}\mbox{$\mathrm{C_o^2(n)}$: } & & \sum_{k} \chi_k^n = 1, \textbf{ }  \forall n \\
    &
     \hspace*{0 pt}\mbox{$\mathrm{C_o^3(k)}$: } & & \sum_{n} \chi_k^n \leq \lceil N/K \rceil, \textbf{ }  \forall k \\
    &
\hspace*{0 pt}\mbox{$\mathrm{C_o^4(k)}$: } & &  \sum_{n}  \alpha_k^n \chi_k^n \leq \min \left\{ 1,\frac{\text{ITC}_k}{P_s^k g_s^k} \right\}, \: \forall k\\
    &
    \hspace*{0 pt}\mbox{$\mathrm{C_o^5(k)}$: } & &  \sum_{n} \beta_k^n \chi_k^n \leq \min \left\{ 1,\frac{\text{ITC}_k}{P_r^k g_r^k} \right\} ,\: \forall k\\
  &
\hspace*{0 pt}\mbox{$\mathrm{C_o^6(k,n)}$: } & & \vect{c} \in \mathbb{R}^3, \chi_k^n \in \{0,1\}, \\
  &
 & &  \psi, \lambda_k, \beta_k^n, \alpha_k^n  \in [0,1], \forall k, \forall n
\end{aligned}
\end{equation}
where the max-min objective is handled by setting the objective to an auxiliary variable $\psi$ and enforcing all coverage probabilities to be greater than or equal to $\psi$. Thus, the constraint in $\mathrm{C}_0^1$ is a standardized formulation of the following constraint 
\begin{equation}
\mathrm{C}_o^1=
    \begin{cases}
    \wp_k^n (\vect{\chi}_k) \geq \psi &, \text{ if } \chi_k^n=1, \\
    1-\wp_k^n (\vect{\chi}_k) \geq \psi &, \text{ if } \chi_k^n=0,
    \end{cases}
\end{equation}
where $\wp_k^n (\vect{\chi}_k)$ is zero if $\chi_k^n=0$. The constraint $\mathrm{C}_o^2$ allows an SU to join a single cluster at a time while the constraint $\mathrm{C}_o^3$ limits the cluster size to $\lceil N/K \rceil$. The constraint $\mathrm{C}_o^4$ dictates two fundamental constraints on the total SBS power allocation on PC$_k$: 1) The total power allocation cannot exceed the maximum permissible power on PC$_k$ $\left(\sum_{n}  \alpha_k^n \chi_k^n \leq 1 \right)$; and 2) The interference received by PU$_k$ cannot exceed the ITC threshold $\text{ITC}_k$ $\left( \sum_{n}  \alpha_k^n \chi_k^n \leq \frac{\text{ITC}_k}{P_s^k g_s^k} \right)$. Both power constraints also apply to the UAV as shown in $\mathrm{C}_o^5$. By setting $\psi$ to a constant value $\overline{\psi}$, $\vect{\mathrm{P}_\mathrm{o}}$ can be reduced to a feasibility problem, where a coverage probability of no less than $\overline{\psi}$ is guarantied for all SUs. Since $\vect{\mathrm{P}_\mathrm{o}}$ is a mixed integer non-linear programming problem (MINLP), it falls within the class of $\rm{NP}$-Hard problems. Since reaching the optimal solution takes impractically long times even for moderate size of networks, it is necessary to develop a fast yet efficient a sub-optimal approach, which is explained in the sequel. 

\begin{figure}[t]
	\centering
	\includegraphics[width=0.5\textwidth]{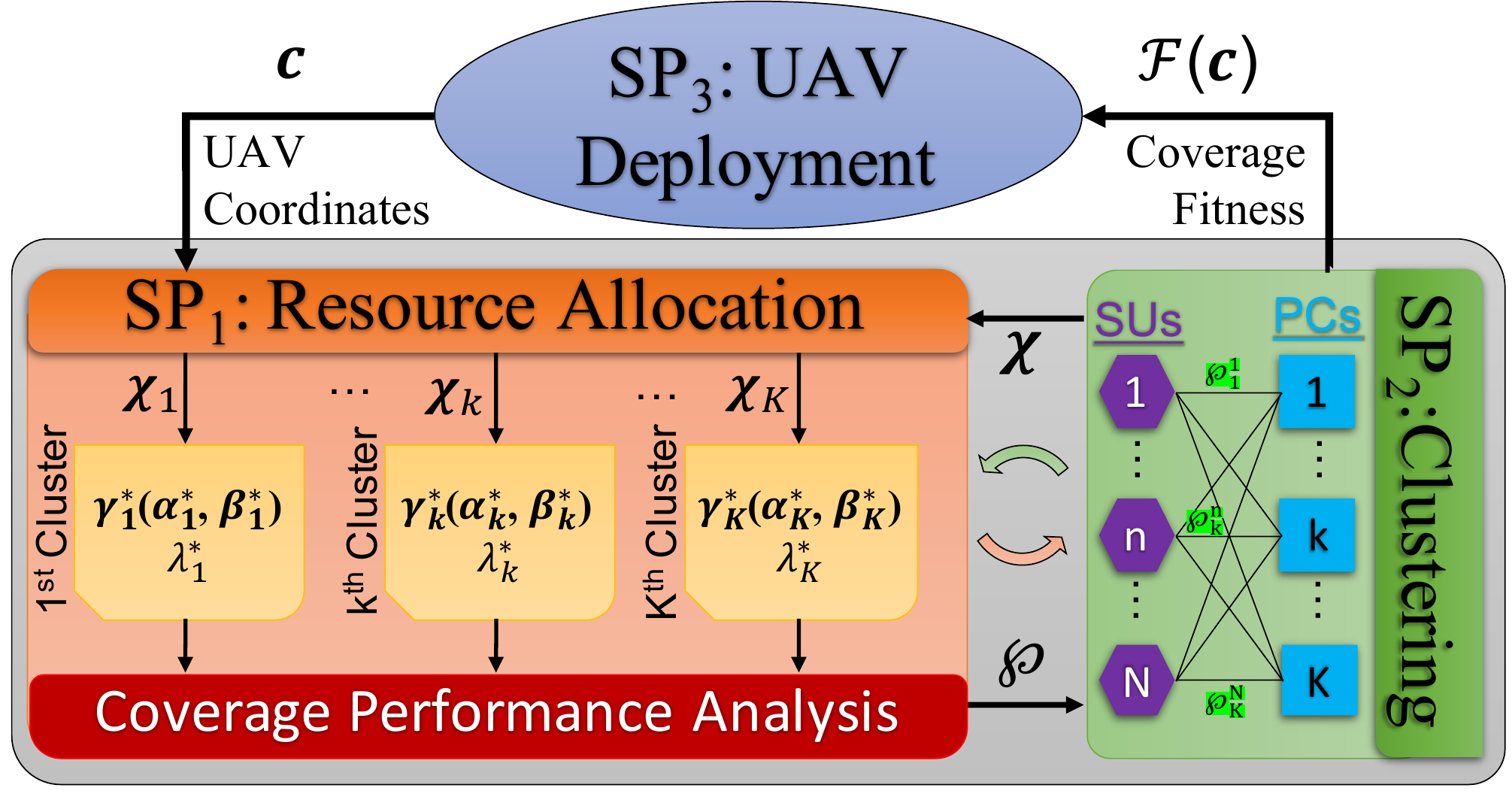}
	\caption{Schematic illustration of the solution methodology.} 
	\label{fig:sol_meth}
\end{figure}

\subsection{Solution Methodology}
As illustrated in Fig. \ref{fig:sol_meth}, $\vect{\rm{P_o}}$ can be decomposed into the following three sub-problems: 
\begin{enumerate}
    \item[$\vect{\mathrm{SP_1}}$] 
    Resource allocation is a joint power control and phase-time allocation sub-problem that takes the UAV location $\vect{c}$ and the assignment matrix $\vect{\chi}$ for granted. The $\vect{\mathrm{SP_1}}$ optimizes the power allocation factors of the broadcasting and relaying phases (i.e., $\accentset{\star}{\vect{\alpha}}_k$ and $\accentset{\star}{\vect{\beta}}_k$) to obtain the optimal SIDNRs of $\mathcal{C}_k$, $\accentset{\star}{\vect{\gamma}}_k\left(\accentset{\star}{\vect{\alpha}}_k,\accentset{\star}{\vect{\beta}}_k\right)$. Based on $\accentset{\star}{\vect{\gamma}}_k$, the phase-time allocation calculates the optimal  time allocations, $\accentset{\star}{\lambda}_k$, which yields the max-min fair data rate for $\mathcal{C}_k$, $\accentset{\star}{R}_k$. In Section \ref{sec:SP1}, we provide closed-form solutions for $\accentset{\star}{\vect{\gamma}}_k$, $\accentset{\star}{\lambda}_k$, $\accentset{\star}{\vect{\alpha}}_k$, and $\accentset{\star}{\vect{\beta}}_k$. The optimal max-min fair coverage probability of $\mathcal{C}_k$ is obtained by substituting $\accentset{\star}{\vect{\gamma}}_k$ and $\accentset{\star}{\lambda}_k$ into the closed-form cumulative distribution function (CDF) derived in Section \ref{sec:perform_analysis}. Resulting coverage probability matrix $\vect{\wp} \in \mathbb{R}^{K \times N}$ is then fed into the clustering sub-problem. 
    \item [$\vect{\mathrm{SP_2}}$]  
    For a given UAV location $\vect{c}$ and coverage probability matrix $\vect{\wp}$, $\vect{\mathrm{SP_2}}$ assigns SUs to clusters and clusters to PCs. Then, it returns the coverage fitness of the UAV location $\mathcal{F}(\vect{c})$ (i.e., the max-min SNR of the entire secondary network) to the deployment sub-problem. The proposed clustering algorithm is developed in Section \ref{sec:SP23}. 
    
    \item [$\vect{\mathrm{SP_3}}$] 
    Lastly, the UAV deployment problem leverages $\vect{\rm{SP_1}}$ and $\vect{\rm{SP_2}}$ to evaluate the coverage fitness of UAV locations. To do so, $\vect{\mathrm{SP_3}}$ feeds a UAV location $\vect{c}$ and get a coverage fitness feedback $\mathcal{F}(\vect{c})$ in return. In this way, the UAV deployment problem can alter the coordinates to find the best location that provides the max-min fair coverage to the entire hot-spot region. In Section \ref{sec:SP23}, Algorithm \ref{alg:SP123} presents the algorithmic implementation of overall solution methodology.
\end{enumerate}

\section{Coverage Performance Analysis of UAV-Assisted CCR-NOMA}
\label{sec:perform_analysis}
Without loss of generality, let us first focus our attention on $\mathcal{C}_k$/PC$_k$ to explain the relaying mechanisms and then analyze the coverage probability for each cluster member.   

\subsection{UAV-Assisted CCR-NOMA}
\subsubsection{The Broadcasting Phase}
In this phase, the SBS superposes the messages intended to $SU_n \in \mathcal{C}_k, \forall n$ and broadcasts the resultant signal to the UAV. The signal broadcast to the UAV is given by $x_k= \sum_{n \in \mathcal{C}_k} \sqrt{\alpha_k^n} s_n$, where $\alpha_k^n$ and $s_n$ is the power allocation factor of the first phase and the corresponding message intended for SU$_n$. By accounting for both CSI imperfections and hardware impairments, the UAV receives the broadcast signal as follows
\begin{align}
\label{signal at R}
y_{s,r}^{k} =& h_s^r\sqrt{\varrho_s^r} \left( x_k +  \phi^r_s \right)+ h^r_p \sqrt{\varrho_p^r} \left(s^{}_p + \phi^r_p\right) + n^{}_r,  
\end{align}
where $\varrho_i^r = P_i \ell_i^r / K$, $i \in \{s,p\}$, $\eta^{}_{(\cdot)} \sim \mathcal{CN}\left(0, \phi^{2}_{(\cdot)} \right)$ denotes the aggregate distortion noise from transceiver; $\phi^{}_{(\cdot)}$ = $\sqrt{\phi^{2}_{t}+\phi^{2}_{r}}$ is the aggregate hardware impairment (HI) level from the transmitter and receiver \cite{Stefania}, and $n^{}_{(\cdot)} \sim \mathcal{CN}\left(0, \sigma^{2}_{(\cdot)} \right)$ is the additive white Gaussian noise (AWGN) at each receive node. From  \eqref{signal at R}, the instantaneous signal-to-interference-distortion-noise-ratio (SIDNR) to decode the message $s_n$ at the UAV can be expressed by
\begin{align}
\label{eq:gamma_knsr}
\gamma_{k,n}^{s,r} =\frac{\vert \tilde{h}^r_s \vert^2 \alpha^n_k}{\vert \tilde{h}^r_s \vert^2 I_{k,n}^{s} + \vert\tilde{h}^r_s \vert^2 \sigma^2_{\phi^r_s}  +E^r_s  + \vert h^r_p \vert^2 I^r_p +   \bar{\sigma}^2_r },
\end{align} 
where the SIC interference is given by $I_{k,n}^{s}=  \sum_{j=n+1}^{C_k} \alpha^j_k$ and $ \sigma^2_{\phi^r_s}$ stands for the hardware distortion noise power whereas $E^r_s =\sigma^2_{e^r_s} \left(1+\sigma^2_{\phi^r_s} \right)$ is the power of the channel error and the interference received from the PBS is given by $I^r_p = \frac{\varrho_p^r}{\varrho_s^r}  \left(1+\sigma^2_{\phi^r_p}\right)$. Lastly, $\bar{\sigma}^2_r= \sigma^2_r/\varrho_s^r$ is the normalized thermal noise at the UAV receiver. 

\subsubsection{The Relaying Phase}
In the phase, the UAV relays the decoded information by broadcasting $\tilde{x}_k= \sum_{n \in \mathcal{C}_k} \sqrt{\beta_k^n} \tilde{s}_n$, where $\beta_k^n$ and $\tilde{s}_n$ are the power allocation factor of the second phase and message dedicated for SU$_n$, respectively. Hence, during the relaying phase, the user $n$ receives the following signal
 \begin{align}
 \label{signal at n}
 y^{k}_{r,n} =& h^n_r \sqrt{\varrho_r^n} \left( \tilde{x}^{}_k +  \phi^n_r \right) + h^n_p \sqrt{\varrho_p^n} \left(s^{}_p + \phi^r_p\right) + n^{}_n,  
 \end{align}
where $\varrho_i^n = P_i \ell_i^n$ / K, $i \in \{r,p\}$. From  \eqref{signal at n}, the instantaneous SIDNR to decode the message $s_n$ at SU$_n$ can be expressed by  
\begin{align}
\label{eq:gamma_knr}
\gamma_{k,n}^{r,n} =\frac{\vert \tilde{h}^n_r \vert^2 \beta_k^n}{\vert \tilde{h}^n_r \vert^2 I_{k,n}^{r} + \vert \tilde{h}^n_r \vert^2 \sigma^2_{\phi^n_r}  + E^n_r + \vert h^n_p \vert^2 I^n_p  + \bar{\sigma}^2_n },
\end{align}
where  
the SIC interference is denoted by $I_{k,n}^{r}= \sum_{i=b+1}^{n} \beta^i_k$ while $  \sigma^2_{\phi^n_r}$ stands for the power of the hardware distortion noise; $E^n_r =  \sigma^2_{e^n_r} \left(1+\sigma^2_{\phi^n_r} \right) $ is the power of the channel error; $I^n_p = \frac{\varrho_p^n}{\varrho_r^n}  \left(1+\sigma^2_{\phi^n_p} \right)  $ is the interference received from the PBS; and $\bar{\sigma}^2_n= \sigma^2_n/\varrho_r^n$ is the normalized thermal noise at the user $n$. 

\subsubsection{End-to-End SIDNRs and Data Rates}
Following from \eqref{signal at R} and \eqref{signal at n}, the end-to-end SIDNR of SU$_n$ within $\mathcal{C}_k$ is given by
\begin{equation}
\label{eq:e2e_snr}
\gamma_k^n= \min \left\{ \gamma_{k,n}^{s,r}, \gamma_{k,n}^{r,n} \right\}, \forall n \in \mathcal{C}_k,
\end{equation}
based on which the max-min SIDNR of $\mathcal{C}_k$ is given by $\gamma_k= \underset{n \in \mathcal{C}_k }{\min} \left(\gamma_k^n\right), \forall k$. Similarly, the end-to-end data rate of SU$_n$ within $\mathcal{C}_k$ is given by
\begin{align}
\label{eq:e2e_rate}
R_k^n = \min &\left\{ \lambda_k W  \log_2\left(1+\gamma_{k,n}^{s,r}\right),  (1-\lambda_k) W \log_2 \left(1+ \gamma_{k,n}^{r,n} \right) \right\},
\end{align}
based on which the max-min rate of $\mathcal{C}_k$ is given by $R_k= \underset{n \in \mathcal{C}_k }{\min} \left(R_k^n\right), \forall k$.

\subsection{Coverage Probability Analysis}
\label{sec:Coverage Analysis}
The coverage probability is defined as the likelihood of having an achievable rate no less than a predefined threshold $\bar{R}$. Thus, the coverage probability of SU$_n \in \mathcal{C}_k$ can be expressed by
\begin{align}
\label{Poutn}
 {\rm{Pr}} \left[ R_k^n \geq \bar{R} \right]
 &\stackrel{(a)}{=} {\rm{Pr}} \left[ \min \left\{ \lambda_k W  \log_2\left(1+\gamma_{k,n}^{s,r}\right),  (1-\lambda_k) W \log_2 \left(1+ \gamma_{k,n}^{r,n} \right) \right\} \geq \bar{R} \right] \nonumber \\
& \stackrel{(b)}{=} {\rm{Pr}} \left[\gamma_{k,n}^{s,r}  \geq  \bar{\gamma}_k^1 \right] \text{Pr}\left[\gamma_{k,n}^{r,n}  \geq \bar{\gamma}_k^2\right],
\end{align} 
where $(a)$ follows from \eqref{eq:e2e_rate}, $(b)$ follows from the assumption of independent broadcasting and relaying channels, $\bar{\gamma}_k^s \triangleq 2^{\frac{\bar{R}}{\lambda_k W}}-1$, and $\bar{\gamma}_k^r \triangleq 2^{\frac{\bar{R}}{(1-\lambda_k) W}}-1$. Next, we derive the closed-form coverage probabilities in the first and second phases, i.e.,  $\wp_{k,n}^{s,r}$ and  $\wp_{k,n}^{r,n}$, respectively.
\begin{lemma}
\label{lem:CDF}
    In the first and second phases, the coverage probability of SU$_n \in \mathcal{C}_k$ over LoS and/or NLoS Nakagami-$m$ fading channels are respectively given by 
	\begin{align}
	\label{coverage_prob_r}
{\rm{Pr}}\left[ \gamma_{k,n}^{s,r} \geq \bar{\gamma}_k^1 \right] &= 1-{\rm{Pr}}\left[ \gamma_{k,n}^{s,r} < \bar{\gamma}_k^1 \right]=  1 - F_{\gamma_{k,n}^{s,r}}(\bar{\gamma}_k^s),\\
{\rm{Pr}}\left[ \gamma_{k,n}^{r,n} \geq \bar{\gamma}_k^2 \right] &= 1-{\rm{Pr}}\left[ \gamma_{k,n}^{r,n} < \bar{\gamma}_k^2 \right]=  1 - F_{\gamma_{k,n}^{r,n}}(\bar{\gamma}_k^r),
	\end{align} 
where the CDF for both phases, $F_{\gamma_{k,n}^{i,j}}(\bar{\gamma}_k^i)$, with $i \in \{s,r\}$, $j \in \{r,n\}$, $i\neq j$,  is derived as follows
	\begin{align}
	\label{CDFsr_fin}
	F_{\gamma_{k,n}^{i,j}} (\bar{\gamma}_k^i) =&  \frac{\gamma_{inc}\left( {y_i^k}, {y_i^k} \Lambda_i \right)}{\Gamma({y_i^k})} - \frac{\gamma_{inc}\left( {y_i^k}, {y_i^k} \Lambda_i \right)}{\Gamma({y_i^k})} \frac{ \left({z_p^j} \right)^{{z_p^j}} \exp\left[-{x_i^j} \left(\mathcal{E}_i  + \mathcal{S}_i \right) \right] }{\Gamma({z_p^j})} 
	\sum_{q=0}^{{x_i^j}-1}
	\frac{\left({x_i^j} \right)^q}{q!} \sum_{l=0}^{q} \binom{q}{l} \nonumber\\
	& \times  \frac{ \left( \mathcal{E}_i  + \mathcal{S}_i \right)^q (\mathcal{I}_i)^{l} \Gamma({z_p^j} + l)}{\left({z_p^j} + {x_i^j} \mathcal{I}_i  \right)^{{z_p^j}+l} } + \frac{\Gamma\left( {y_i^k}, {y_i^k} \Lambda_i \right) }{\Gamma({y_i^k})} - \frac{ \left({z_p^j} \right)^{{z_p^j}}
 \exp\left[- \Lambda_i \left({y_i^k} + {x_i^j} \mathcal{V}_i  \right) \right]}{\Gamma({z_p^j}) }   \nonumber\\
	&\times \frac{\left({y_i^k} \right)^{{y_i^k}}
 \exp\left[-{x_i^j} \mathcal{E}_i\right] }{(\mathcal{U}_i)^{z_p^j} \, \Gamma({y_i^k})} \sum_{u=0}^{\jmath} \left( \mathcal{V}_i\right)^{\jmath-u}  \sum_{\Bbbk=0}^{{x_i^j}-1} \frac{\left
 ({x_i^j} \right)^{\Bbbk-{z_p^j} - u } }{\Bbbk !} \sum_{\jmath=0}^{\Bbbk} \binom{\Bbbk}{\jmath} \left(\mathcal{E}_i\right)^{\Bbbk-\jmath} \sum_{p=0}^{{y_i^k} + \jmath-1}   \frac{\left(\Lambda_i \right)^p }{p !}  \nonumber \\
	&\times \sum_{t=0}^{p} \binom{p}{t}  \left({y_i^k} + {x_i^j} \mathcal{V}_i  \right)^{p-u-\jmath - y_i^k - z_p^j}       \MeijerG*{2}{1}{1}{2}{1-({z_p^j} + u+t)}{0,~-{z_p^j} + {y_i^k}-t}{\frac{ {z_p^j} + {x_i^j} \Lambda_i  \mathcal{U}_i   }{{x_i^j}  \mathcal{U}_i \left({y_i^k} + {x_i^j}\mathcal{V}_i \right)^{-1} }}
	\end{align}
which reads the terms from Table \ref{tab:CDF_terms}.	
\end{lemma}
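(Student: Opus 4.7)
The plan is to derive the CDF for the broadcasting phase (the relaying-phase CDF follows by the symmetric substitution $(s,r)\to(r,n)$ in the constants and in Table~\ref{tab:CDF_terms}). Starting from \eqref{eq:gamma_knsr}, the event $\{\gamma_{k,n}^{s,r}<\bar\gamma_k^s\}$ is equivalent, after clearing the denominator, to
\[
|\tilde h_s^r|^2\bigl(\alpha_k^n-\bar\gamma_k^s(I_{k,n}^s+\sigma^2_{\phi_s^r})\bigr) < \bar\gamma_k^s\bigl(E_s^r+\bar\sigma_r^2+I_p^r |h_p^r|^2\bigr).
\]
First I would case-split on the sign of the bracket on the left: when it is non-positive the event is trivially true, and otherwise I divide through and rewrite the event as $X<\mathcal{E}_i+\mathcal{I}_i Y$, where $X=|\tilde h_s^r|^2$ and $Y=|h_p^r|^2$ are independent Gamma random variables with shape parameters $y_i^k$ and $z_p^j$. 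The constants $\Lambda_i$, $\mathcal{E}_i$, $\mathcal{I}_i$, $\mathcal{S}_i$, $\mathcal{V}_i$, $\mathcal{U}_i$, $x_i^j$ in Table~\ref{tab:CDF_terms} are just convenient groupings of the deterministic quantities that emerge from this rearrangement together with the Gamma rate factors.

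Next I would condition on $Y$, so that $F_{\gamma_{k,n}^{i,j}}(\bar\gamma_k^i)=\int_0^\infty F_X(\mathcal{E}_i+\mathcal{I}_i y)\,f_Y(y)\,dy$, and invoke the Gamma CDF in its finite-series form for integer shape, $F_X(x)=1-e^{-y_i^k x}\sum_{q=0}^{y_i^k-1}(y_i^k x)^q/q!$. The constant term ``$1$'' produces the leading $\gamma_{inc}(y_i^k,y_i^k\Lambda_i)/\Gamma(y_i^k)$ and $\Gamma(y_i^k,y_i^k\Lambda_i)/\Gamma(y_i^k)$ prefactors in \eqref{CDFsr_fin} after using the identity $\gamma_{inc}(a,x)+\Gamma(a,x)=\Gamma(a)$ to split the probability according to whether $X\lessgtr\Lambda_i$. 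Expanding $(\mathcal{E}_i+\mathcal{I}_i y)^q$ by the binomial theorem and interchanging summation with integration reduces the ``easy'' half to elementary moment integrals of the form $\int_0^\infty y^l e^{-cy}\,dy=\Gamma(l+1)/c^{l+1}$; this immediately delivers the block with $\Gamma(z_p^j+l)/(z_p^j+x_i^j\mathcal{I}_i)^{z_p^j+l}$ on the second line of \eqref{CDFsr_fin}.

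The genuinely harder contribution comes from the remaining integral, in which the primary-interference variable appears in the denominator of the desired-signal term (through $\mathcal{U}_i+\mathcal{V}_i y$) and not only in the exponent. The $y$-integral no longer collapses to elementary Gamma functions, and I would resort to the Mellin--Barnes representation, writing $(\mathcal{U}_i+\mathcal{V}_i y)^{-s}$ as a Meijer $G$-function via the standard Gradshteyn--Ryzhik identity, interchanging the contour with the $y$-integral, evaluating the resulting inner integral in closed form as a Gamma function, and finally repackaging the outer contour as $G^{2,1}_{1,2}$ to obtain the last term of \eqref{CDFsr_fin}. The nested binomial, factorial, and Gamma bookkeeping arising from the preceding series expansions reproduces the sums over $u,\Bbbk,\jmath,p,t$.

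The main obstacle is precisely this last step: correctly identifying the Meijer-$G$ parameters so that the nested sums arising from the series expansion match the contour representation, and verifying convergence of the contour integral (including that the arguments of all Gamma factors have the required real parts for pole separation). Once this is settled, assembling the four pieces and using $\gamma_{inc}(a,x)+\Gamma(a,x)=\Gamma(a)$ to regroup the constant ``$1$'' yields \eqref{CDFsr_fin}. Finally, because $h_s^r$ and $h_r^n$ are independent, the joint coverage probability \eqref{Poutn} factors as in \eqref{coverage_prob_r}, so establishing the single-hop CDF suffices for both phases.
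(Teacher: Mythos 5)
Your decomposition misses the central idea of the paper's derivation: the transmit power on PC$_k$ is itself random, because the underlay constraint caps it at $\mathrm{ITC}_k/Y_i^k$, where $Y_i^k=\vert g_i^k\vert^2$ is the fading channel from the secondary transmitter to the primary user. The paper's proof accordingly writes $F_{\gamma_{k,n}^{i,j}}(\bar\gamma_k^i)$ as a sum of two joint probabilities, $\Delta=\Pr[\,\cdot\,,\;Y_i^k<\Lambda_i]$ (full power feasible) and $\Upsilon=\Pr[\,\cdot\,,\;Y_i^k>\Lambda_i]$ (power backed off to $\mathrm{ITC}_k/Y_i^k$, which rescales every interference and noise term by $Y_i^k$ and is the origin of $\mathcal{U}_i$ and $\mathcal{V}_i$). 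Your setup contains only two random variables, $X=\vert\tilde h_s^r\vert^2$ and $\vert h_p^r\vert^2$, so this three-variable structure never appears: you attribute the prefactors $\gamma_{inc}(y_i^k,y_i^k\Lambda_i)/\Gamma(y_i^k)$ and $\Gamma(y_i^k,y_i^k\Lambda_i)/\Gamma(y_i^k)$ to a bookkeeping split on whether $X\lessgtr\Lambda_i$, whereas they are in fact $\Pr[Y_i^k<\Lambda_i]$ and the residue of integrating over $\{Y_i^k>\Lambda_i\}$. (You have also assigned $X$ the shape $y_i^k$; it should be $x_i^j$, with $y_i^k$ belonging to the ITC channel.) Without the case split on $Y_i^k$ there is no source for the entire second half of \eqref{CDFsr_fin}.

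Your treatment of the first block is essentially right: conditioning on the PBS interference, invoking the finite-series Gamma CDF for integer shape, binomially expanding, and evaluating elementary moment integrals does produce the $\Gamma(z_p^j+l)/(z_p^j+x_i^j\mathcal{I}_i)^{z_p^j+l}$ term. Your instinct to reach for a Mellin--Barnes/Meijer-$G$ representation for the hard part also matches the paper's tools, which write $\exp[-az]$ and $(1+bz)^{-c}$ as $G^{1,0}_{0,1}$ and $G^{1,1}_{1,1}$ and combine them via the Adamchik product formula into the $G^{2,1}_{1,2}$ of \eqref{CDFsr_fin}. But the integral to which that machinery must be applied is the one over $Z_p^j$ that remains after integrating $Y_i^k$ over $(\Lambda_i,\infty)$ in the $\Upsilon$ branch --- an integral your two-variable formulation never generates. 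As written, the proposal cannot reproduce the stated closed form.
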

\begin{proof}
	 See Appendix \ref{Appendix 1}.
\end{proof}
\begin{table}[t]
\centering
\caption{List of Symbols and Notations for Eqs. \eqref{CDFsr_fin} and \eqref{CDFsr}. }
\label{tab:CDF_terms}
\resizebox{1\textwidth}{!}{
	\begin{tabular}{|l|l|l|l|l|l|}
		\hline
		\multicolumn{3}{|l|}{$X_i^j = \vert \tilde{h}^j_i \vert^2$, $i \in \{s,r\}$, $j \in \{r,n\}$, $i\neq j$}                                                                                                       & \multicolumn{3}{l|}{Absolute power of estimated channel between nodes $i$ and $j$}                                                                                                                                           \\ \hline
		\multicolumn{3}{|l|}{$Y_i^k = \vert g^i_k \vert^2$, $i \in \{s,r\}$}                                                                                                                                           & \multicolumn{3}{l|}{Absolute power of interference channel from $i$ to $k$}                                                                                                                                                  \\ \hline
		\multicolumn{3}{|l|}{$Z_p^j = \vert h_p^j \vert^2$, $j \in \{r,n\}$}                                                                                                                                           & \multicolumn{3}{l|}{Absolute power of interference channel from PBS to $j$}                                                                                                                                                  \\ \hline
		\multicolumn{3}{|l|}{$\mathcal{A}_i = v^n_k - I_{k,n}^{i} \bar{\gamma}_k^i - \sigma^2_{\phi^j_i} \bar{\gamma}_k^i$}                                                                                            & \multicolumn{3}{l|}{where $v = \alpha$ for $i =s$ and $v = \beta$ for $i=r$}                                                                                                                                                 \\ \hline
		\multicolumn{3}{|l|}{$I^j_{\bar{p}} = \frac{\varrho_p^j}{\text{ITC}_k \ell_i^j} \left(1+\sigma^2_{\phi^j_p}\right)$}                                                                                  & \multicolumn{3}{l|}{Interference received from the PBS considering the ITC}                                                                                                                                                  \\ \hline
		\multicolumn{3}{|l|}{$\bar{\sigma}^2_{\bar{j}}= \frac{\bar{\sigma}^2_j}{\text{ITC}_k \ell_i^j}$}                                                                                                            & \multicolumn{3}{l|}{Normalized thermal noise at the UAV receiver considering the ITC}                                                                                                                                        \\ \hline
		\multicolumn{3}{|l|}{${x_i^j}$, ${y_i^k}$ and ${z_p^j}$}                                                                                                                                                 & \multicolumn{3}{l|}{Fading parameters of channels $X_i^j$, $Y_i^k$ and $Z_p^j$, respectively}                                                                                                                                \\ \hline
		\multicolumn{3}{|l|}{$\gamma_{inc}\left( \cdot \right)$}                                                                                                                                                       & \multicolumn{3}{l|}{Lower incomplete Gamma function}                                                  \\ \hline \hline
		~~$\mathcal{I}_i = \frac{I^j_p \bar{\gamma}_k^i}{\mathcal{A}_i}$~ & ~~$\mathcal{E}_i = \frac{E^j_i \bar{\gamma}_k^i}{\mathcal{A}_i}$~ & ~~$\mathcal{S}_i = \frac{\bar{\sigma}^2_j \bar{\gamma}_k^i}{\mathcal{A}_i}$~ & ~~~~~$\mathcal{V}_i = \frac{ \bar{\gamma}_k^i \bar{\sigma}^2_{\bar{j}} }{\mathcal{A}_i }$~~~~~ &~~~~~ $\mathcal{U}_i = \frac{ \bar{\gamma}_k^i I^j_{\bar{p}} }{\mathcal{A}_i }$~~~~~ & ~~~~~~$\Lambda_i = \frac{\text{ITC}_k}{\bar{P}_i^k}$~~~~~~ \\ \hline
	\end{tabular}
}
\end{table}

\section{Max-Min Fair Resource Allocation}
\label{sec:SP1}
The resource allocation problem consists of two joint sub-problems: power control and phase-time allocation. Since clusters have dedicated and independent power and time resources, they do not share any conflicting variables. This paves the way for the further decomposition of $\vect{\mathrm{SP_1}}$ into individual cluster resource allocation problems [c.f. Fig. \ref{fig:sol_meth}]. Each cluster can be further decoupled into power control and phase-time allocation sub-problems as they involve two distinct and independent resources. Therefore, in what follows, we focus our attention on a generic cluster $\mathcal{C}_k$ without loss of generality. 

\subsection{Power Control}
\label{sec:PA}
The coverage probability and data rate maximizations are two equivalent problems because the coverage probability is a monotonically increasing function of the data rate. Considering the complexity of the CDF function derived in Lemma \ref{lem:CDF}, alternating to the max-min rate problem is preferable for the sake of tractability. Noting that $\gamma_{k,n}^{s,r}$ and $\gamma_{k,n}^{r,n}$ are not functions of $\lambda_k$ [c.f. \eqref{eq:gamma_knsr} and \eqref{eq:gamma_knr}], the optimal power allocation that gives the maximum $\gamma_k^n$ (thus $R_k^n$) is independent of $\vect{\lambda_k}$, which justifies the motivation behind decoupling explained above. For a given UAV location and cluster set, the equivalent problem can be formulated as follows   \textbf{}
\begin{equation}
\label{CVX}
\hspace*{0pt}
 \begin{aligned}
 & \hspace*{0pt} \vect{\mathrm{SP}}_1^k (\mathcal{C}_k, \vect{c}): \underset{\vect{\alpha_k}, \vect{\beta_k}, \gamma_k}{\max}
& & \hspace*{3 pt} \gamma_k \\
& \hspace*{43pt} \mbox{$\mathrm{C_1^1}$: }\hspace*{15pt} \text{s.t.}
&&   \gamma_k^n \geq \gamma_k , \textbf{ } \forall  n \in  \mathcal{C}_k\\ 
 &
\hspace*{43pt}\mbox{$\mathrm{C_1^2}$: } & &  \sum_{n \in \mathcal{C}_k}  \alpha_k^n \leq \min \left\{ 1,\frac{\text{ITC}_k}{P_s^k g_s^k} \right\}\\
    &
    \hspace*{43pt}\mbox{$\mathrm{C_1^3}$: } & &  \sum_{n \in  \mathcal{C}_k} \beta_k^n  \leq \min \left\{ 1,\frac{\text{ITC}_k}{P_r^k g_r^k} \right\}\\
  &
\hspace*{43pt}\mbox{$\mathrm{C_1^4}$: } & & \beta_k^n, \alpha_k^n, \gamma_k  \in [0,1],
\end{aligned}
\end{equation} 
where $\gamma_k$ is an auxiliary variable similar to $\psi$ in $\vect{\mathrm{P_o}}$. $\vect{\mathrm{SP}}_1^k$ can be numerically solved by geometric programming \cite{boyd2007tutorial} via altering $\mathrm{C}_1^1$ into $1/\gamma_k^n \leq 1/\gamma_k$ to put inequality constraints in the form of posynomials, as in $\mathrm{C}_1^2$ and $\mathrm{C}_1^3$. Fortunately, $\vect{\mathrm{SP}}_1^k$ can also be solved analytically based on two key propositions: 

\begin{proposition}
\label{prop1}
Following from \eqref{eq:e2e_snr}, the constraint $\mathrm{C}_1^1$ can be expanded into two set of constraints: $\gamma_{k,n}^{s,r} \geq \gamma_k$ and $\gamma_{k,n}^{r,n} \geq \gamma_k, \: \forall k$. This intuitively dictates at the optimal point that the SIDNRs of both phases must be no less than the optimal SIDNR, $\accentset{\star}{\gamma}_k, \: \forall k$.
\end{proposition}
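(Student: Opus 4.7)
The plan is to establish the proposition as a direct consequence of the definition of $\gamma_k^n$ in \eqref{eq:e2e_snr} together with the elementary ordering property of the $\min$ operator. Specifically, I would begin by rewriting $\mathrm{C}_1^1$ using $\gamma_k^n = \min\{\gamma_{k,n}^{s,r}, \gamma_{k,n}^{r,n}\}$, so that the constraint becomes
\begin{equation}
\min\left\{\gamma_{k,n}^{s,r}, \gamma_{k,n}^{r,n}\right\} \geq \gamma_k, \quad \forall n \in \mathcal{C}_k.
\end{equation}
The first step of the argument is then to invoke the standard fact that for any real numbers $a,b,c$, one has $\min\{a,b\} \geq c$ if and only if $a \geq c$ and $b \geq c$ simultaneously. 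Applying this identity to each $n$ immediately splits $\mathrm{C}_1^1$ into the two families of constraints $\gamma_{k,n}^{s,r} \geq \gamma_k$ and $\gamma_{k,n}^{r,n} \geq \gamma_k$, which proves the first claim.

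For the second claim concerning the optimal point, I would proceed by a short contradiction argument. Let $(\accentset{\star}{\vect{\alpha}}_k, \accentset{\star}{\vect{\beta}}_k, \accentset{\star}{\gamma}_k)$ denote a maximizer of $\vect{\mathrm{SP}}_1^k$. Assume, for the sake of contradiction, that for some $n \in \mathcal{C}_k$ at least one of the phase SIDNRs, say $\gamma_{k,n}^{s,r}$, satisfies $\gamma_{k,n}^{s,r} < \accentset{\star}{\gamma}_k$ at the optimum. Then $\gamma_k^n = \min\{\gamma_{k,n}^{s,r}, \gamma_{k,n}^{r,n}\} \leq \gamma_{k,n}^{s,r} < \accentset{\star}{\gamma}_k$, which directly violates $\mathrm{C}_1^1$ and therefore contradicts optimality (and feasibility) of the candidate point. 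The same reasoning applies symmetrically to $\gamma_{k,n}^{r,n}$, so both per-phase SIDNRs must be at least $\accentset{\star}{\gamma}_k$ for every $n$.

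There is no substantive obstacle here; the result is essentially a tautological consequence of how the end-to-end SIDNR is defined in the decode-and-forward protocol. The value of stating it explicitly, however, is that it decouples the single constraint $\mathrm{C}_1^1$ on the composite quantity $\gamma_k^n$ into two independent constraints, one per hop, that involve disjoint decision variables ($\vect{\alpha}_k$ for the broadcasting phase and $\vect{\beta}_k$ for the relaying phase). This decoupling is precisely what enables the closed-form treatment of the per-phase power allocation subproblems developed in the remainder of Section \ref{sec:SP1}, so the proposition functions as a structural lemma rather than a computational one.
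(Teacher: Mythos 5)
Your proof is correct and follows essentially the same route as the paper, which presents Proposition \ref{prop1} as an immediate consequence of the definition $\gamma_k^n=\min\{\gamma_{k,n}^{s,r},\gamma_{k,n}^{r,n}\}$ in \eqref{eq:e2e_snr} together with the elementary equivalence $\min\{a,b\}\geq c \Leftrightarrow (a\geq c) \wedge (b\geq c)$; your contradiction argument for the optimal point is just an explicit spelling-out of the feasibility requirement the paper leaves implicit. (Note that the paper's Appendix A, despite its title, actually proves Lemma \ref{lem:CDF}, so there is no separate written proof of this proposition to diverge from.)
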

\begin{proposition}
\label{prop2}
At the optimal point ($\accentset{\star}{\vect{\alpha}}_k, \accentset{\star}{\vect{\beta}}_k, \accentset{\star}{\gamma}_k$), at least one of the SIDNR constraints must be active, i.e., $\gamma_{k,n}^{s,r} = \gamma_k \vee \gamma_{k,n}^{r,n} = \accentset{\star}{\gamma}_k, \: \exists  k$. That is, there is no unique solution and thus enforcing all SIDNRs to be equal to $\accentset{\star}{\gamma}_k$ is still optimal. Indeed, this equalization is especially preferable since it is also optimal in terms of the total energy consumption (i.e., $E_k= \sum_{n \in \mathcal{C}_k}  \alpha_k^n + \sum_{n \in \mathcal{C}_k}  \beta_k^n$). This is true due to the fact that feasible SIDNRs greater than $\accentset{\star}{\gamma}_k$ require higher power-allocation factors, and thus higher $E_k$. 
\end{proposition}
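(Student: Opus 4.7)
The plan is to establish the three assertions of the proposition sequentially: (i) at the optimum at least one SIDNR constraint is active, (ii) there always exists an equalized optimal allocation in which $\gamma_{k,n}^{s,r} = \gamma_{k,n}^{r,n} = \accentset{\star}{\gamma}_k$ for every $n \in \mathcal{C}_k$, which proves non-uniqueness, and (iii) this equalized allocation achieves the minimum total energy $E_k$ among all optima of $\vect{\mathrm{SP}}_1^k$.

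For (i), I would argue by contradiction. If every SIDNR were strictly greater than $\accentset{\star}{\gamma}_k$ at the optimum $(\accentset{\star}{\vect{\alpha}}_k, \accentset{\star}{\vect{\beta}}_k, \accentset{\star}{\gamma}_k)$, then by continuity of the SIDNR expressions \eqref{eq:gamma_knsr} and \eqref{eq:gamma_knr} in the power allocations, an $\epsilon > 0$ can be chosen so that $(\accentset{\star}{\vect{\alpha}}_k, \accentset{\star}{\vect{\beta}}_k, \accentset{\star}{\gamma}_k + \epsilon)$ remains feasible, contradicting the optimality of $\accentset{\star}{\gamma}_k$. Hence, at least one constraint of the type $\gamma_{k,n}^{s,r} \geq \gamma_k$ or $\gamma_{k,n}^{r,n} \geq \gamma_k$ must be tight.

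For (ii) and (iii), I would exploit two monotonicity properties visible in \eqref{eq:gamma_knsr}: first, $\gamma_{k,n}^{s,r}$ is strictly increasing in $\alpha_k^n$, which appears only in the numerator; second, the SIC interference $I_{k,n}^{s} = \sum_{j > n} \alpha_k^j$ excludes $\alpha_k^n$ itself and involves only later-decoded users. Starting from $\accentset{\star}{\vect{\alpha}}_k$ and processing users in reverse SIC order, for each user whose current $\gamma_{k,n}^{s,r}$ strictly exceeds $\accentset{\star}{\gamma}_k$, I reduce $\alpha_k^n$ continuously until the constraint becomes tight. This operation (a) decreases $\sum_{n} \alpha_k^n$, so $\mathrm{C_1^2}$ remains satisfied; (b) reduces the SIC interference seen by every earlier-decoded user $j < n$, which only \emph{increases} $\gamma_{k,j}^{s,r}$ and thus preserves feasibility; and (c) strictly lowers $E_k$ whenever the original SIDNR was slack. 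Applying the same procedure to $\accentset{\star}{\vect{\beta}}_k$ via \eqref{eq:gamma_knr} yields an equalized allocation $(\tilde{\vect{\alpha}}_k, \tilde{\vect{\beta}}_k)$ that is still feasible with objective value $\accentset{\star}{\gamma}_k$, hence optimal, and whose total energy is no larger than that of any non-equalized optimum, with strict inequality whenever equalization required a reduction.

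The main obstacle I anticipate is keeping the reverse-SIC induction internally consistent: each reduction step must neither violate the ITC constraint nor drive any other SIDNR below $\accentset{\star}{\gamma}_k$. The structure of \eqref{eq:gamma_knsr}---$\alpha_k^n$ appearing only in the numerator of user $n$'s own SIDNR and only in the denominators of users decoded earlier than $n$---makes these effects unambiguously favorable, so the induction closes. Because the broadcasting and relaying phases are fully decoupled in their power variables, the two constructions do not interact, and the equalization holds jointly in both phases. Together, (i)--(iii) yield the claim that enforcing $\gamma_{k,n}^{s,r} = \gamma_{k,n}^{r,n} = \accentset{\star}{\gamma}_k$ for all $n$ is both optimal and energy-minimal.
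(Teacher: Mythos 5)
Your argument is correct and follows essentially the same reasoning the paper relies on: the paper itself offers only the inline justification inside Proposition~\ref{prop2} (activeness by optimality, and the claim that larger SIDNRs cost more power), and your $\epsilon$-perturbation step plus the reverse-SIC monotone power-reduction induction is precisely the rigorous version of that sketch, exploiting that $\alpha_k^n$ enters only the numerator of $\gamma_{k,n}^{s,r}$ and only the denominators of earlier-decoded users. No gap; if anything, your write-up supplies the detail the paper asserts without proof.
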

Based on Propositions \ref{prop1} and \ref{prop2}, we have $2 C_k$ equalities from $\mathrm{C}_1^1$ and 2 inequalities from $\mathrm{C}_1^2-\mathrm{C}_1^3$. These are sufficient to find a closed-form primal solutions for  $2 C_k + 1$ variables in the sequel, which are given in the sequel.
\begin{lemma}
\label{lem:SP1}
The optimal first-phase power allocation factors for SU$_n \in \mathcal{C}_k$ is given by 
\begin{equation}
    \label{eq:alpha_star}
    \accentset{\star}{\alpha}_k^n=  I_r \accentset{\star}{\vect{\gamma}} \left(1+\accentset{\star}{\vect{\gamma}}    \right)^{N-n},
\end{equation}
where $I_r =   \sigma^2_{\phi^r_s}  + \frac{E^r_s  + \vert \tilde{h}^r_p \vert^2 I^r_p +\bar{\sigma}^2_r}{\vert\tilde{h}^r_s \vert^2} $ is the interference plus noise term of the SIDNR in the broadcasting phase. Further, by substituting $\accentset{\star}{\alpha}_k^n$ into \eqref{eq:gamma_knsr}, the optimal first-phase SIDNRs for SU$_n \in \mathcal{C}_k$ is given by 
\begin{equation}
    \label{eq:gamma_star_1}
    \accentset{\star}{\gamma}_{k,n}^{s,r}= \left( \frac{ \Phi_n}{I_r} +1\right)^{\frac{1}{N}} -1.
\end{equation}
The optimal second phase power allocation factors for SU$_n \in \mathcal{C}_k$ is given by 
\begin{equation}
    \label{eq:beta_star}
    \accentset{\star}{\beta}_k^n= I_n \accentset{\star}{\vect{\gamma}} + \accentset{\star}{\vect{\gamma}}^2  \sum_{j=n+1}^{N} I_j (1+\accentset{\star}{\vect{\gamma}})^{j-n-1},
\end{equation}
where $ I_n =  \sigma^2_{\phi^r_n}  + \frac{E^r_n  + \vert \tilde{h}^n_p \vert^2 I^n_p +   \bar{\sigma}^2_n}{\vert\tilde{h}^r_n \vert^2}$ is the interference plus noise term of the SIDNR in the relaying phase. Then, substituting $ \accentset{\star}{\beta}_k^n$ into \eqref{eq:gamma_knr}, the optimal second-phase SIDNRs for SU$_n \in \mathcal{C}_k$ is given by\footnote{Here, we present the derivation of optimal SIDNR for $N=2$ as the general optimal SIDNR equation becomes indefinable for $N>2$ due to the lack of a recognizable pattern. } 
\begin{equation}
    \label{eq:gamma_star_2}
    \accentset{\star}{\gamma}_{k,n}^{r,n}= 
    \accentset{\star}{\gamma}_2= \frac{ \sqrt{4 I_2 \Phi_2  + \left(I_1 + I_2 \right)^2 } - I_1 -I_2 }{2 I_2}.  
\end{equation}
\end{lemma}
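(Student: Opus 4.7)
The plan is to apply Propositions \ref{prop1} and \ref{prop2} to reduce $\vect{\mathrm{SP}}_1^k$ to a system of equalities that admits a closed-form recursive solution in each phase, and then to enforce the active power-budget constraint to pin down the remaining scalar $\accentset{\star}{\gamma}$.

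I would start with the broadcasting phase. Dividing numerator and denominator of \eqref{eq:gamma_knsr} by $\vert \tilde{h}^r_s\vert^2$ gives the compact representation $\gamma_{k,n}^{s,r} = \alpha_k^n / \bigl(\sum_{j=n+1}^{N}\alpha_k^j + I_r\bigr)$, where $I_r$ collects all signal-independent interference-plus-noise terms. Propositions \ref{prop1}--\ref{prop2} let us impose $\gamma_{k,n}^{s,r}=\accentset{\star}{\gamma}$ for every $n$, which yields the backward recurrence $\alpha_k^n = \accentset{\star}{\gamma}\bigl(I_r + \sum_{j=n+1}^{N}\alpha_k^j\bigr)$ with base case $\alpha_k^N = \accentset{\star}{\gamma}\,I_r$. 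A short backward induction on $n$ then establishes \eqref{eq:alpha_star}. To obtain \eqref{eq:gamma_star_1}, I would invoke Proposition \ref{prop2}, which also implies the budget constraint $\mathrm{C}_1^2$ must be tight at the optimum (otherwise $\accentset{\star}{\gamma}$ could be strictly increased). Substituting \eqref{eq:alpha_star} into the tight constraint collapses the sum into a geometric series, $I_r\,\accentset{\star}{\gamma}\sum_{n=1}^{N}(1+\accentset{\star}{\gamma})^{N-n} = I_r\bigl[(1+\accentset{\star}{\gamma})^N - 1\bigr] = \Phi$, which inverts directly to the stated expression.

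The relaying phase follows the same template, with one crucial difference: the normalized interference-plus-noise term $I_n$ now depends on $n$, because the channels $\tilde{h}^n_r,\tilde{h}^n_p$ and noise $\bar{\sigma}_n^2$ all vary across cluster members. Setting $\gamma_{k,n}^{r,n}=\accentset{\star}{\gamma}$ for every $n$ yields the analogous backward recurrence $\beta_k^n = \accentset{\star}{\gamma}\bigl(I_n + \sum_{j=n+1}^{N}\beta_k^j\bigr)$, whose solution via backward induction takes the weighted-sum form \eqref{eq:beta_star}, with each $I_j$ scaled by its own geometric factor $(1+\accentset{\star}{\gamma})^{j-n-1}$. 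Enforcing the tight constraint $\mathrm{C}_1^3$ then produces a polynomial equation of degree $N$ in $\accentset{\star}{\gamma}$ whose coefficients mix the distinct $I_n$. For $N=2$ this reduces to the quadratic $I_2\,\accentset{\star}{\gamma}^2 + (I_1+I_2)\,\accentset{\star}{\gamma} - \Phi_2 = 0$, whose unique positive root is \eqref{eq:gamma_star_2}.

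The main obstacle is really the bookkeeping for the phase-2 recurrence, since the heterogeneity of $I_n$ prevents the clean $\alpha$-style closed form from carrying through. This is precisely why the phase-2 SIDNR expression is stated only for $N=2$: for $N\ge 3$ the resulting polynomial is of degree $\ge 3$ with no symmetry among its coefficients, so no general closed-form root exists. Aside from this intrinsic limit, the only other care point is verifying that imposing equality in all $2N$ SIDNR constraints is optimal, which is exactly the content of Proposition \ref{prop2} together with its energy-efficiency observation.
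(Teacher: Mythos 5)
Your proposal is correct and follows essentially the same route as the paper: invoke Propositions \ref{prop1} and \ref{prop2} to equalize all SIDNRs, solve the resulting backward recurrences for $\vect{\alpha}$ and $\vect{\beta}$, and close the system with the power-budget constraints. The only presentational difference is that the paper works the $C=2$ case explicitly and generalizes by pattern-matching from Table \ref{tab:closed_form}, whereas you carry out the backward induction for general $N$ directly, which is arguably the cleaner way to justify \eqref{eq:alpha_star} and \eqref{eq:beta_star}. One small imprecision: since the end-to-end SIDNR is $\min\{\gamma_{k,n}^{s,r},\gamma_{k,n}^{r,n}\}$, only the bottleneck phase's budget constraint is actually tight at the optimum --- the paper finishes by setting $\accentset{\star}{\gamma}=\min(\accentset{\star}{\gamma}_1,\accentset{\star}{\gamma}_2)$ and substituting this common value into both recurrences --- so your claim that both $\mathrm{C}_1^2$ and $\mathrm{C}_1^3$ are tight is slightly too strong, though it does not affect the per-phase expressions \eqref{eq:gamma_star_1} and \eqref{eq:gamma_star_2} as stated.
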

\begin{proof}
Please see Appendix \ref{app:SP1}. 
\end{proof}

\begin{corollary}
\label{cor:gamma_star}
The end-to-end max-min SIDNR of SU$_n \in \mathcal{C}_k$, $\accentset{\star}{\gamma}_k^n$, can be obtained by substituting $\accentset{\star}{\gamma}_{k,n}^{s,r}$ and $\accentset{\star}{\gamma}_{k,n}^{r,n}$ into \eqref{eq:e2e_snr}, which yields $\accentset{\star}{\gamma}_k^n=\accentset{\star}{\gamma}_{k,n}^{s,r}=\accentset{\star}{\gamma}_{k,n}^{r,n}, \: \forall n \in \mathcal{C}_k$. Accordingly, the max-min SIDNR for $\mathcal{C}_k$ is given by $\accentset{\star}{\gamma}_k= \underset{n \in \mathcal{C}_k }{\min} \left(\accentset{\star}{\gamma}_k^n \right), \forall k$. 
\end{corollary}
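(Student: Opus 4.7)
The plan is to treat this corollary as a packaging step that assembles Propositions~\ref{prop1}--\ref{prop2} and Lemma~\ref{lem:SP1} into the definitional form of \eqref{eq:e2e_snr}. The key observation is that the closed-form allocations $\accentset{\star}{\alpha}_k^n$ and $\accentset{\star}{\beta}_k^n$ in Lemma~\ref{lem:SP1} were \emph{derived} by enforcing equality in all per-phase SIDNR constraints, as justified by Proposition~\ref{prop2}. Hence the per-phase optima $\accentset{\star}{\gamma}_{k,n}^{s,r}$ and $\accentset{\star}{\gamma}_{k,n}^{r,n}$ are, by construction, both equal to a common cluster target $\accentset{\star}{\gamma}_k$ for every $n\in\mathcal{C}_k$. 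This is exactly the unsubscripted $\accentset{\star}{\vect{\gamma}}$ that already appears inside the right-hand sides of \eqref{eq:alpha_star} and \eqref{eq:beta_star}.

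Given this equalization, the first assertion follows immediately from the definition of the end-to-end SIDNR. I would substitute the per-phase equalities into \eqref{eq:e2e_snr} to obtain
\begin{equation*}
\accentset{\star}{\gamma}_k^n \;=\; \min\!\left\{\accentset{\star}{\gamma}_{k,n}^{s,r},\,\accentset{\star}{\gamma}_{k,n}^{r,n}\right\} \;=\; \accentset{\star}{\gamma}_{k,n}^{s,r} \;=\; \accentset{\star}{\gamma}_{k,n}^{r,n},
\end{equation*}
so the outer $\min$ collapses trivially because its two arguments coincide. The cluster-level assertion $\accentset{\star}{\gamma}_k = \min_{n\in\mathcal{C}_k}\accentset{\star}{\gamma}_k^n$ is then merely the cluster-wise bottleneck identity already invoked when $\vect{\mathrm{P}_\mathrm{o}}$ was standardized through the auxiliary variable $\psi$; no further derivation is needed beyond taking a pointwise minimum over the users allocated to $\mathcal{C}_k$.

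The only genuine subtlety---and thus the step I would check most carefully---is verifying that the \emph{same} target $\accentset{\star}{\gamma}_k$ is shared across both phases simultaneously, and not merely within each phase separately. This is precisely what Proposition~\ref{prop2} guarantees: if the two per-phase targets differed, the larger one could be reduced with a strict decrease in total energy $E_k$ without weakening the outer $\min$ in \eqref{eq:e2e_snr}, contradicting the energy-optimality claim. Once this cross-phase equalization is in hand, the corollary is essentially definitional and requires no further computation; the hard analytical work has already been absorbed into Lemma~\ref{lem:SP1}.
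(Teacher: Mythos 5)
Your proposal is correct and follows essentially the same route as the paper, whose entire proof is the one-line citation of Proposition~\ref{prop2} and Lemma~\ref{lem:SP1}; you simply unpack that citation, correctly identifying that the equalization of per-phase SIDNRs to a common cluster target (enforced in the derivation of \eqref{eq:alpha_star} and \eqref{eq:beta_star} and justified by the energy-optimality argument of Proposition~\ref{prop2}) is what collapses the $\min$ in \eqref{eq:e2e_snr}. Your explicit check that the \emph{same} target is shared across both phases is a worthwhile elaboration the paper leaves implicit, but it is not a different argument.
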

\begin{proof}
This corollary directly follows from Proposition \ref{prop2} and Lemma \ref{lem:SP1}. 
\end{proof}

\subsection{Phase-Time Allocation}
\label{sec:TA}
For given SIDNRs, the phase-time allocation sub-problem $\underset{\lambda_k}{\max} (\wp_k^n)$ can be equivalently expressed as $\underset{\lambda_k}{\max}  (R_k^n)$, whose closed-form solution is provided as follows:
\begin{lemma}
\label{lem:SP1_lambda}
For given SIDNRs, the max-min fair phase-time allocation factor for $\mathcal{C}_k$ is given by 
\begin{equation} 
   \accentset{\star}{\lambda}_k\overset{(a)}{=}\frac{\log_2\left(1+\accentset{\star}{\gamma}_{k,n}^{r,n}\right)}{\log_2\left(1+\accentset{\star}{\gamma}_{k,n}^{s,n}\right)+\log_2\left(1+\accentset{\star}{\gamma}_{k,n}^{r,n}\right)}\overset{(b)}{=}\frac{1}{2}.
\end{equation}
\end{lemma}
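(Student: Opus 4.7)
The plan is to exploit the fact that the end-to-end rate $R_k^n$ in \eqref{eq:e2e_rate} is the minimum of two terms with opposite monotonicity in $\lambda_k$: the broadcasting-phase rate $\lambda_k W\log_2(1+\gamma_{k,n}^{s,r})$ is strictly increasing in $\lambda_k$, whereas the relaying-phase rate $(1-\lambda_k)W\log_2(1+\gamma_{k,n}^{r,n})$ is strictly decreasing. Hence $R_k^n$, viewed as a function of $\lambda_k\in[0,1]$, is a concave piecewise-linear function achieving its unique maximum where the two arguments intersect. I would formalize this by noting that on the interval left of the crossing the minimum equals the first (increasing) branch, and on the interval right of the crossing it equals the second (decreasing) branch, so the maximum of the minimum must coincide with the crossing point.

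To locate that crossing point explicitly, I would set
\[
\lambda_k W\log_2(1+\gamma_{k,n}^{s,r}) = (1-\lambda_k) W\log_2(1+\gamma_{k,n}^{r,n}),
\]
cancel $W$, and solve the resulting linear equation in $\lambda_k$. This directly yields equality $(a)$ after rearrangement. The step is legitimate because the power-control sub-problem in Section \ref{sec:PA} was solved independently of $\lambda_k$, so the per-phase SIDNRs $\accentset{\star}{\gamma}_{k,n}^{s,r}$ and $\accentset{\star}{\gamma}_{k,n}^{r,n}$ supplied by Lemma \ref{lem:SP1} can be plugged in without reopening the power-allocation problem.

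For step $(b)$, I would invoke Corollary \ref{cor:gamma_star}, which establishes that at the optimum the broadcasting-phase and relaying-phase SIDNRs are equal, $\accentset{\star}{\gamma}_{k,n}^{s,r}=\accentset{\star}{\gamma}_{k,n}^{r,n}$. Substituting this equality into $(a)$ causes the two logarithmic terms to be identical, so the fraction collapses to $1/2$. Because the cluster-level max-min SIDNR in Corollary \ref{cor:gamma_star} equalizes the per-user SIDNRs across $n\in\mathcal{C}_k$, the resulting $\accentset{\star}{\lambda}_k$ is consistent for every user in the cluster and is therefore well-defined at the cluster level.

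No substantive obstacle is expected: the argument rests entirely on the single-variable min-max observation and on the SIDNR-equalization property already established. The only minor verification needed is that the crossing point lies in $[0,1]$, which is immediate because both logarithmic terms are nonnegative under nonnegative SIDNRs, making the numerator and denominator in $(a)$ both nonnegative and the numerator no larger than the denominator.
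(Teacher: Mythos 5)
Your proposal is correct and follows essentially the same route as the paper: exploit the opposite monotonicity of the two phase rates in $\lambda_k$ to conclude the max-min is attained where they are equal (giving equality $(a)$), then invoke Corollary \ref{cor:gamma_star}'s SIDNR equalization to collapse the fraction to $1/2$ (giving equality $(b)$). Your added checks --- that the crossing point lies in $[0,1]$ and that the result is consistent across all $n\in\mathcal{C}_k$ --- are minor elaborations the paper leaves implicit.
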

\begin{proof}
For given SIDNR values, data rates of the first (second) phase increases (decreases) with increasing $\lambda_k$. Because of this inverse relation, the max-min coverage probability is achieved when both phases deliver the same data rate, which yields equality (a). Equality (b) follows from Corollary \ref{cor:gamma_star} which states $\accentset{\star}{\gamma}_k^n=\accentset{\star}{\gamma}_{k,n}^{s,r}=\accentset{\star}{\gamma}_{k,n}^{r,n}, \: \forall n \in \mathcal{C}_k$.  
\end{proof}

Accordingly, the max-min fair data rates and coverage probabilities can be obtained as described in the following corollaries.
\begin{corollary}
\label{cor:R_star}
The end-to-end max-min data rate of SU$_n \in \mathcal{C}_k$, $\accentset{\star}{R}_k^n$, can be obtained by substituting $\accentset{\star}{\lambda}_k$, $\accentset{\star}{\gamma}_{k,n}^{s,r}$, and $\accentset{\star}{\gamma}_{k,n}^{r,n}$ into \eqref{eq:e2e_rate}. Therefore, the max-min rate for $\mathcal{C}_k$ is given by $\accentset{\star}{R}_k= \underset{n \in \mathcal{C}_k }{\min} \left(\accentset{\star}{R}_k^n \right), \forall k$.
\end{corollary}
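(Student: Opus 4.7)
The plan is to establish Corollary \ref{cor:R_star} by direct substitution, leveraging the two results that precede it, namely the closed-form optimal SIDNRs from Lemma \ref{lem:SP1} (together with Corollary \ref{cor:gamma_star}) and the closed-form optimal phase-time allocation $\accentset{\star}{\lambda}_k=1/2$ from Lemma \ref{lem:SP1_lambda}. Because the end-to-end data rate defined in \eqref{eq:e2e_rate} is a deterministic function of $\lambda_k$, $\gamma_{k,n}^{s,r}$, and $\gamma_{k,n}^{r,n}$, once optimal values are fixed for each of these, the corresponding optimal rate is obtained by a pointwise evaluation.

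First, I would substitute $\accentset{\star}{\lambda}_k=1/2$ into \eqref{eq:e2e_rate} to obtain
\begin{equation*}
\accentset{\star}{R}_k^n=\min\left\{\tfrac{W}{2}\log_2\bigl(1+\accentset{\star}{\gamma}_{k,n}^{s,r}\bigr),\,\tfrac{W}{2}\log_2\bigl(1+\accentset{\star}{\gamma}_{k,n}^{r,n}\bigr)\right\}.
\end{equation*}
Next, I would invoke Corollary \ref{cor:gamma_star}, which asserts $\accentset{\star}{\gamma}_{k,n}^{s,r}=\accentset{\star}{\gamma}_{k,n}^{r,n}\triangleq\accentset{\star}{\gamma}_k^n$ for all $n\in\mathcal{C}_k$, so that the two arguments of the $\min$ coincide. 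The common value is therefore $\accentset{\star}{R}_k^n=(W/2)\log_2(1+\accentset{\star}{\gamma}_k^n)$, expressing the per-user end-to-end max-min rate in a single closed form using only the quantities derived in Section \ref{sec:SP1}.

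It then remains to take the per-cluster minimum. Since $\vect{\mathrm{SP}}_1^k$ is by construction a max-min problem over the cluster, and since $\log_2(1+\cdot)$ is strictly monotone, the argmin of the rate across $n\in\mathcal{C}_k$ agrees with the argmin of the SIDNR, so I would conclude
\begin{equation*}
\accentset{\star}{R}_k=\underset{n\in\mathcal{C}_k}{\min}\bigl(\accentset{\star}{R}_k^n\bigr)=\tfrac{W}{2}\log_2\bigl(1+\underset{n\in\mathcal{C}_k}{\min}\accentset{\star}{\gamma}_k^n\bigr)=\tfrac{W}{2}\log_2\bigl(1+\accentset{\star}{\gamma}_k\bigr),
\end{equation*}
which is consistent with the definition immediately following \eqref{eq:e2e_rate} and with Corollary \ref{cor:gamma_star}.

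There is no substantive obstacle in this argument; the claim is essentially a bookkeeping consequence of Lemmas \ref{lem:SP1} and \ref{lem:SP1_lambda}. The only point that requires a brief justification is why one is allowed to decouple the rate optimum into an SIDNR optimum followed by a time-split optimum. This is covered by the discussion at the start of Section \ref{sec:PA}, which observes that $\gamma_{k,n}^{s,r}$ and $\gamma_{k,n}^{r,n}$ do not depend on $\lambda_k$, together with the equivalence between maximizing the coverage probability and maximizing the rate through the monotonicity of the CDF in Lemma \ref{lem:CDF}. Thus the substitution does produce the true max-min optimum, which completes the proof.
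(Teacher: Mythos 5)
Your proposal is correct and follows essentially the same route as the paper, which simply cites Lemma \ref{lem:SP1} and Lemma \ref{lem:SP1_lambda} and leaves the substitution implicit; you have merely spelled out the bookkeeping, including the useful observation that $\accentset{\star}{R}_k^n=\tfrac{W}{2}\log_2\bigl(1+\accentset{\star}{\gamma}_k^n\bigr)$ and that the minimum commutes with $\log_2(1+\cdot)$ by monotonicity. No gaps.
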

\begin{proof}
This corollary directly follows from Lemma \ref{lem:SP1} and Lemma \ref{lem:SP1_lambda}.
\end{proof}
\begin{corollary}
\label{cor:P_star}
The max-min coverage probability of SU$_n \in \mathcal{C}_k$, $\accentset{\star}{\wp}_k^n$, can be obtained by substituting $\accentset{\star}{R}_k^n$ into \eqref{CDFsr_fin}. Therefore, the max-min coverage probability for $\mathcal{C}_k$ is given by $\accentset{\star}{\wp}_k= \underset{n \in \mathcal{C}_k }{\min} \left(\accentset{\star}{\wp}_k^n \right), \forall k$.
\end{corollary}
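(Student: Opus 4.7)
The plan is to leverage the fact that the coverage probability is a non-decreasing function of the achievable data rate, so that the max-min coverage optimization collapses to the max-min rate problem already solved in Lemmas \ref{lem:SP1} and \ref{lem:SP1_lambda}. First, I would recall from \eqref{Poutn} that the per-user coverage probability factors as
\begin{equation*}
\wp_k^n = \bigl(1 - F_{\gamma_{k,n}^{s,r}}(\bar{\gamma}_k^s)\bigr)\bigl(1 - F_{\gamma_{k,n}^{r,n}}(\bar{\gamma}_k^r)\bigr),
\end{equation*}
with thresholds $\bar{\gamma}_k^s$ and $\bar{\gamma}_k^r$ that are monotonically decreasing in the respective per-phase rates $\lambda_k W \log_2(1+\gamma_{k,n}^{s,r})$ and $(1-\lambda_k) W \log_2(1+\gamma_{k,n}^{r,n})$. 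Since each CDF is by construction non-decreasing in its argument, a smaller threshold means a larger tail probability, and hence $\wp_k^n$ inherits monotone dependence on $R_k^n$.

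Next, I would invoke Corollary \ref{cor:R_star} to propagate the max-min rate optimum into the CDF expression. Concretely, I would substitute $\accentset{\star}{\lambda}_k = 1/2$ from Lemma \ref{lem:SP1_lambda} together with the optimal SIDNRs $\accentset{\star}{\gamma}_{k,n}^{s,r}$ and $\accentset{\star}{\gamma}_{k,n}^{r,n}$ from Lemma \ref{lem:SP1} into the expressions for $\bar{\gamma}_k^s$ and $\bar{\gamma}_k^r$, and then evaluate the closed-form CDF in \eqref{CDFsr_fin} at these thresholds. This immediately yields the per-user optimal coverage probability $\accentset{\star}{\wp}_k^n$. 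The cluster-level max-min coverage probability then follows by the same min-across-users bookkeeping already used in Corollaries \ref{cor:gamma_star} and \ref{cor:R_star}, since the cluster's worst user dictates the fairness metric.

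The main obstacle here is essentially bookkeeping rather than mathematics: I would need to verify that the SIDNR equalization established by Proposition \ref{prop2} is preserved under the monotone coverage map, so that no single user's coverage probability becomes a spurious bottleneck distinct from the rate bottleneck. Because $\accentset{\star}{\gamma}_k^n = \accentset{\star}{\gamma}_{k,n}^{s,r} = \accentset{\star}{\gamma}_{k,n}^{r,n}$ by Corollary \ref{cor:gamma_star} and the CDF in \eqref{CDFsr_fin} depends on user-specific channel and interference parameters only through the terms catalogued in Table \ref{tab:CDF_terms}, the $\min$ operator ultimately selects the user with the weakest effective channel, which closes the argument without any further calculation.
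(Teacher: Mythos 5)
Your proposal is correct and follows essentially the same route as the paper, whose proof is simply the observation that the result follows from Lemma \ref{lem:CDF} together with Corollary \ref{cor:R_star}: substitute the optimal rate (equivalently, the optimal SIDNRs and $\accentset{\star}{\lambda}_k=1/2$) into the closed-form CDF and take the minimum over cluster members. Your added monotonicity justification mirrors the paper's own remark in Section \ref{sec:SP1} that coverage probability is a monotonically increasing function of the data rate, so nothing is missing.
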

\begin{proof}
This corollary directly follows from Lemma \ref{lem:CDF} and Corollary \ref{cor:R_star}.
\end{proof}

\section{User Clustering, Channel Assignment, and UAV Deployment}
\label{sec:SP23}
In this section, we first present the proposed user clustering and channel asssignment approach, then provide the algorithmic implementation deployment and overall solution methodology. 

\subsection{Clustering: An Iterative Bottleneck Assignment Approach}
The user clustering and channel assignment sub-problem can be formulated as follows
\begin{equation}
\hspace*{0pt}
 \hspace*{0pt} \vect{\mathrm{SP}}_2 \left(\vect{c}\right): \underset{\vect{\chi}}{\max}
 \hspace*{3 pt} \underset{\forall k, \forall n}{\min} \left[\wp_k^n (\vect{\chi}_k) \chi_k^n \right] \text{ s.t. } \mathrm{C_o^{2}},\: \mathrm{C_o^{3}},\: \mathrm{C_o^{6}},
\end{equation}
which is an MINLP problem. Notice that the mixed nature is caused by the term $\wp_k^n (\vect{\chi}_k)$ that varies with the cluster size and cluster member identities. The special case of $\lceil N/K\rceil=1$ reduces the objective to $\underset{\vect{\chi}}{\max} \left(\underset{\forall k, \forall n}{\min} \left\{P_k^n \chi_k^n\right\}\right)$, where $P_k^n$ is the max-min coverage probability of $\mathcal{C}_k$ if SU$_n$ is admitted to $\mathcal{C}_k$. Unlike the cost metric $\wp_k^n (\vect{\chi}_k)$, $P_k^n$ is just a constant rather than being a function of $\vect{\chi}$ as clusters can admit only one SU at a time. Thus, setting $\lceil N/K\rceil=1$ converts clustering problem into an integer linear programming (ILP) SU-PC assignment problem. An alternative solution approach to $\vect{\mathrm{SP}}_2$ is iteratively running $\lceil N/K\rceil$ ILP problems such that each iteration adds an extra member to clusters until all SUs are assigned to a PC.
\begin{proposition}
The coverage performance of a cluster is determined by its members, not by the order of member admissions to the cluster set, i.e., $\min\{x,y\}=\min\{y,x\}$. Since the current ILP iteration optimally admits new cluster members based on the cluster sets formed optimally in the previous iterations, ILP iterations are expected to yield an optimal MINLP solution at the very end.   
\end{proposition}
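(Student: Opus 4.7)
The plan is to prove the proposition by induction on the iteration index $t \in \{1, \ldots, \lceil N/K \rceil\}$, coupled with an exchange argument grounded in the order-invariance asserted in the first sentence.

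First, I would formalize order-invariance. The max-min coverage $\accentset{\star}{\wp}_k$ of cluster $\mathcal{C}_k$ depends only on the set $\{n : \chi_k^n = 1\}$, not on the historical order in which members were admitted. This follows from Corollary \ref{cor:P_star}: the optimal power allocation factors \eqref{eq:alpha_star}--\eqref{eq:beta_star} index users by their SIC decoding order (which is determined by channel quality, not admission time), and the optimal $\accentset{\star}{\lambda}_k=1/2$ from Lemma \ref{lem:SP1_lambda} is likewise order-independent. Hence any permutation of cluster membership insertions yields the same $\accentset{\star}{\wp}_k$, which is precisely the $\min\{x,y\}=\min\{y,x\}$ observation stated informally in the proposition.

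Next, I would set up the inductive hypothesis: after iteration $t$, the partial assignment $\vect{\chi}^{(t)}$ produced by iteratively solving ILPs maximizes $\min_{k, n : \chi_k^n=1} \accentset{\star}{\wp}_k^n$ over every feasible assignment whose clusters each contain at most $t$ members. The base case $t=1$ reduces exactly to the constant-cost linear bottleneck assignment problem, because $P_k^n$ is a scalar rather than a function of $\vect{\chi}$, and the ILP returns its global optimum. For the inductive step, I would argue by contradiction: suppose the iterative ILP yields $\vect{\chi}^{(t+1)}_{\mathrm{it}}$ while some feasible $\vect{\chi}^{(t+1)}_{\mathrm{opt}}$ attains a strictly larger max-min coverage. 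Invoking order-invariance, I would relabel the admissions of $\vect{\chi}^{(t+1)}_{\mathrm{opt}}$ into a sequence whose first $t$ rounds form a valid stage-$t$ partial assignment. The induction hypothesis bounds those rounds above by $\vect{\chi}^{(t)}_{\mathrm{it}}$, and since iteration $t+1$ of the ILP solves its bottleneck problem to optimality conditional on $\vect{\chi}^{(t)}_{\mathrm{it}}$, the extension $\vect{\chi}^{(t+1)}_{\mathrm{it}}$ must match or exceed the terminal extension of $\vect{\chi}^{(t+1)}_{\mathrm{opt}}$, contradicting strict improvement.

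The hard part will be justifying the relabeling step rigorously, because the stage-$t$ bottleneck objective inspects only the $t$ users admitted so far in each cluster, while order-invariance was established for the \emph{final} cluster-level objective. A matroid-style exchange lemma will likely be required: specifically, for any feasible $\vect{\chi}^{(t+1)}_{\mathrm{opt}}$ one must select a single user to withhold from each cluster such that the remaining stage-$t$ assignment is itself feasible and has a partial max-min objective no smaller than the final value of $\vect{\chi}^{(t+1)}_{\mathrm{opt}}$. If this can be established, the induction closes cleanly; if not, the proposition is best read as the heuristic statement the authors' softer phrasing ``expected to yield'' already suggests, with the empirical $100\%$ accuracy reported in the abstract serving as substantive validation.
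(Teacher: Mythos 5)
The first thing to note is that the paper offers no proof of this proposition at all: it is stated as a self-justifying heuristic, with the hedged phrase ``expected to yield an optimal MINLP solution'' carrying the claim, and the only substantiation is empirical (the reported $100\%$ agreement with the ILP benchmark in the numerical results). So your attempt to supply an actual induction goes beyond what the paper does, and your instinct to isolate the exchange/relabeling step as the crux is exactly right.

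That said, the proof as sketched does not close, and the gap is not merely technical. Two distinct obstacles sit inside the inductive step. First, the exchange property you need --- that from any optimal $(t{+}1)$-member assignment one can withhold one user per cluster so that the residual $t$-member assignment is feasible, is dominated by the greedy's stage-$t$ solution, and is still extendable to the optimal value --- is not a consequence of $\min\{x,y\}=\min\{y,x\}$. Order-invariance says the final cluster coverage depends only on the final membership; it says nothing about whether the greedy's locked-in stage-$t$ clusters can be completed to the globally optimal partition. Since $P_k^n(i)$ depends on the entire hypothetical cluster $\mathcal{C}_k(i-1)\cup\{n\}$ through the SIC interference terms and the closed-form allocations of Lemma \ref{lem:SP1}, the cost of admitting $n$ at stage $t{+}1$ is conditioned on the choices made at stages $1,\dots,t$, and a stage-$t$ bottleneck optimum can be a dead end for the stage-$(t{+}1)$ problem; greedy stage-wise optimality does not compose to global optimality for nested bottleneck problems without additional structure that is never established here. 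Second, even your inductive hypothesis overstates what the algorithm guarantees: each iteration optimizes the bottleneck of $P_k^n(i)$ computed for $i$-member clusters, whereas the final objective is evaluated on $\lceil N/K\rceil$-member clusters whose per-user coverages differ from the intermediate ones, because adding a member changes the power split for everyone already in the cluster. So for all $i<I$ the greedy is optimal against the wrong objective. Your fallback reading --- that the proposition is a heuristic whose validation is empirical --- is in fact the position the paper itself implicitly takes, and absent a proof of the exchange lemma it is the only defensible one.
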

The mathematical representation of this ILP SU-PC assignment problem is also known as LBA problem \cite{BurkardAssignProb} and can be expressed as in line \ref{line:GBAP_formula} of Algorithm \ref{alg:SP123}. Indeed, LBA is the max-min version of the well-known min-sum (i.e., $\sum_{k,n}P_k^n \chi_k^n$) generalized assignment problem (GAP). In light of the above discussions, we present the proposed clustering solution between line \ref{line:SP2_begin} and line \ref{line:SP2_end} of Algorithm \ref{alg:SP123}. Lines \ref{line:init_p} \& \ref{line:init_I} initialize cluster coverage probabilities, cluster sets, assignment variables, and number of iterations, respectively. At each and every iteration of the most outer loop between lines \ref{line:loop_I_begin} \& \ref{line:loop_I_end}, the outer loop between lines \ref{line:loop_K_begin} \& \ref{line:loop_K_end} and inner loop between lines \ref{line:loop_N_begin} \& \ref{line:loop_N_end} generate the cost matrix $\vect{P}(i)$ as follows: If SU$_n$ is already a member of any cluster, line \ref{line:Pkn1_set} forbids its admission to $\mathcal{C}_k$ by setting $P_k^n$ to zero. Else if SU$_n$ is not a member of any cluster, line \ref{line:Pkn2_set} hypothetically admits SU$_n$ into $\mathcal{C}_k$ and evaluate the new cluster probability as explained in Section \ref{sec:SP1}, which is then set to $P_k^n$. Line \ref{line:call_GBAP} executes SU-PC assignment by calling the LBA between lines \ref{line:GBAP_begin} \& \ref{line:GBAP_end}. Then, clusters and their coverage probabilities are updated in line \ref{line:update_C} and line \ref{line:update_p}, respectively. Once the most outer loop is terminated, overall max-min fair coverage probability and assignment matrix are set in line \ref{line:set_p} and line \ref{line:set_chi}, respectively. 

 \begin{algorithm}[t!]
\footnotesize
 \caption{\textbf{: Deployment-Clustering-Resource Allocation}}
  \label{alg:SP123}
\begin{algorithmic}[1]
 \renewcommand{\algorithmicrequire}{\textbf{Input:}}
 \renewcommand{\algorithmicensure}{\textbf{Output:}}
\State \hspace{-15pt} \textbf{Input:} Environmental parameters
\State $\vect{c} \gets $ Initialize the UAV location \label{line:init_c}
\State $\mathcal{F} \gets \rm{SP_2}(\vect{c}(0))$ // Evaluate the initial location
\State $\accentset{\star}{\vect{c}} \gets \vect{c}$ // Set the best location
\State $\accentset{\star}{\mathcal{F}} \gets  \mathcal{F}$ // Set the best coverage fitness
\For{t=1:T}
\State $\vect{c}(t) \gets$ Randomly pick a neighbor location
\State $\mathcal{F}(t) \gets \rm{SP_2}(\vect{c}(t))$  // Evaluate the neighbor location
\State $\rm{Temp}\gets \textsc{CalculateTemperature}(t,T)$ 
\If {$\mathcal{F}(t) \geq \mathcal{F}(t-1)$} 
\State $\mathcal{F} \gets \mathcal{F}(t)$
\If{$\mathcal{F}(t) \geq \accentset{\star}{\mathcal{F}}$}
\State $\accentset{\star}{\mathcal{F}} \gets \mathcal{F}(t)$
\EndIf
\ElsIf{$\exp\left\{\frac{\mathcal{F}(t)-\mathcal{F}}{\rm{Temp}}\right\}>\rm{rand}$}
\State $\mathcal{F} \gets \mathcal{F}(t)$
\EndIf
\EndFor
\\
\Return $\accentset{\star}{\vect{c}}$, $\accentset{\star}{\mathcal{F}}$
\vspace{2pt}
\hrule 
\vspace{2pt}
\Procedure{$\rm{SP_2}$}{$\vect{c}$} \label{line:SP2_begin}
\State \hspace{-15pt} \textbf{Input:} $\vect{c}$
\State $\mathfrak{p}_k(0) \gets 1$ // Initialize the cluster coverage probabilities, $\forall k$. \label{line:init_p}
\State $\mathcal{C}_k(0) \gets \emptyset$ // Initialize clusters sets, $\forall k$. \label{line:init_C}
\State $\chi_k^n(0) \gets 0$ // Initialize SU-Cluster assignments, $\forall k, \forall n$. \label{line:init_chi}
\State $I \gets \lceil N/K \rceil$ \label{line:init_I}
\For{i=1:I} \label{line:loop_I_begin}
\For{k=1:K} // Generate the $K \times N$ cost matrix, $\vect{P}$. \label{line:loop_K_begin}
\For{n=1:N} \label{line:loop_N_begin}
\If{$\chi_k^n(i-1)=1, \: \exists k$}
\State $P_k^n(i) \gets 0$ \label{line:Pkn1_set}
\ElsIf{$\chi_k^n(i-1)=0, \: \forall k$} \label{line:Pkn2}
\State $P_k^n(i) \gets \rm{SP_1} \left(\mathcal{C}_k(i-1) \cup n , \vect{c} \right)$ \label{line:Pkn2_set}
\EndIf
\EndFor \label{line:loop_K_end}
\EndFor \label{line:loop_N_end}
\State $\vect{\chi}(i) \gets $\textsc{Linear Bottleneck Assignment}($\vect{P}(i)$) \label{line:call_GBAP}
\State $\mathcal{C}_k(i) \gets \{n \vert \chi_k^n(i)=1, \forall n\}$ Update clusters, $\forall k$. \label{line:update_C}
\State $\mathfrak{p}_k(i) \gets \min \{P_k^n(i)\chi_k^n(i)\vert \chi_k^n(i)=1, \forall n\}$ Update cov. prob. \label{line:update_p}
\EndFor \label{line:loop_I_end}
\State $\accentset{\star}{\mathfrak{p}} \gets \underset{\forall k}{\min} \left\{\mathfrak{p}_k\left(I \right)\right\}$ \label{line:set_p}
\State $\accentset{\star}{\vect{\chi}} \gets \vect{\chi}(I)$ \label{line:set_chi} \\
\Return $\accentset{\star}{\mathfrak{p}}$, $\accentset{\star}{\vect{\chi}}$ \label{line:return_p_chi}
\EndProcedure \label{line:SP2_end}
\Procedure{Linear Bottleneck Assignment}{$\vect{P}$} \label{line:GBAP_begin}
\State \hspace{-15pt} \textbf{Input:} $\vect{P}$
\State $\accentset{\star}{\vect{\chi}} \gets \underset{\vect{\chi}}{\max} \left(\underset{\forall k, \forall n}{\min} \left\{P_k^n \chi_k^n\right\}\right) \text{s.t.} \sum_n \chi_k^n = 1, \sum_k \chi_k^n = 1$ \label{line:GBAP_formula}
\Return $\accentset{\star}{\vect{\chi}}$ 
\EndProcedure \label{line:GBAP_end}
 \end{algorithmic}
 \end{algorithm}

One way of solving the LBA is using the threshold approach\footnote{Another way of solving LBA is using augmenting path method that mimics renown Hungarian algorithm, whose complexity is given by $\mathcal{O}\left(K N \sqrt{M \log M}\right)$ \cite{gabow-tarjan}.},  which has a complexity of $\mathcal{O}\left(M^{2.5}/\sqrt{\log_2 M}\right)$ for an $M \times M$ cost matrix \cite[Theorem 6.4]{BurkardAssignProb}. Denoting $I=\lceil N/K \rceil$ and $M=\max\{K,N\}$, the overall time complexity of the proposed clustering approach is given by 
\begin{equation}
\mathcal{O}\left(I \left[\frac{K \times N \times M^{2.5}}{\sqrt{\log_2 M}}\right]\right) \overset{(K=N)}{\approx} \mathcal{O}\left(\frac{N^{4.5}}{\sqrt{\log_2 N}}\right).
\end{equation}
where $K \times N$ is the complexity of generating the cost matrix. In Section \ref{sec:res}, numerical results show that a commercial personal computer can execute the threshold approach based proposed clustering around $100$ milliseconds for $K=N=250$. 
\subsection{UAV Deployment: An Overall Orchestration}
The UAV location is the most conflicting variable that has a significant impact on the system performance due to its direct relationships with path loss, channel gains, LoS probability of signal and interference link budgets. Even for a given clustering and resource allocation setting, the deployment problem is a highly non-convex problem. From the clustering and resource allocation point of view, a change in the UAV location is seen as a change in environmental parameters. In this regard, the UAV deployment problem can be solved by meta-heuristic methods that runs a global search of UAV locations and evaluate the location fitness by proposed clustering and resource allocation procedure. There exist powerful meta-heuristic methods such as simulated annealing, particle swarm optimization, and genetic algorithm. Even though we investigate these three methods in the numerical results section, we only provide the algorithmic implementation details of the simulated annealing approach in Algorithm \ref{alg:SP123}.   
\section{Numerical Results}
\label{sec:res}
In this section, we present numerical results to validate the analytical expressions and disclose impacts of different system parameters and scenarios on the proposed system model. The default system parameters, if it is not stated otherwise, are listed in Table \ref{tab:sys_parameters}. In order to provide a deeper insight into the entangled relations between resource allocation and UAV deployment, we first focus on a single cluster performance based on the network setup shown in Fig. \ref{fig:anal_setup}.
\begin{table}[t]
	\centering
	\caption{Default system parameters.}
	\label{tab:sys_parameters}
	\resizebox{1\textwidth}{!}{
\begin{tabular}{|c|c|c|c|c|c|c|c|}
	\hline
	Parameter   & Value   & Parameter                                             & Value         & Parameter                                       & Value              & Parameter  & Value    \\ \hline \hline
	$W$     & 180 kHz & $\{a_i, b_i\}$                                    & $\{7, 0.2\}$  & $\{P_s, P_p, P_r\}$                         & \{46, 46, 30\} dBm & $m = x_i^j = y_i^k = z_p^j $ & $2$        \\ \hline
	$R$     & 500 m   & $\{a_n$, $b_n\}$                                       & \{13, 0.22\}       & $\{\eta^{\text{LoS}}, \eta^{\text{NLoS}}\}$ & \{1.6, 20\} dB     & $d$    & 0-1000 m \\ \hline
	$R^{'}$ & 100 m   & $\{H_p, H_s, H_n\}$ & $\{20,20,0\}$ & $f_c$                                       & 1.8 GHz            & $H_r$  & 0-1000 m  \\ \hline
\end{tabular}
}
\end{table}
 \begin{figure}[t]
 \centering
 \includegraphics[width=0.7\linewidth]{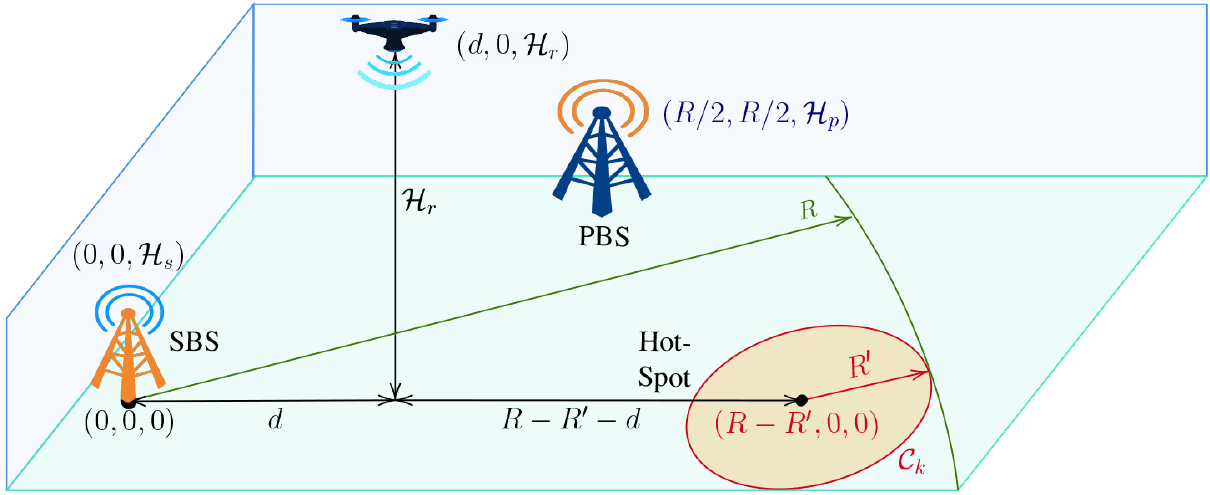}
\caption{Network setup for verification of derivations.}
\label{fig:anal_setup}
\end{figure}

\subsection{Validation of the Closed-Form Power Allocations}

\begin{figure}[t]
	\begin{subfigure}{.5\textwidth}
		\centering
		\includegraphics[width=1\linewidth]{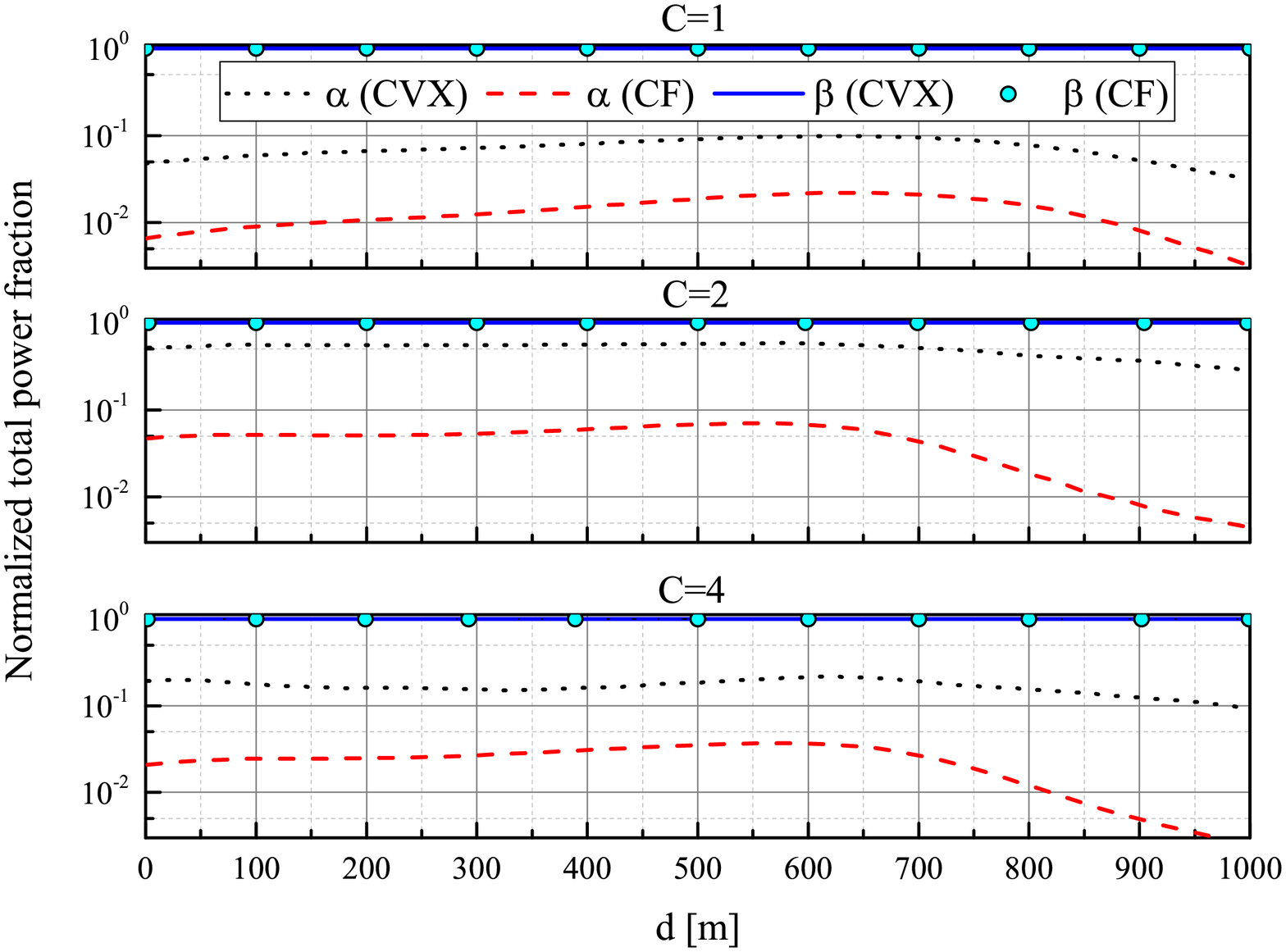}  
		\caption{}
		\label{fig:alpha_beta_distance}
	\end{subfigure}
	\begin{subfigure}{.5\textwidth}
		\centering
		\includegraphics[width=1\linewidth]{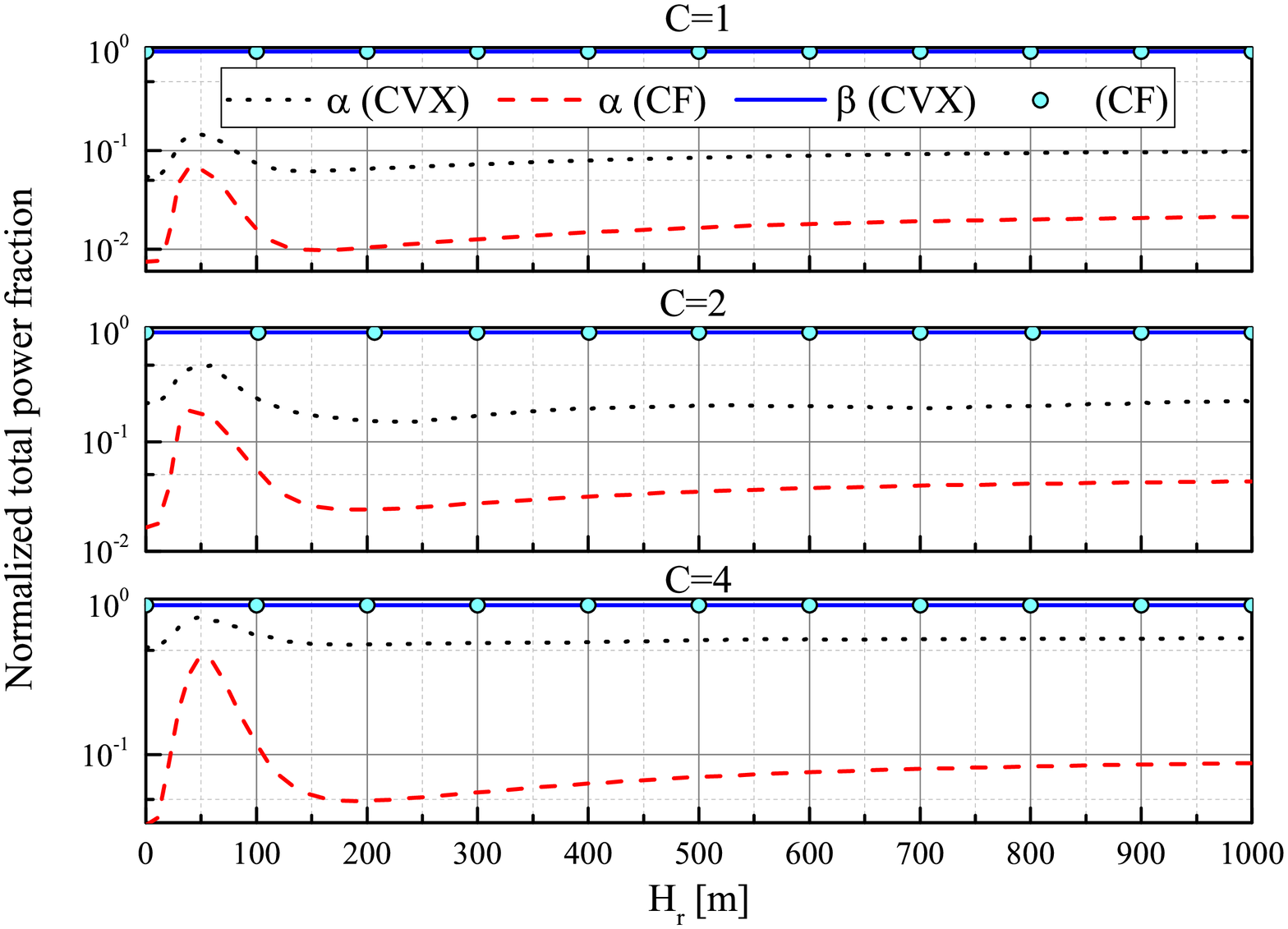}  
		\caption{}
		\label{fig:alpha_beta_height}
	\end{subfigure}
	\caption{Optimal $\alpha$ and $\beta$: a) for various distance  with the fixed $H_r = 250$ m; b) for various height with the fixed $d = R/2$ m.}
	\label{fig:Normalized_PAs}
\end{figure}

In Fig. \ref{fig:Normalized_PAs}, we show optimal power fractions for max-min throughput found using geometric programming (CVX~\cite{cvx}) in \eqref{CVX} and derived analytically in a closed-form (CF) in Lemma \ref{lem:SP1}. We consider four secondary NOMA users with the following coordinates: $\{ d^1_{\text{UAV}}; d^2_{\text{UAV}}; d^3_{\text{UAV}}; d^4_{\text{UAV}} \}$ = $\{(500,400,0); (450,400,0); (350,400,0); (300,400,0)\}$.  Fig. \ref{fig:alpha_beta_distance} illustrates the results for normalized optimal $\alpha$ and $\beta$ considering various $d$ locations of the UAV and different cluster sizes, $C$. For all cluster sizes and $d$ regions, it is noticed that the values of $\beta$ for both CVX and CF are equal to $1$. This is due to the fact that the access link has lower transmission power comparing to the backhaul link and the access link always provides minimum throughput for the considered system setup. On the other hand, the values of $\alpha$ are less than $1$ as the SBS decreases its transmission power to adopt the throughput of the backhaul link to that of the access link. Furthermore, it worth mentioning that the values of $\alpha$ in the CF are much lower than those in the CVX. It means that the CF solution provides a much power-efficient solution than the CVX. For example, for $C=1$ and $d =500$ m, $\alpha$ for the CF equals to $0.02$, while $\alpha$ for the CVX is $0.09$, which means that the CF solution provides more than $4$ times efficient power usage than the CVX method. Moreover, the CF method provides better performance in terms of computation time than the CVX as the CF approach does not use iteration in finding the optimal SINDR. For example, when $C=2$, the elapsed time for CF solution to find the optimal power allocation factors is $0.05$ sec, while the CVX method spends $230$ sec for the same purpose. Fig. \ref{fig:alpha_beta_height} plots optimal power fractions for different height ($H_r$) of the UAV. It is observed that in some $H_r$ regions $\alpha$ for both CVX and CF show a high increase comparing with other height regions. This happens due to the spatial expectation of the path loss and the LoS shown in \eqref{eq:spatial_PL} and \eqref{LoS_i_r}, respectively. The probability of LoS or NLoS is strongly dependent on the elevation angle. For example, at $10-80$ m height when the elevation angle between the SBS and the UAV is low, the effect of NLoS is stronger for the SBS-UAV link, as a result, the SBS needs to increase its transmission power to meet the end-to-end max-min throughput.
\begin{figure}[t]
	\begin{subfigure}{.5\textwidth}
		\centering
		\includegraphics[width=1\linewidth]{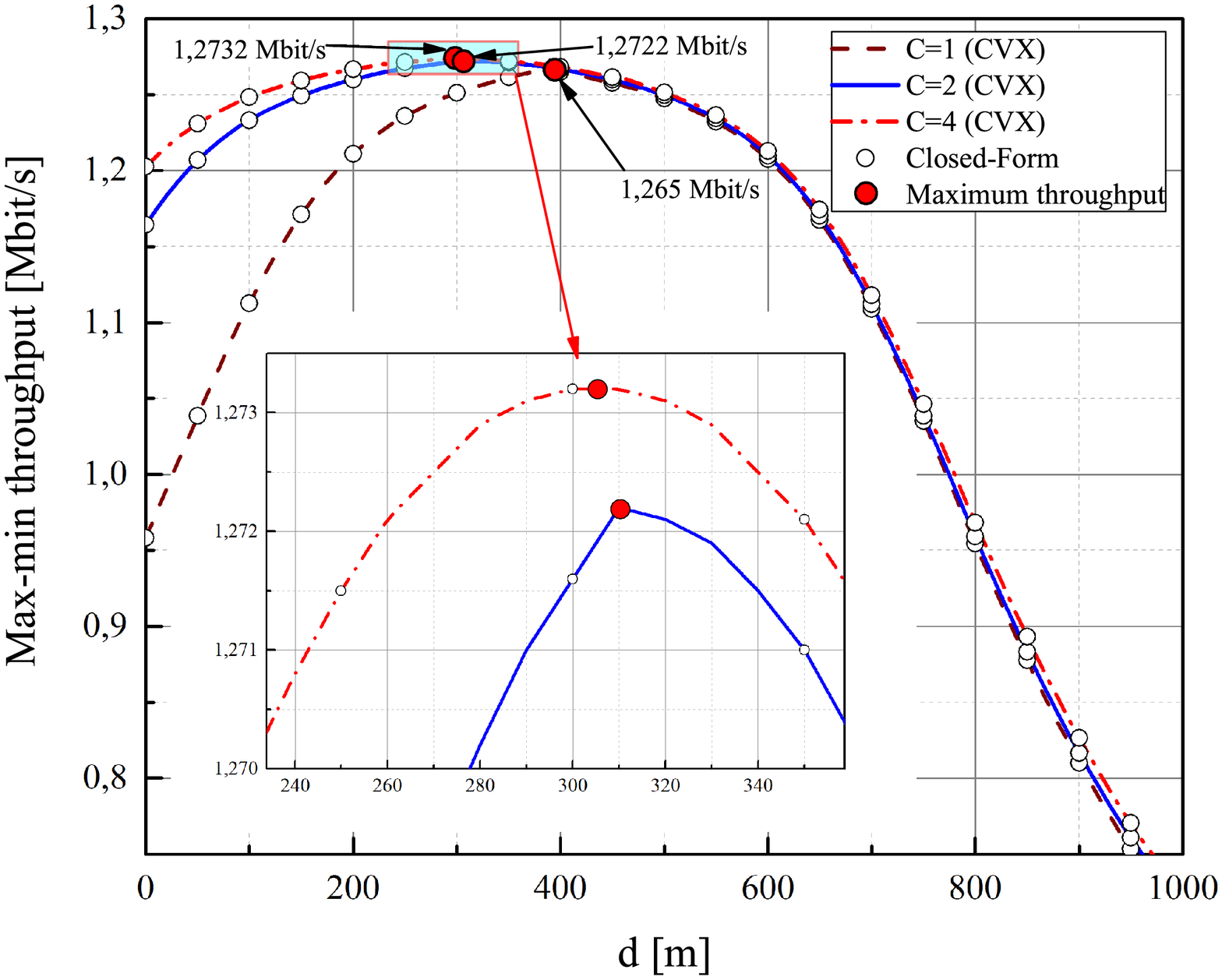}  
		\caption{For various distance with fixed $H_r = 250$ m.}
		\label{fig:optimal_gamma_distance}
	\end{subfigure}
	\begin{subfigure}{.5\textwidth}
		\centering
		\includegraphics[width=1\linewidth]{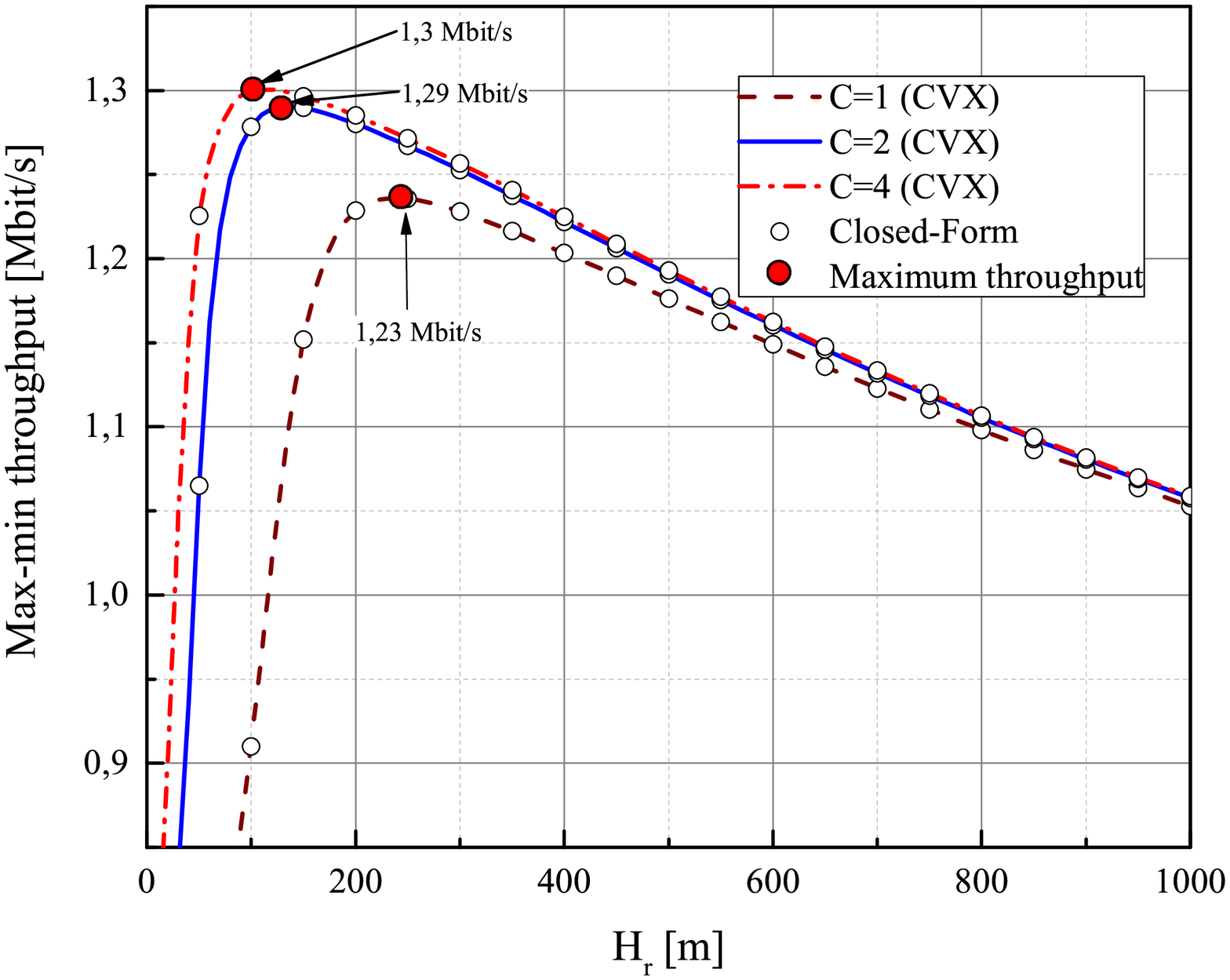}  
		\caption{For various height with fixed $d = R/2$ m.}
		\label{fig:optimal_gamma_height}
	\end{subfigure}
	\caption{Max-min throughput performance for clusters with different sizes.}
	\label{fig:max_min_thr_d_hr}
\end{figure}

Fig. \ref{fig:max_min_thr_d_hr} plots max-min throughputs for clusters with different sizes obtained using optimal power allocation factors shown in Fig. \ref{fig:Normalized_PAs}. It is noticed from the plot that throughputs obtained from the CVX and the CF are matched to each other. Fig. \ref{fig:optimal_gamma_distance} illustrates max-min throughput performance over various $d$ values. The figure shows that, when $C=1$, which is considered as an OMA system, the overall max-min throughput of the clusters equals to $1.265$ Mbit/s and $d = 400$ m, which is the center of the hot-spot. Furthermore, when the system model turns into NOMA mode with $C=2$ and $C=4$, the max-min throughput increases by showing $1.2722$ and $1.2732$ Mbit/s, respectively. In Fig. \ref{fig:optimal_gamma_height}, the max-min throughput is plotted versus different height of the UAV. Similarly, as in the previous sub-plot, the OMA mode obtains the lowest throughput with $1.23$ Mbit/s. The NOMA mode with $C=2$ obtains $1.29$ Mbit/s, while $C=4$ gains the throughput of $1.3$ Mbit/s. It is obvious that the throughput improvement from  $C=1$ to $C=2$ is $0.06$ Mbit/s, whereas that from $C=2$ to $C=4$ is only $0.01$ Mbit/s. This can be explained by the fact that the rise of the cluster size increases interference from NOMA users inside the cluster, which affects negatively on the achievable throughput. Thus, the higher cluster size results in less throughput improvement.

\begin{figure}[t]
	\begin{subfigure}{.5\textwidth}
		\centering
		\includegraphics[width=1\linewidth]{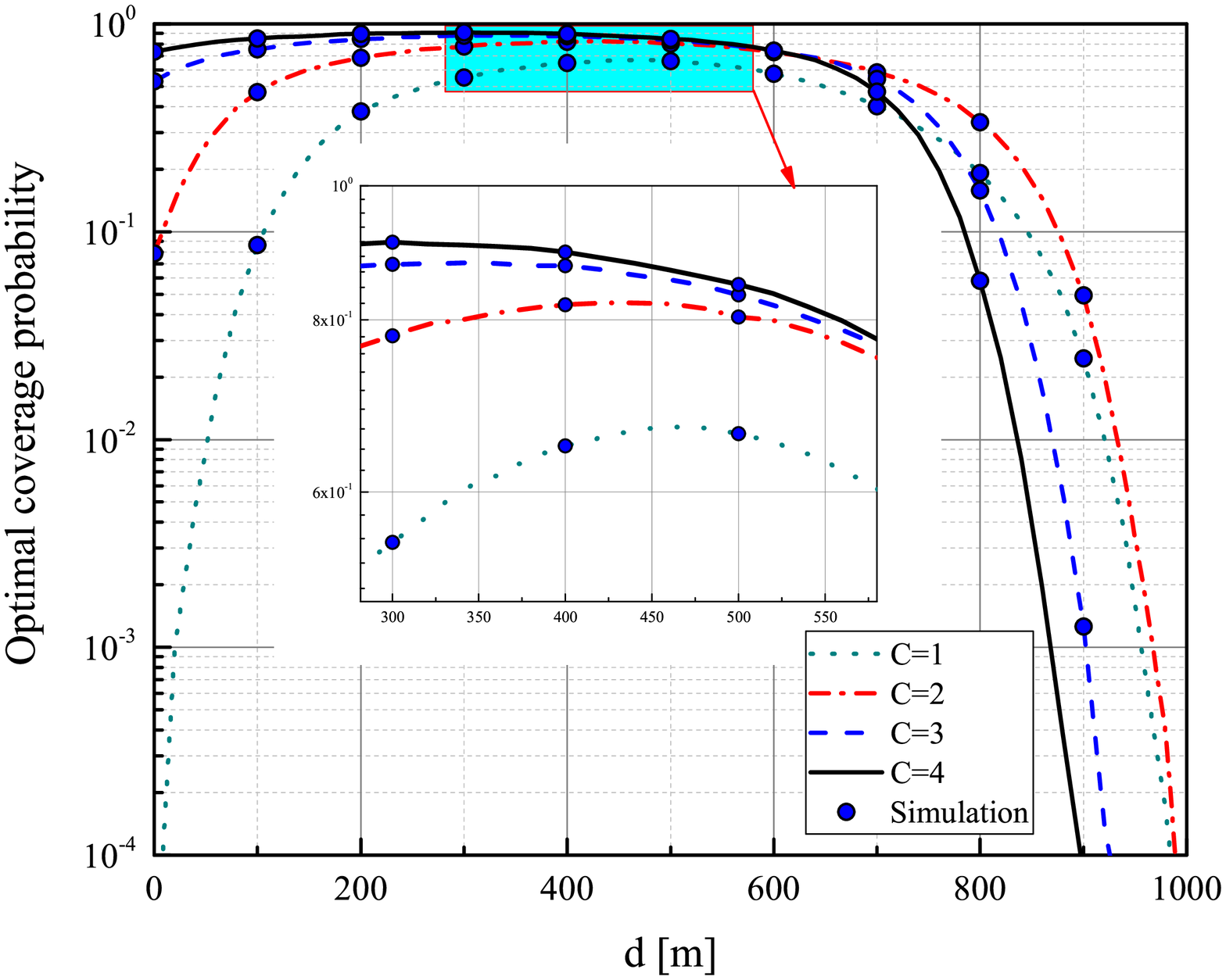}  
		\caption{For various distance with fixed $H_r = 250$ m.}
		\label{fig:optimal_cov_prob_d}
	\end{subfigure}
	\begin{subfigure}{.5\textwidth}
		\centering
		\includegraphics[width=1\linewidth]{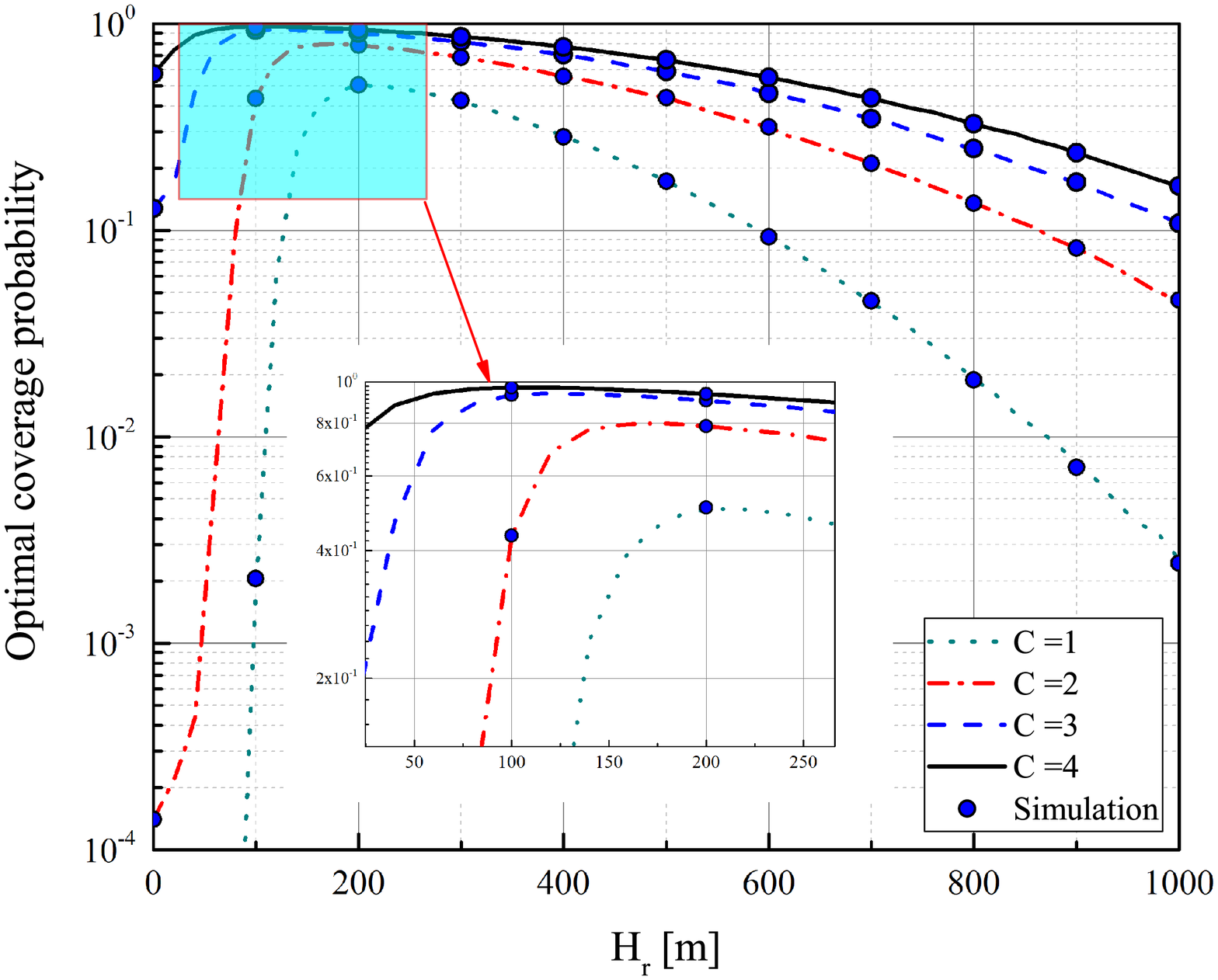}  
		\caption{For various height with fixed $d = R/2$ m.}
		\label{fig:optimal_cov_prob_Hr}
	\end{subfigure}
	\caption{Optimal coverage probability for clusters with different sizes.}
	\label{fig:Optimal coverage probability}
\end{figure}

\subsection{Validation of the Coverage Performance Analysis}

Fig. \ref{fig:Optimal coverage probability} demonstrates optimal coverage probability for various $d$ and $H_r$ values considering different cluster sizes with $\bar{R} = 1.3$ Mbit/s. It can be seen from Fig. \ref{fig:optimal_cov_prob_d} that maximum optimal coverage probability for all clusters are at about  the center of the hot-spot. The overall observation is that the coverage performance of $C=4$ outperforms that of thr other clusters. However, after $700$ m, we can see that the performance of $C=4$ degrades comparing to the other clusters. This phenomenon can be explained as follows. As it was explained in Fig. \ref{fig:Normalized_PAs}, the secondary NOMA users are located on the $x$-axis within hotspot radius, where $d^1_{\text{UAV}}$ is the weakest channel and $d^4_{\text{UAV}}$ is the strongest one. When the location of the UAV is on the right side of the hot-spot circle, SU$_1$ becomes the strongest user and SU$_4$ will be the weakest one. However, regarding the system setup, the stronger user, i.e., SU$_1$, has higher power allocation fractions and the weaker users have lower allocated powers due to this reason the optimal coverage performance of higher cluster sizes degrades for the considered UAV locations. In Fig. \ref{fig:optimal_cov_prob_Hr}, we show how the location of the UAV in different heights with a given $d = R/2$ can impact on the coverage performance of the secondary users. As it is expected, $C=4$ obtains the best coverage performance comparing to other cluster sizes by achieving the maximum coverage at $120$ m. Moreover, it is noticed that when the cluster size is lower the UAV height for obtaining the optimal coverage is higher. For example, when $C=1$, the UAV height for the maximum coverage is $200$ m.

\begin{figure}[t]
	\begin{subfigure}{.5\textwidth}
		\centering
		\includegraphics[width=1\linewidth]{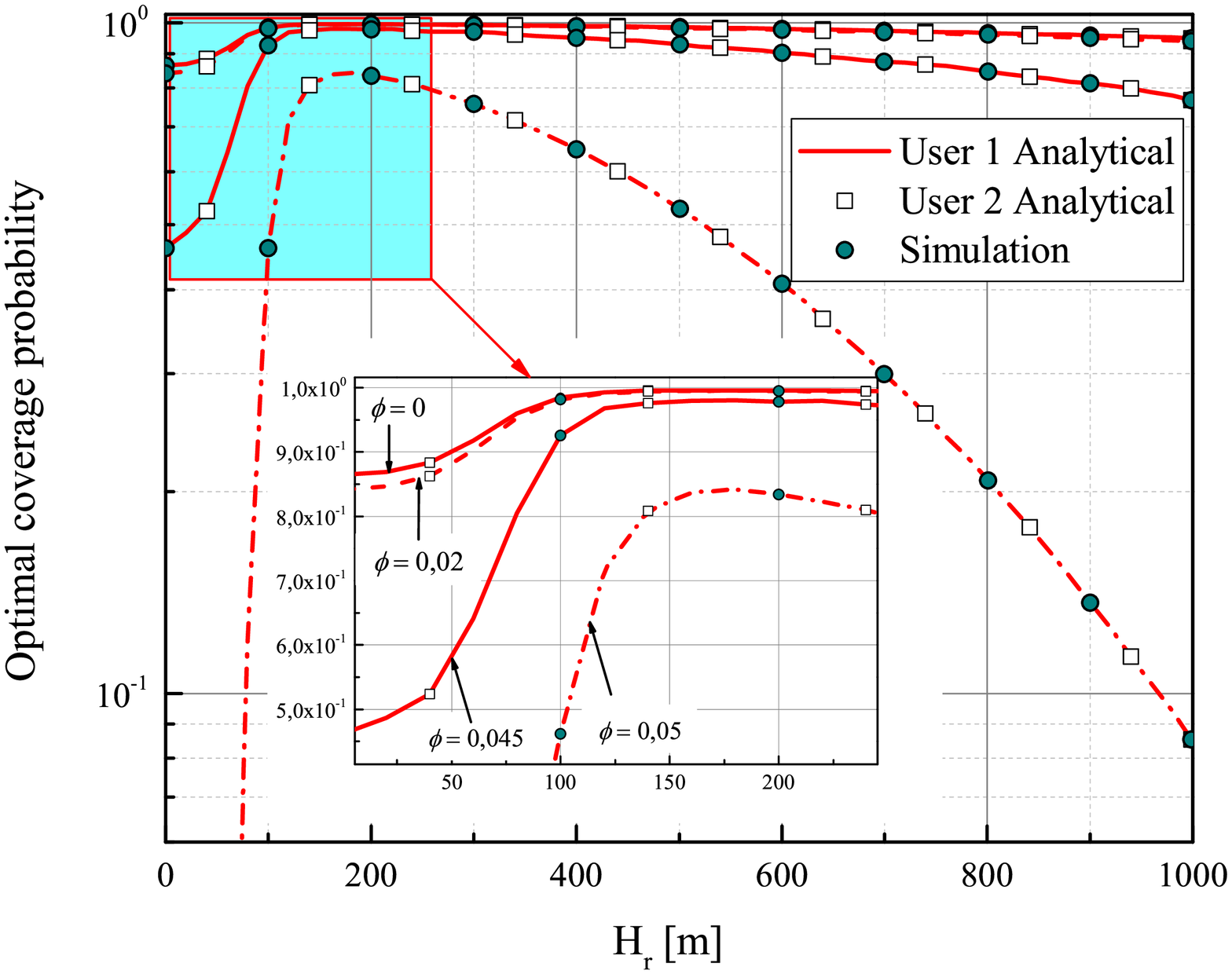}  
		\caption{For various $H_r$ and different HI conditions.}
		\label{fig:HIs}
	\end{subfigure}
	\begin{subfigure}{.5\textwidth}
		\centering
		\includegraphics[width=1\linewidth]{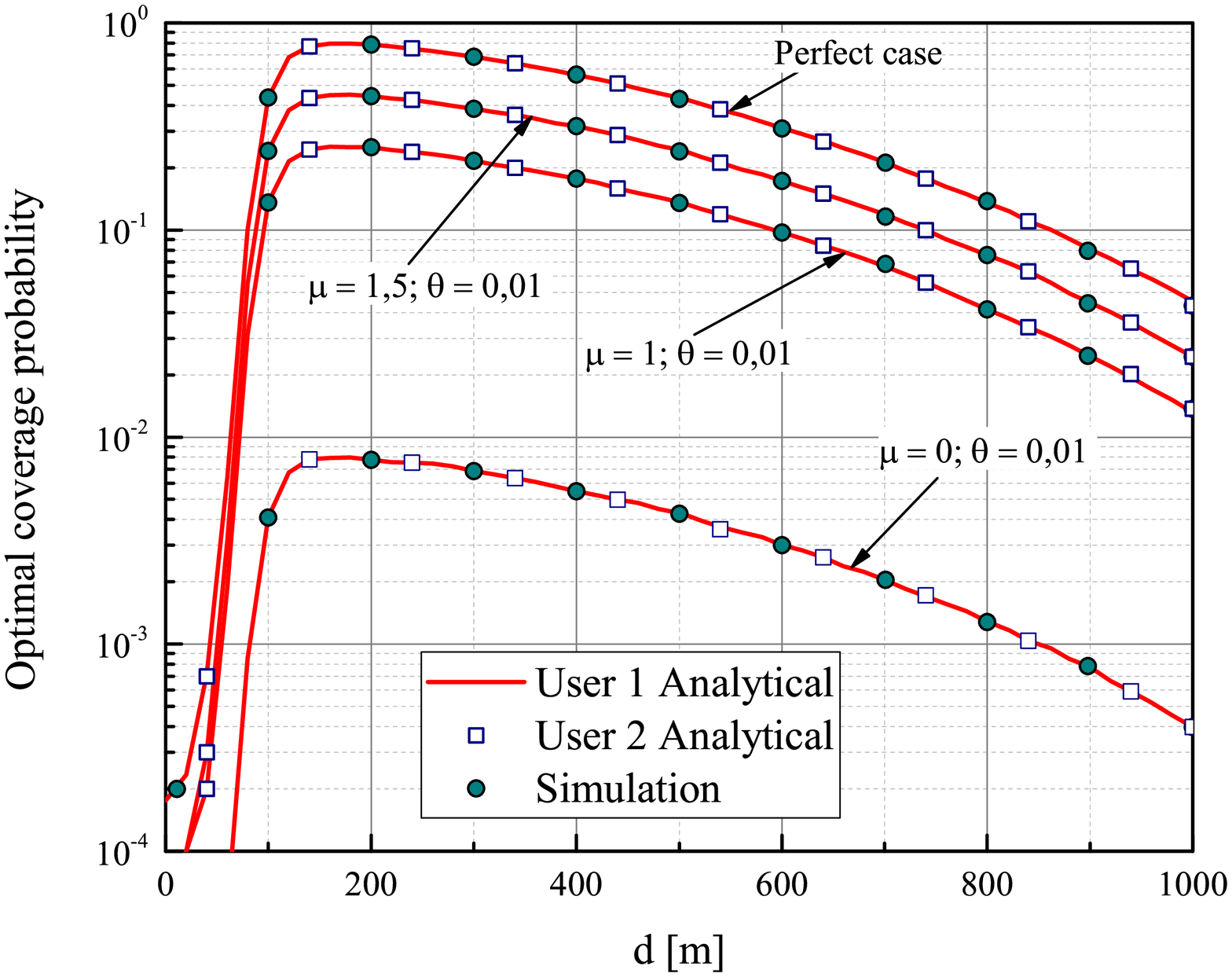}  
		\caption{for various $d$ and different CSI conditions.}
		\label{fig:opt_cov_imp_csi}
	\end{subfigure}
	\caption{Optimal coverage probability considering system imperfections for $C=2$.}
\end{figure}

\begin{figure}[t]
    \centering
	\includegraphics[width=0.5\textwidth]{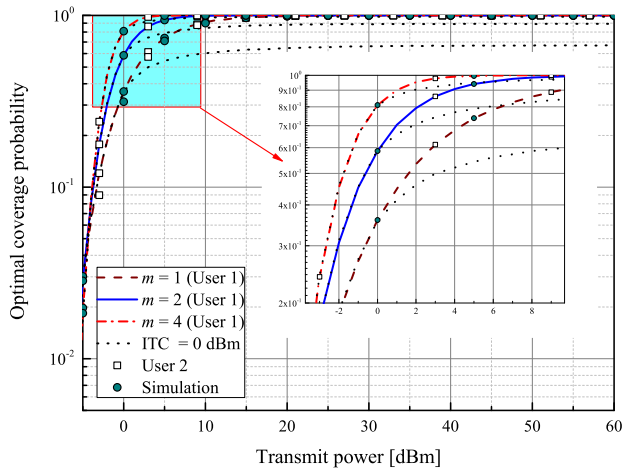}
	\caption{Optimal coverage probability for various transmission power with $C=2$ and different $m$ parameters.}
	\label{fig:impSIC_ITC}
\end{figure}

Fig. \ref{fig:HIs} aims to analyze the effect of the HI level on the optimal coverage probability for a given $C=2$, with coordinates $\{ d^1_{\text{UAV}}; d^2_{\text{UAV}}\}$ = $\{(500,400,0); (450,400,0)\}$, $\phi = \{0, 0.02, 0.045, 0.05\}$ and $\bar{R} = 0.8$ Mbit/s. It is worth mentioning that we use the CF-based optimal power allocation fractions for both NOMA users. Hence, NOMA users obtain the same coverage performance for all considered UAV height. Noticeably, the coverage probability degrades as the level of HI increases since HIs add extra interference level on the system. Having said that, the coverage degradation from an ideal hardware case to $\phi = 0.02$ case is $2.5 \%$ and $\phi = 0.045$ case is $40 \%$, respectively, at $40$ m. In addition, when $\phi = 0.05$, the users obtain the worst coverage performance for all UAV locations. Moreover, in this and forthcoming figures, the Monte Carlo simulations perfectly coincide with the analytical results by validating the accuracy of analytical derivations.

Fig. \ref{fig:opt_cov_imp_csi} demonstrates the impact of the imperfect CSI on the coverage probability over various $d$ locations by considering  other system imperfections in ideal setup, i.e., $\phi=0$ and $\backepsilon=0$. It is seen that NOMA users obtain the best coverage performance with perfect CSI model by achieving $0.8$ coverage probability at $180$ m. Furthermore, when we set $\theta=0.01$, the best coverage performance of users degrades to $0.008$. However, when the value of $\mu$ increases, the coverage performance improves getting closer to the perfect mode performance. For example, when $\mu=1$ and $\mu=1.5$, users receive the maximum coverage probability of $0.25$ and $0.45$, respectively. 

In Fig. \ref{fig:impSIC_ITC}, we demonstrate the impact of the small scale fading parameter $m$ on the optimal coverage probability considering $\bar{R} = 0.4$ Mbit/s. It is obvious from the plot that the increase of the parameter $m$ provides better coverage performance. This happens because the parameter $m$ represents the number of multi-path
components of the channel. When $m$ is higher, the number of multi-path increases providing diversity. For example, at $4$ dBm transmission power, the system setup with $m=4$ obtains the full coverage, while $m=1$ and $m=2$ achieves the coverage probability of $0.7$ and $0.9$, respectively.
Moreover, we also can observe the impact of the ITC on the coverage probability. We set ITC = $0$ dBm and notice that the coverage performance for all $m$ values saturate after a certain  transmission power and do not achieve the full coverage performance.

\subsection{Performance Evaluation of the User Clustering Approach}
To evaluate the performance of the proposed user clustering and channel assignment approach, we compare it with an optimal ILP benchmark, which is computed by Gurobi and MOSEK solvers of the CVX \cite{cvx}. The elapsed times for both approaches are shown in Table \ref{tab:tlapse_comp}, where the number of channels and users are kept the same for simplicity (i.e., $K=N$). It is obvious from Table \ref{tab:tlapse_comp} that the time complexity difference between the two approaches significantly increases as $K=N$ increases. For instance, the ILP-CVX takes more than 3, 4, and 5 orders magnitude of time for $K=N$ is 20, 25, and 30, respectively. At this point, we must note that the proposed approach reach 100\% accuracy at all cases. Since the ILP-CVX takes extremely long times for $K=N>30$, we show the elapsed time only for the proposed approach in Table \ref{tab:tlapse_GBAP}, which shows that LBA approach can provide a solution in less than half a second for 500 users.

\begin{table}[t]
\caption{Elapsed time comparison between the ILP benchmark and LBA approach.}
\label{tab:tlapse_comp}
\centering 
\resizebox{0.65\textwidth}{!}{%
\begin{tabular}{l|l|c|c|c|c|c|c|}
\cline{2-8}
 & \textbf{K=N}     & 5      & 10     & 15     & 20     & 25     & 30     \\ \hline
\multicolumn{1}{|l|}{\multirow{2}{*}{\textbf{Time [s]}}} & \textbf{LBA}    & .00148 & .00172 & .00268 & .00375 & .00879 & .01294 \\ \cline{2-8} 
\multicolumn{1}{|l|}{}                               & \textbf{ILP-CVX} & 3.5    & 3.8    & 6.6    & 17.9   & 292.5  & 3185.9 \\ \hline
\end{tabular}%
}
\end{table}

\begin{table}[t]
\caption{Elapsed time for the proposed clustering approach.}
\label{tab:tlapse_GBAP}
\centering 
\resizebox{0.7\textwidth}{!}{%
\begin{tabular}{|l|c|c|c|c|c|c|c|c|c|c|c|c|c|c|c|c|c|c|c|c|}
\hline
\textbf{K=N}          & 50   & 100  & 150  & 200  & 250  & 300  & 350  & 400  & 450  & 500    \\ \hline
\textbf{Time [s]} & .019 & .033 & .039 & .064 & .109 & .171 & .219 & .283 & .354 & .470 \\ \hline
\end{tabular}%
}
\end{table}
\begin{figure}[t]
	\begin{subfigure}{0.5\textwidth}
		\centering
	\includegraphics[width=1\linewidth]{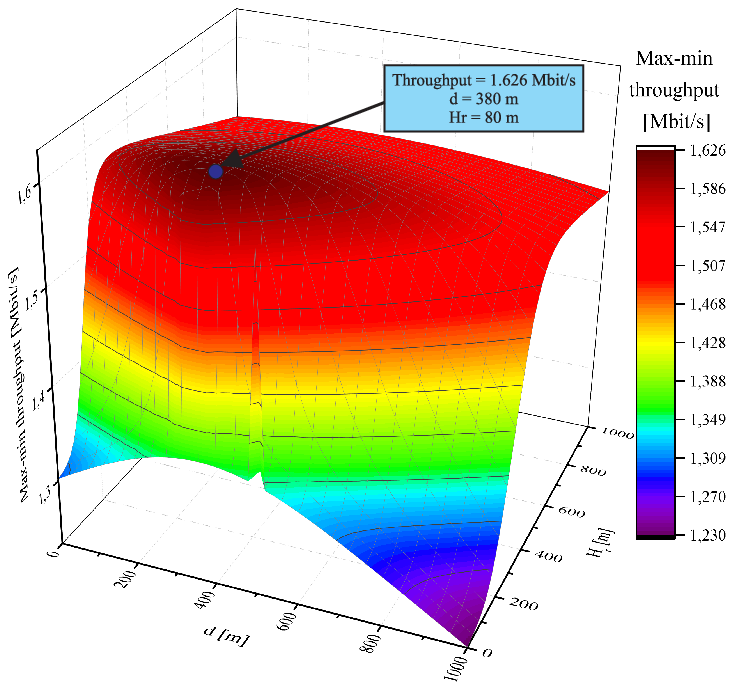}
	\caption{}
	\label{fig:optimal_throughput_for_2users}
	\end{subfigure}
	\begin{subfigure}{0.5\textwidth}
		\centering
    \includegraphics[width=1\linewidth]{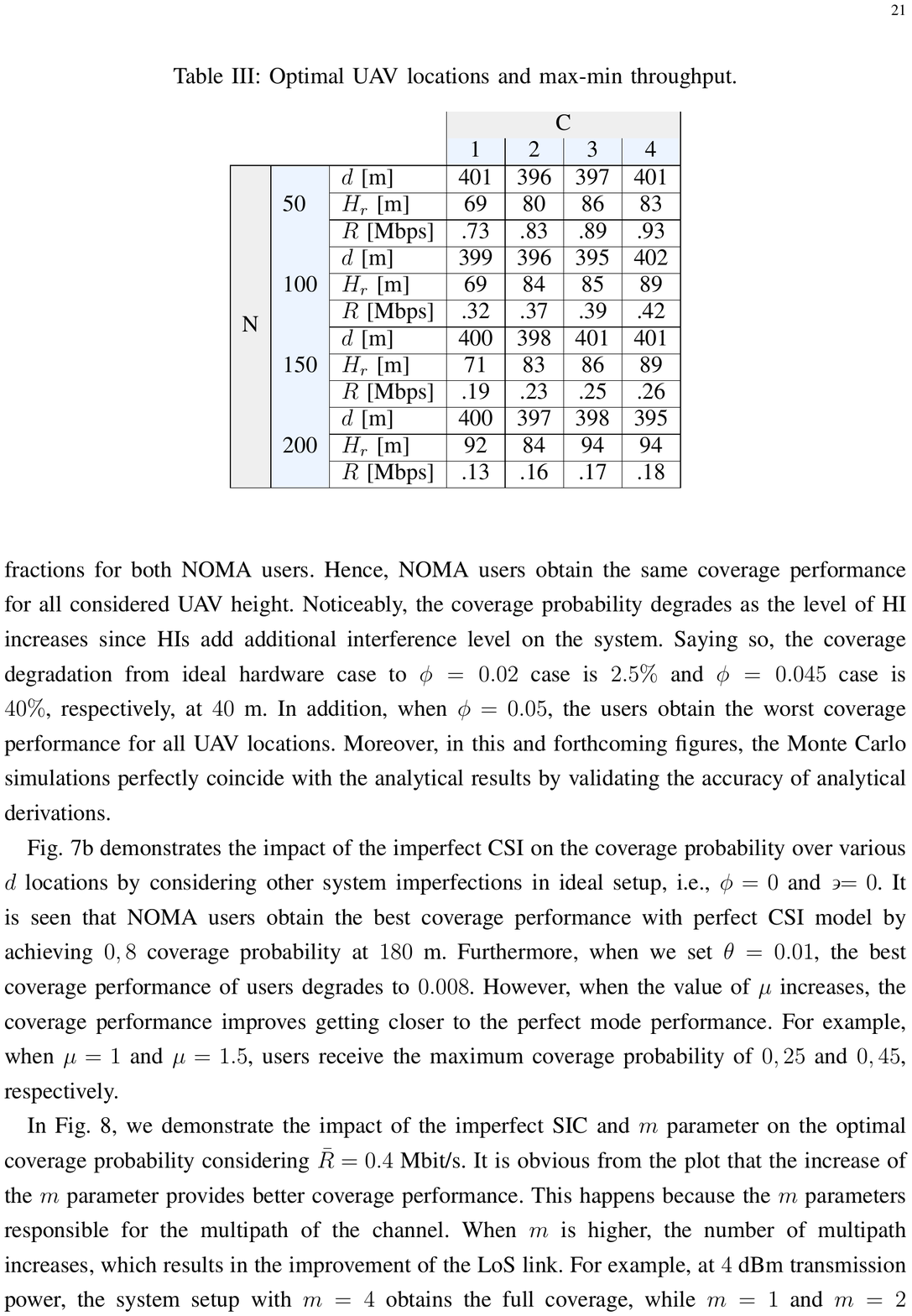}
    \caption{}
    \label{fig:opt_UAV_loc}
	\end{subfigure}
	\caption{Impact of the UAV location and cluster size on the max-min throughput: a) Max-min throughput for various height and distance when $C=2$ and b) Optimal UAV location and max-min throughput for clusters with different sizes (simulated annealing algorithm).}
\end{figure}

\subsection{Impacts of User Density and Cluster Size on Optimal UAV Deployment }
In Fig.  \ref{fig:optimal_throughput_for_2users}, we find an optimal UAV location for cluster size $C=2$, where we vary distances of $d$ and $H_r$. As it is shown from the plot, the max-min throughput of $1.626$ Mbit/s is obtained at optimal locations when $d = 380$ m and $H_r = 80$ m. It is noticed that the max-min throughput is lower when $H_r$ is closer to the ground, i.e., $0-10$ m. This is due to the higher probability of NLoS link as the UAV is located below the height of the SBS. On the other hand, when $H_r>200$ m, the achieved max-min throughput degrades due to the increase of the path-loss impact on the system performance when the UAV flies at a higher altitude. Moreover, the optimal location of $d = 380$ m is close to the center of the hot-spot. The reason for that is as follows. The UAV has lower available transmission power than the SBS. Therefore, the UAV needs to fly closer to NOMA users in order to provide maximum throughput in the access link.  
 
Fig. \ref{fig:opt_UAV_loc}  illustrates users clustering and searching the max-min throughput as well as the optimal UAV location, i.e., $d$ and $H_r$, described in Algorithm \ref{alg:SP123}.  Here, we consider cases when the total number of secondary NOMA users, which are randomly distributed within the hot-spot radius, is equal to $N = 50,100,150,200$ and the maximum cluster size is $C=4$. Moreover, we provide the results of the simulated annealing algorithm approach as it showed $4$ times quicker elapsed time performance comparing with the particle swarm optimization. When $N=50$ and $C=1$, the total number of clusters is $50$ and each cluster has just $1$ user, which can be considered as an OMA case. The max-min throughput for all clusters is  $0.73$ Mbit/s. Then, when $C=2$, the max-min throughput increases by $0.1$ Mbit/s. This shows the outperformance of the NOMA mode comparing to the OMA one. Furthermore, when the cluster size increased to $C=3$, the max-min throughput is $0.89$ Mbit/s, which is for $0.06$ Mbit/s higher than that of $C=2$ mode. When $C=4$, the max-min throughput equals to $0.93$ Mbit/s, which $0.04$ Mbit/s higher than the previous cluster size. As can be seen from the pattern of the throughput increase, considerable performance improvement happens when the OMA mode becomes NOMA with a cluster size of $2$. On the other hand, the throughput performance shows only a slight improvement when the cluster size is increased from $C=2$ to $C=3$ and from $C=3$ to $C=4$. The reason for that is the next: despite the broad bandwidth of higher clusters comparing to lower clusters, the increase of the number of users in each cluster also raises the level of interference within that cluster. As a result, that interference negatively impacts on the end-to-end SINDR of users. Furthermore, when $N=100$, we see that the max-min throughput for $C=1$ and $C=2,3,4$ decreases for $0.4$ and $0.5$ Mbit/s, respectively, comparing to the same cluster sizes when $N=50$. Similarly, the max-min throughput reduces by further increasing the number of users to $150$ and $200$. The throughput reduction happens since the total transmission power is divided among the NOMA users, which means that each user achieve less power for the signal detection. It is worth noting that the optimal $d$ and $H_r$ after averaging values of all clusters are equal to $399$ m and $83$ m, respectively.
\section{Conclusions}
\label{sec:conc}
The CCR-NOMA is an inherent remedy to achieve a high spectral efficiency at large-scale networks. Considering expeditious $3$D deployment capabilities of UAVs, their use as a cooperative relay paves the way for enhanced network performance. In this regard, this paper investigated the optimal UAV deployment by accounting for clustering, channel assignment, and resource allocation sub-problems. To reduce the computational time complexity, we derived closed-form solutions for optimal resource allocations and coverage probabilities for a user cluster. The closed-form solutions are then used by the proposed user clustering and channel assignment heuristics, which is fast yet highly accurate. The UAV deployment problem exploits this framework to evaluate the fitness of locations to find the optimal UAV placement.

\textbf{}\appendices
\numberwithin{equation}{section}
\section{Proof of Proposition 1}
\label{Appendix 1}
Considering the ITC imposed by PU$_k$, the CDF $F_{\gamma_{k,n}^{i,j}}(\bar{\gamma}_k^i)$  in \eqref{coverage_prob_r} can be further written as
\begin{align}
\label{CDFsr}
F_{\gamma_{k,n}^{i,j}}(\bar{\gamma}_k^i) =& \text{Pr}\left[ \frac{X_i^j v^n_k}{X_i^j I_{k,n}^i + X_i^j \sigma^2_{\phi^j_i} + E^j_i + Z_p^j I^j_p + \bar{\sigma}^2_j }< \bar{\gamma}_k^i, P_i^k < \frac{\text{ITC}_k}{Y_i^k} \right] \nonumber\\
&+\text{Pr}\left[ \frac{ X_i^j v^n_k}{ X_i^j I_{k,n}^i +  X_i^j \sigma^2_{\phi^j_i} + E^j_i  + Z_p^j I^j_{\bar{p}} +   \bar{\sigma}^2_{\bar{j}} }< \bar{\gamma}_k^i, P_i^k > \frac{\text{ITC}_k}{Y_i^k} \right] \nonumber\\
=&\underbrace{ \text{Pr}\left[X_i^j < Z_p^j \mathcal{I}_i  + \mathcal{S}_i + \mathcal{E}_i  , Y_i^k < \Lambda_i \right]}_{\text{$\Delta$}} + \underbrace{\text{Pr}\left[X_i^j < Z_p^j Y_i^k \mathcal{U}_i + Y_i^k \mathcal{V}_i + \mathcal{E}_i, Y_i^k > \Lambda_i \right]}_{\text{$\Upsilon$}}.
\end{align} 
We can further rewrite the term $\Delta$ in \eqref{CDFsr} as follows
{\allowdisplaybreaks
\begin{align}
\label{CDFsr_int}
\Delta =& \int_{0}^{\infty}  f_{Z_p^j}(z) \int_{0}^{z_p^j \mathcal{I}_i + \mathcal{E}_i  + \mathcal{S}_i} \hspace{-1cm}f_{X_i^j}(x){\text d}x {\text d}z  \int_{0}^{\Lambda_i}  f_{Y_i^k}(y){\text d}y 
= \frac{\gamma_{inc}\left( {y_i^k}, {y_i^k} \Lambda_i \right)}{\Gamma({y_i^k})}  \int_{0}^{\infty}  \frac{ ({z_p^j})^{z_p^j} z^{{z_p^j}-1} \exp \left[-{z_p^j} z \right]}{ \Gamma({z_p^j})} \nonumber\\
&\times \frac{\gamma_{inc}\left( {x_i^j}, {x_i^j} \left(z \mathcal{I}_i + \mathcal{E}_i  + \mathcal{S}_i\right) \right)}{\Gamma({x_i^j})} {\text d}z 
\stackrel{(a)}{=}  \frac{\gamma_{inc}\left( {y_i^k}, {y_i^k} \Lambda_i \right)}{\Gamma({y_i^k})} \left( 1 - \frac{  \exp \left[-{x_i^j} \left(\mathcal{E}_i  + \mathcal{S}_i \right) \right] }{({z_p^j})^{-{z_p^j}}~ \Gamma({z_p^j})} \right) \nonumber\\
&  \times \sum_{q=0}^{{x_i^j}-1}
\frac{ \left(z \mathcal{I}_i + \mathcal{E}_i  + \mathcal{S}_i \right)^q  }{{x_i^j}^{-q} q!}        \int_{0}^{\infty}  z^{{z_p^j} - 1} {\text d}z
	 	  \exp \left[-z \left({z_p^j} + {x_i^j} \mathcal{I}_i \right) \right]
\stackrel{(b)}{=} \frac{\gamma_{inc}\left( {y_i^k}, {y_i^k} \Lambda_i \right)}{\Gamma({y_i^k})} \nonumber \\ 
& - \frac{\gamma_{inc}\left( {y_i^k}, {y_i^k} \Lambda_i \right)}{\Gamma({y_i^k})} \frac{ ({z_p^j})^{{z_p^j}} \exp \left[-{x_i^j} \left(\mathcal{E}_i  + \mathcal{S}_i \right) \right] }{\Gamma({z_p^j})}  \sum_{q=0}^{{x_i^j}-1}
	\frac{({x_i^j})^q}{q!} \sum_{l=0}^{q} \binom{q}{l} \frac{ \left( \mathcal{E}_i  + \mathcal{S}_i \right)^q (\mathcal{I}_i)^{l} \Gamma(z_p^j + l)}{\left(z_p^j + x_i^j \mathcal{I}_i  \right)^{z_p^j+l}}, 
\end{align}
}where $(a)$ follows from using the series representation of the lower incomplete Gamma function given as $
\gamma_{inc}(m,\xi) = \Gamma(m) - \Gamma(m)\exp\left(-\xi\right) \sum_{i=0}^{m-1} \frac{\xi^i}{i!}$ for $m \in \mathbb{Z}^{+}$ and $(b)$ follows from the binomial series expansion. 
Then, we extend the term $\Upsilon$ in \eqref{CDFsr} as 
\begin{align}
\label{V}
\Upsilon&= \int_{0}^{\infty} f_{Z_p^j}(z) \underbrace{\int_{\Lambda_i}^{\infty} \int_{0}^{yz \mathcal{U}_i  + y \mathcal{V}_i + \mathcal{E}_i}  f_{X_i^j}(x) f_{Y_i^k}(y){\text d}x {\text d}y}_{\Upsilon_1}  {\text d}z,
\end{align}
where the term $\Upsilon_1$ can be further expanded by using the series representation of the lower incomplete Gamma function and the binomial series expansion and written as
\begin{align}
\label{V1}
\Upsilon_1 =& \frac{\Gamma\left( {y_i^k}, {y_i^k} \Lambda_i \right) }{\Gamma({y_i^k})} - \frac{({y_i^k})^{{y_i^k}} \exp\left[-{x_i^j} \mathcal{E}_i \right] }{\Gamma({y_i^k})}    \sum_{\Bbbk=0}^{{x_i^j}-1} \frac{({x_i^j})^\Bbbk }{\Bbbk !} \sum_{\jmath=0}^{\Bbbk} \binom{\Bbbk}{\jmath} (\mathcal{E}_i)^{\Bbbk-\jmath} \left(z \mathcal{U}_i  +  \mathcal{V}_i \right)^\jmath \nonumber\\  
&\times \frac{\Gamma\left({y_i^k}+\jmath,~  \Lambda_i \left[ {y_i^k} + z {x_i^j} \mathcal{U}_i +  {x_i^j} \mathcal{V}_i \right] \right) }{\left[ {y_i^k} + z {x_i^j} \mathcal{U}_i  +  {x_i^j} \mathcal{V}_i \right]^{{y_i^k}+\jmath} },
\end{align}
Further, inserting \eqref{V1} into \eqref{V}, and using the series representation of the upper incomplete Gamma function of $\Gamma(m,c) = \Gamma(m)\exp\left(-c\right) \sum_{i=0}^{m-1} \frac{c^i}{i!}$ as well as the binomial series expansion, the term $\Upsilon$ can be rewritten as
\begin{align}
	\label{V_prefinal}
	\Upsilon =& \frac{\Gamma\left( {y_i^k}, {y_i^k} \Lambda_i \right) }{\Gamma({y_i^k})} - \frac{ ({z_p^j})^{{z_p^j}}}{\Gamma({z_p^j})  } \frac{({y_i^k})^{{y_i^k}} \exp\left[-{x_i^j} \mathcal{E}_i \right] }{\Gamma({y_i^k})}    \sum_{\Bbbk=0}^{{x_i^j}-1} \frac{({x_i^j})^\Bbbk }{\Bbbk !} \sum_{\jmath=0}^{\Bbbk} \binom{\Bbbk}{\jmath} \Gamma({y_i^k} +\jmath) (\mathcal{E}_i)^{\Bbbk-\jmath}   \nonumber\\
	&\times \sum_{t=0}^{p} \binom{p}{t}  \left({y_i^k}+{x_i^j} \mathcal{V}_i  \right)^{p-t} \left({x_i^j} \mathcal{U}_i \right)^t \exp\left[- \Lambda_i \left({y_i^k}+{x_i^j} \mathcal{V}_i \right)  \right]   \frac{\sum\limits_{u=0}^\jmath     (\mathcal{V}_i)^{\jmath-u} (\mathcal{U}_i)^u \sum\limits_{p=0}^{{y_i^k}+\jmath-1} \frac{(\Lambda_i)^p }{p !}}{\left({y_i^k}+{x_i^j} \mathcal{V}_i \right)^{{y_i^k}+\jmath}}   \nonumber\\
	& \times  \underbrace{\int_{0}^{\infty} \frac{z^{ {z_p^j}+u+t-1} \exp\left[- z \left({z_p^j} + {x_i^j} \Lambda_i \mathcal{U}_i   \right) \right]}{ \left(1+ \frac{{x_i^j} \mathcal{U}_i }{\left({y_i^k}+{x_i^j} \mathcal{V}_i \right) } z \right)^{{y_i^k}+\jmath} } {\text d}z}_{\Upsilon_2}, 
	\end{align}
Now, representing $\exp\left[-a z\right]$ and $(1+b z)^{-c} $ in terms of Meijer G-functions \cite[Eqs. (7.34.3.46.1) and (7.34.3.271.1)]{function_wolfram} respectively as $\MeijerG*{1}{0}{0}{1}{-}{0}{a z}$ and $\frac{1}{\Gamma(c)} \MeijerG*{1}{1}{1}{1}{1-c}{0}{b \gamma}$, we can reformulate the term $\Upsilon_2$ by
\begin{align}
\label{V_2}
\Upsilon_2 = \frac{1}{ \Gamma({y_i^k}+\jmath)} \int_{0}^{\infty} z^{{z_p^j}+u+t-1} ~\MeijerG*{1}{1}{1}{1}{1-{y_i^k}-\jmath}{0}{\frac{{x_i^j} \mathcal{U}_i z }{{y_i^k}+{x_i^j} \mathcal{V}_i  } } \MeijerG*{1}{0}{0}{1}{-}{0}{ \left({y_i^k}+{x_i^j} \mathcal{V}_i \right)  z} {\text d}z. 
\end{align}
Further, using \cite[Eq. (21)]{Adamchik} for $\Upsilon_2$ and after some mathematical manipulations, the term $\Upsilon$ can be expressed by 
\begin{align}
\label{V_final}
\Upsilon= & \frac{\Gamma\left( {y_i^k}, {y_i^k} \Lambda_i \right) }{\Gamma({y_i^k})} - \frac{ \left({z_p^j} \right)^{{z_p^j}}
 \exp\left[- \Lambda_i \left({y_i^k} + {x_i^j} \mathcal{V}_i  \right) \right]}{\Gamma({z_p^j}) }    \frac{\left({y_i^k} \right)^{{y_i^k}}
 \exp\left[-{x_i^j} \mathcal{E}_i\right] }{(\mathcal{U}_i)^{z_p^j} \, \Gamma({y_i^k})} \sum_{u=0}^{\jmath} \left( \mathcal{V}_i\right)^{\jmath-u}   \nonumber\\ 
 &\hspace{-0.3cm}\times \sum_{\Bbbk=0}^{{x_i^j}-1} \frac{\left
 ({x_i^j} \right)^{\Bbbk-{z_p^j} - u } }{\Bbbk !}  \sum_{\jmath=0}^{\Bbbk} \binom{\Bbbk}{\jmath} \left(\mathcal{E}_i\right)^{\Bbbk-\jmath} \sum_{p=0}^{{y_i^k} + \jmath-1}   \frac{\left(\Lambda_i \right)^p }{p !} \sum_{t=0}^{p} \binom{p}{t}     
	 \frac{\MeijerG*{2}{1}{1}{2}{1-({z_p^j} + u+t)}{0,~-{z_p^j} + {y_i^k}-t}{\frac{ {z_p^j} + {x_i^j} \Lambda_i  \mathcal{U}_i   }{{x_i^j}  \mathcal{U}_i \left({y_i^k} + {x_i^j}\mathcal{V}_i \right)^{-1} }}}{\left({y_i^k} + {x_i^j} \mathcal{V}_i  \right)^{-(p-u-\jmath - y_i^k - z_p^j)}}.
\end{align}
Finally, by inserting \eqref{CDFsr_int} and \eqref{V_final} into \eqref{CDFsr}, the coverage probability for SU$_n$ on PC$_k$ can be written as in \eqref{coverage_prob_r}. \qed

\begin{figure}[t]
\centering
\includegraphics[width=0.5 \textwidth]{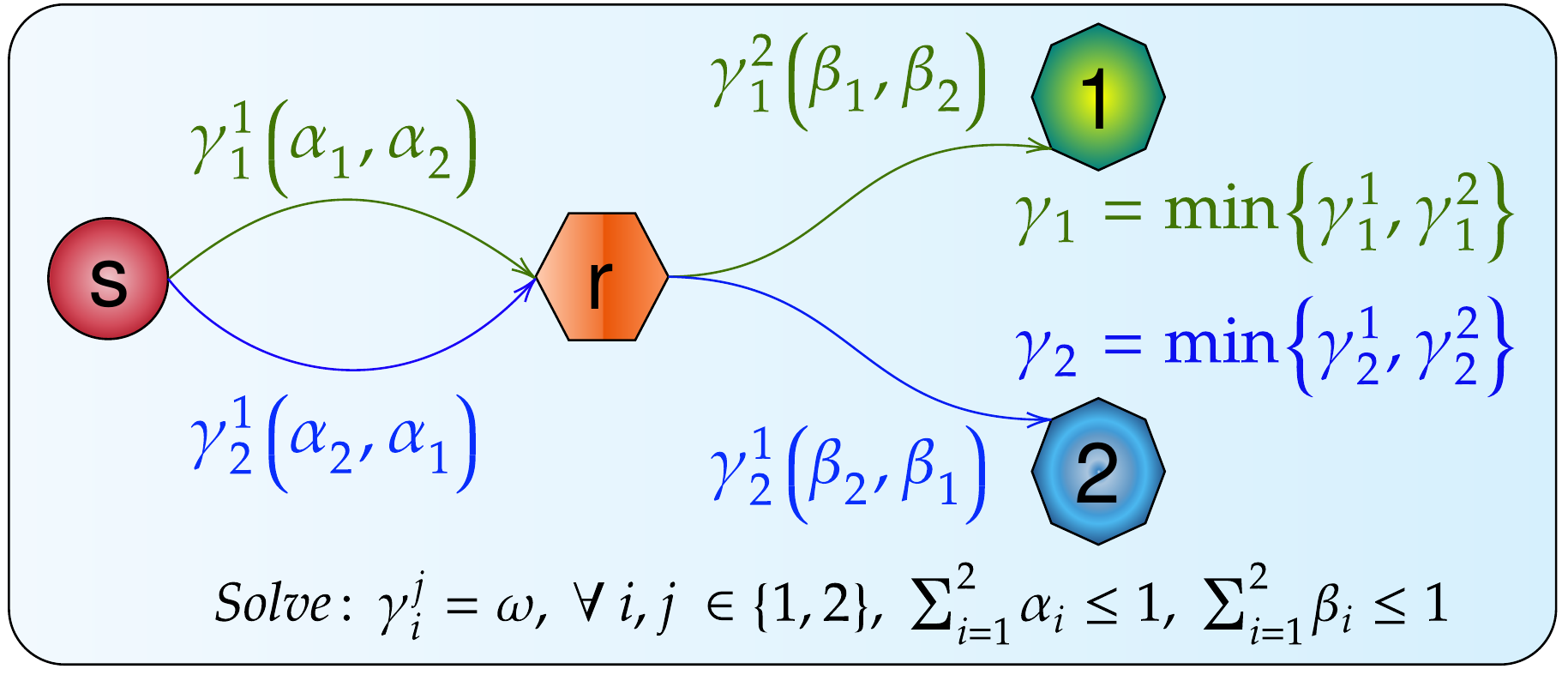}
\caption{Illustration of the two user case.}
\label{fig:2user}
\end{figure} 

\section{Proof of Lemma \ref{lem:SP1}}
\label{app:SP1}
For the sake of clarity of the presentation, let us omit the cluster indices. For simplicity, we consider a two-user case as shown in Fig. \ref{fig:2user}, where $\gamma_i^j$ denotes the SIDNR of SU$_j$ in the phase $i$. For a two-user case, end-to-end SIDNRs, considering the deterministic channel and perfect SIC, are given by $\gamma_1=\min(\gamma_1^1,\gamma_2^1)$ and $\gamma_2=\min(\gamma_1^2,\gamma_2^2)$. Following from Propositions \ref{prop1} and \ref{prop2}, we need to solve the following set of equations to find the optimal power allocation and SIDNRs: 1)$\gamma_i^j= \omega, \forall (i,j)$, 2)$\sum_j \alpha_j \leq \min \left\{ 1,\rm{ITC}_1\right\}$, and 3) $\sum_j \beta_j \leq \min \left\{ 1,\rm{ITC}_2\right\}$, where $\rm{ITC}_1$ and $\rm{ITC}_2$ are the ITC constraints of the first and second phases, respectively. The solution steps are given as follows:

Since the first phase only involves variables $\vect{\alpha}$ and $\omega$, we obtain $\alpha_1=f_1(\omega)$ and $\alpha_2=f_2(\omega)$ from equations $\gamma_1^1 = \omega$ and $\gamma_1^2= \omega$ as follows:
\begin{align}
    \label{eq:alpha1} \alpha_1= \omega \alpha_2\vert \tilde{h}^r_s \vert^2 + \omega I_r = \omega^2 I_r + \omega I_r, \:\: \alpha_2=& \omega I_r.
\end{align}
 Since the second phase only involves variables $\vect{\beta}$ and $\omega$, we obtain $\beta_1=g_1(\omega)$ and $\beta_2=g_2(\omega)$ from equations $\gamma_2^1 = \omega$ and $\gamma_2^2= \omega$ as follows:
\begin{align}
    \label{eq:beta1} \beta_1=& \omega \beta_2  + \omega I_1 = \omega^2 I_2   + \omega I_1,\:\: \beta_2= \omega I_2,
\end{align}
where the definitions of $I_r$ and $I_n$, $\forall n \in\{1,2\} $, is provided in Lemma \ref{lem:SP1}.  
\begin{table}[t]
\centering
\caption{The closed-form optimal power allocations}
\label{tab:closed_form}
\resizebox{0.6\textwidth}{!}{%
\begin{tabular}{|c|c|c|l|}
\hline
\rowcolor[HTML]{9B9B9B} 
{\color[HTML]{FFFFFF} \textbf{C}}                                            & {\color[HTML]{FFFFFF} \textbf{$\accentset{\star}{\vect{\alpha}}$}}                                             & \multicolumn{2}{c|}{\cellcolor[HTML]{9B9B9B}{\color[HTML]{FFFFFF} \textbf{$\accentset{\star}{\vect{\beta}}$}}}                                                                                                       \\ \hline
\cellcolor[HTML]{9B9B9B}{\color[HTML]{FFFFFF} }                                              & {\color[HTML]{333333} $\accentset{\star}{\alpha}_1= I_r \accentset{\star}{\vect{\gamma}} (1+ \accentset{\star}{\vect{\gamma}} )$} & \multicolumn{2}{c|}{{\color[HTML]{333333} $\accentset{\star}{\beta}_1= I_1 \accentset{\star}{\vect{\gamma}} + I_2 \accentset{\star}{\vect{\gamma}} ^2$}}                                                                                \\ \cline{2-4} 
\rowcolor[HTML]{EFEFEF} 
\multirow{-2}{*}{\cellcolor[HTML]{9B9B9B}{\color[HTML]{FFFFFF} \textbf{2}}} & {\color[HTML]{333333} $\accentset{\star}{\alpha}_2= I_r \accentset{\star}{\vect{\gamma}} $}                                       & \multicolumn{2}{c|}{\cellcolor[HTML]{EFEFEF}{\color[HTML]{333333} $\accentset{\star}{\beta}_2=  I_2 \accentset{\star}{\vect{\gamma}}$}}                                                                                                 \\ \hline
\cellcolor[HTML]{9B9B9B}{\color[HTML]{FFFFFF} }                                              & {\color[HTML]{333333} $\accentset{\star}{\alpha}_1= I_r \accentset{\star}{\vect{\gamma}} (1+\accentset{\star}{\vect{\gamma}})^2$} & \multicolumn{2}{c|}{{\color[HTML]{333333} $\accentset{\star}{\beta}_1= I_1 \accentset{\star}{\vect{\gamma}} + I_2 \accentset{\star}{\vect{\gamma}}^2 + I_3 \accentset{\star}{\vect{\gamma}}^2(1 + \accentset{\star}{\vect{\gamma}}) $}} \\ \cline{2-4} 
\cellcolor[HTML]{9B9B9B}{\color[HTML]{FFFFFF} }                                              & {\color[HTML]{333333} $\accentset{\star}{\alpha}_2=  I_r \accentset{\star}{\vect{\gamma}} (1+ \accentset{\star}{\vect{\gamma}})$ }                                                                                                           & \multicolumn{2}{c|}{{\color[HTML]{333333} $\accentset{\star}{\beta}_2= I_2 \accentset{\star}{\vect{\gamma}} + I_3 \accentset{\star}{\vect{\gamma}}^2 $}}                                                                                \\ \cline{2-4} 
\rowcolor[HTML]{EFEFEF} 
\multirow{-3}{*}{\cellcolor[HTML]{9B9B9B}{\color[HTML]{FFFFFF} \textbf{3}}} & {\color[HTML]{333333} $\accentset{\star}{\alpha}_3= I_r \accentset{\star}{\vect{\gamma}} $}                                       & \multicolumn{2}{c|}{\cellcolor[HTML]{EFEFEF}{\color[HTML]{333333} $\accentset{\star}{\beta}_3= I_3 \accentset{\star}{\vect{\gamma}}$}}                                                                                                  \\ \hline
\cellcolor[HTML]{9B9B9B}{\color[HTML]{FFFFFF} }                                              &  {\color[HTML]{333333} $\accentset{\star}{\alpha}_1= I_r \accentset{\star}{\vect{\gamma}} (1+\accentset{\star}{\vect{\gamma}})^3$ }                                                                                                           & \multicolumn{2}{c|}{{\color[HTML]{333333} $\accentset{\star}{\beta}_1= I_1 \accentset{\star}{\vect{\gamma}} + I_2 \accentset{\star}{\vect{\gamma}}^2 + I_3 \accentset{\star}{\vect{\gamma}}^2 (1 + \accentset{\star}{\vect{\gamma}}) + I_4 \accentset{\star}{\vect{\gamma}}^2 (1 +  \accentset{\star}{\vect{\gamma}})^2$ }}                                                                                                                                                                                            \\ \cline{2-4} 
\cellcolor[HTML]{9B9B9B}{\color[HTML]{FFFFFF} }                                              & {\color[HTML]{333333} $\accentset{\star}{\alpha}_2= I_r \accentset{\star}{\vect{\gamma}} (1+\accentset{\star}{\vect{\gamma}})^2$ }                                                                                                           & \multicolumn{2}{c|}{{\color[HTML]{333333} $\accentset{\star}{\beta}_2= I_2 \accentset{\star}{\vect{\gamma}}  + I_3 \accentset{\star}{\vect{\gamma}}^2 + I_4 \accentset{\star}{\vect{\gamma}}^2 (1 + \accentset{\star}{\vect{\gamma}}) $ }}                                                                                                                                                                                            \\ \cline{2-4} 
\cellcolor[HTML]{9B9B9B}{\color[HTML]{FFFFFF} }                                              & {\color[HTML]{333333} $\accentset{\star}{\alpha}_3= I_r \accentset{\star}{\vect{\gamma}} (1+\accentset{\star}{\vect{\gamma}})$ }                                                                                                           & \multicolumn{2}{c|}{{\color[HTML]{333333} $\accentset{\star}{\beta}_3= I_3 \accentset{\star}{\vect{\gamma}} + I_4 \accentset{\star}{\vect{\gamma}}^2 $ }}                                                                                                                                                                                            \\ \cline{2-4} 
\rowcolor[HTML]{EFEFEF} 
\multirow{-4}{*}{\cellcolor[HTML]{9B9B9B}{\color[HTML]{FFFFFF} \textbf{4}}} & {\color[HTML]{333333} $\accentset{\star}{\alpha}_4= I_r \accentset{\star}{\vect{\gamma}}$ }                                                                                                           & \multicolumn{2}{c|}{\cellcolor[HTML]{EFEFEF}{\color[HTML]{333333} $\accentset{\star}{\beta}_4= I_4 \accentset{\star}{\vect{\gamma}} $ }}                                                                                                                                                                    \\ \hline
\end{tabular}%
}
\end{table}

By substituting $\alpha_1=f_1(\omega)$/$\alpha_2=f_2(\omega)$ into $\sum_j \alpha_j \leq \min \left\{ 1,\rm{ITC}_1\right\}$, we can find the optimal SIDNR of the first phase as
\begin{equation}
    \label{eq:omega_1}
    \accentset{\star}{\gamma}_1=  \sqrt{\frac{ \Phi_1}{I_r} +1} -1, 
\end{equation}
where $\Phi_i = \min \left\{ 1,\text{ITC}_i \right\}$, $\forall i \in \{1,2\}$. 
Moreover, substituting $\beta_1=g_1(\omega)$ and $\beta_2=g_2(\omega)$ into $\sum_j \beta_j \leq \min \left\{ 1,\rm{ITC}_2\right\}$, the optimal SIDNR of the second phase can be derived by:
\begin{equation}
    \label{eq:omega_2} 
    \accentset{\star}{\gamma}_2=  \frac{ \sqrt{4 I_2 \Phi_2 + \left(I_1 + I_2 \right)^2 } - I_1 -I_2 }{2 I_2}. 
\end{equation}
Then, the optimal SIDNR can be derived as $\accentset{\star}{\gamma}= \min (\accentset{\star}{\gamma}_1, \accentset{\star}{\gamma}_2)$. Finally, we can obtain $\accentset{\star}{\alpha}_1$-$\accentset{\star}{\alpha}_2$ and $\accentset{\star}{\beta}_1$-$\accentset{\star}{\beta}_2$ by substituting $\accentset{\star}{\gamma}$ into \eqref{eq:alpha1} and \eqref{eq:beta1}, respectively, i.e.,
\begin{align}
    \label{eq:alpha1_star} \accentset{\star}{\alpha}_1= \accentset{\star}{\gamma}^2 I_r + \accentset{\star}{\gamma} I_r, \:\: \accentset{\star}{\alpha}_2=  \accentset{\star}{\gamma} I_r, \:\: \accentset{\star}{\beta}_1= \accentset{\star}{\gamma}^2 I_2   + \accentset{\star}{\gamma} I_1,\:\:  \accentset{\star}{\beta}_2= \accentset{\star}{\gamma} I_2.
\end{align}
By repeating similar steps for $C>2$, we obtain the optimal SIDNRs and power allocations as tabulated in  Table \ref{tab:closed_form}. Based on the observed pattern in Table \ref{tab:closed_form}, the generalizated closed-form equations for the $C$ cluster size can be obtained as in Lemma \ref{lem:SP1}.

\bibliographystyle{ieeetr}
\bibliography{uav_cr_noma}

\begin{thebibliography}{10}

\bibitem{popovski20185g}
P.~Popovski, K.~F. Trillingsgaard, O.~Simeone, and G.~Durisi, ``{5G} wireless
  network slicing for e{MBB}, {URLLC}, and m{MTC}: {A} communication-theoretic
  view,'' {\em IEEE Access}, vol.~6, pp.~55765--55779, 2018.

\bibitem{Celik6G}
A.~Celik, A.~Chaaban, B.~Shihada, and M.~Alouini, ``Topology optimization for
  {6G} networks: {A} network information-theoretic approach,'' {\em IEEE Veh.
  Technol. Mag.}, vol.~15, no.~4, 2020.

\bibitem{dai2015non}
L.~Dai, B.~Wang, Y.~Yuan, S.~Han, I.~Chih-Lin, and Z.~Wang, ``Non-orthogonal
  multiple access for {5G}: {S}olutions, challenges, opportunities, and future
  research trends,'' {\em IEEE Commun. Mag.}, vol.~53, no.~9, pp.~74--81, 2015.

\bibitem{Access_CR_NOMA}
S.~{Arzykulov}, G.~{Nauryzbayev}, M.~S. {Hashmi}, A.~M. {Eltawil}, K.~M.
  {Rabie}, and S.~{Seilov}, ``Hardware- and interference-limited cognitive
  {I}o{T} relaying {NOMA} networks with imperfect {SIC} over generalized
  non-homogeneous fading channels,'' {\em IEEE Access}, vol.~8,
  pp.~72942--72956, 2020.

\bibitem{Celik2016Green}
A.~{Celik} and A.~E. {Kamal}, ``Green cooperative spectrum sensing and
  scheduling in heterogeneous cognitive radio networks,'' {\em IEEE Trans.
  Cogn. Commun. Netw.}, vol.~2, no.~3, pp.~238--248, 2016.

\bibitem{Celik2016Multi}
A.~{Celik} and A.~E. {Kamal}, ``Multi-objective clustering optimization for
  multi-channel cooperative spectrum sensing in heterogeneous green {CRN}s,''
  {\em IEEE Trans. Cogn. Commun. Netw.}, vol.~2, no.~2, pp.~150--161, 2016.

\bibitem{Celik2017Hybrid}
A.~{Celik}, A.~{Alsharoa}, and A.~E. {Kamal}, ``Hybrid energy harvesting-based
  cooperative spectrum sensing and access in heterogeneous cognitive radio
  networks,'' {\em IEEE Trans. Cogn. Commun. Netw}, vol.~3, no.~1, pp.~37--48,
  2017.

\bibitem{Li}
X.~{Li}, M.~{Liu}, C.~{Deng}, P.~T. {Mathiopoulos}, Z.~{Ding}, and Y.~{Liu},
  ``Full-duplex cooperative {NOMA} relaying systems with {I/Q} imbalance and
  imperfect {SIC},'' {\em IEEE Wireless Commun. Lett}, vol.~9, no.~1,
  pp.~17--20, 2020.

\bibitem{Li2}
X.~{Li}, J.~{Li}, Y.~{Liu}, Z.~{Ding}, and A.~{Nallanathan}, ``Residual
  transceiver hardware impairments on cooperative {NOMA} networks,'' {\em IEEE
  Trans. Wireless Commun}, vol.~19, no.~1, pp.~680--695, 2020.

\bibitem{Chen1}
X.~{Chen}, M.~{Wen}, T.~{Mao}, and S.~{Dang}, ``Spectrum resource allocation
  based on cooperative {NOMA} with index modulation,'' {\em IEEE Trans. Cogn.
  Commun. Netw}, pp.~1--1, 2020.

\bibitem{Chen2}
X.~{Chen}, M.~{Wen}, and S.~{Dang}, ``On the performance of cooperative
  {OFDM-NOMA} system with index modulation,'' {\em IEEE Wireless Commun. Lett},
  pp.~1--1, 2020.

\bibitem{tgcn}
S.~{Arzykulov}, G.~{Nauryzbayev}, T.~A. {Tsiftsis}, B.~{Maham}, and
  M.~{Abdallah}, ``On the outage of underlay {CR-NOMA} networks with
  detect-and-forward relaying,'' {\em IEEE Trans. Cogn. Commun. Netw.}, vol.~5,
  pp.~795--804, Sep. 2019.

\bibitem{Xu1}
L.~{Xu}, H.~{Xing}, Y.~{Deng}, A.~{Nallanathan}, and C.~{Zhuansun},
  ``Fairness-aware throughput maximization for underlaying cognitive {NOMA}
  networks,'' {\em IEEE Syst. J.}, pp.~1--12, 2020.

\bibitem{Xu2}
Y.~{Xu}, R.~Q. {Hu}, and G.~{Li}, ``Robust energy-efficient maximization for
  cognitive {NOMA} networks under channel uncertainties,'' {\em IEEE Internet
  Things J.}, pp.~1--1, 2020.

\bibitem{Nguyen}
M.~T. {Nguyen} and L.~B. {Le}, ``{NOMA} user pairing and {UAV} placement in
  {UAV}-based wireless networks,'' in {\em IEEE Int. Conf. Commun. (ICC)},
  pp.~1--6, 2019.

\bibitem{Celik1}
A.~{Celik}, M.~{Tsai}, R.~M. {Radaydeh}, F.~S. {Al-Qahtani}, and M.~{Alouini},
  ``Distributed user clustering and resource allocation for imperfect {NOMA} in
  heterogeneous networks,'' {\em IEEE Trans. Commun}, vol.~67, no.~10,
  pp.~7211--7227, 2019.

\bibitem{Celik2}
A.~{Celik}, M.~{Tsai}, R.~M. {Radaydeh}, F.~S. {Al-Qahtani}, and M.~{Alouini},
  ``Distributed cluster formation and power-bandwidth allocation for imperfect
  {NOMA} in {DL-HetNet}s,'' {\em IEEE Trans. Commun}, vol.~67, no.~2,
  pp.~1677--1692, 2019.

\bibitem{Xu}
Y.~Xu, F.~Fang, D.~Cai, and Y.~Yuan, ``Intelligent user clustering and robust
  beamforming design for {UAV-NOMA} downlink,'' {\em arXiv:2006.05852v1
  [cs.IT]}, 06 2020.

\bibitem{Elkashlan}
Y.~{Liu}, M.~{Elkashlan}, Z.~{Ding}, and G.~K. {Karagiannidis}, ``Fairness of
  user clustering in {MIMO} non-orthogonal multiple access systems,'' {\em IEEE
  Commun. Lett.}, vol.~20, no.~7, pp.~1465--1468, 2016.

\bibitem{Chatzinotas}
Z.~{Liu}, L.~{Lei}, N.~{Zhang}, G.~{Kang}, and S.~{Chatzinotas}, ``Joint
  beamforming and power optimization with iterative user clustering for
  {MISO-NOMA} systems,'' {\em IEEE Access}, vol.~5, pp.~6872--6884, 2017.

\bibitem{Mozaffari}
M.~{Mozaffari}, W.~{Saad}, M.~{Bennis}, and M.~{Debbah}, ``Mobile unmanned
  aerial vehicles ({UAV}s) for energy-efficient internet of things
  communications,'' {\em IEEE Trans. Wireless Commun}, vol.~16, no.~11,
  pp.~7574--7589, 2017.

\bibitem{Chen}
Y.~{Chen}, W.~{Feng}, and G.~{Zheng}, ``Optimum placement of {UAV} as relays,''
  {\em IEEE Commun. Lett.}, vol.~22, no.~2, pp.~248--251, 2018.

\bibitem{Savkin}
A.~V. {Savkin} and H.~{Huang}, ``Deployment of unmanned aerial vehicle base
  stations for optimal quality of coverage,'' {\em IEEE Wireless Commun. Lett},
  vol.~8, no.~1, pp.~321--324, 2019.

\bibitem{Alzenad}
M.~{Alzenad}, A.~{El-Keyi}, and H.~{Yanikomeroglu}, ``3-{D} placement of an
  unmanned aerial vehicle base station for maximum coverage of users with
  different {QoS} requirements,'' {\em IEEE Wireless Commun. Lett}, vol.~7,
  no.~1, pp.~38--41, 2018.

\bibitem{Mozaffari1}
M.~{Mozaffari}, W.~{Saad}, M.~{Bennis}, and M.~{Debbah}, ``Efficient deployment
  of multiple unmanned aerial vehicles for optimal wireless coverage,'' {\em
  IEEE Commun. Lett.}, vol.~20, no.~8, pp.~1647--1650, 2016.

\bibitem{Kalantari}
E.~{Kalantari}, H.~{Yanikomeroglu}, and A.~{Yongacoglu}, ``On the number and
  3{D} placement of drone base stations in wireless cellular networks,'' in
  {\em 2016 IEEE 84th Veh. Tech. Conf. (VTC-Fall)}, pp.~1--6, 2016.

\bibitem{bushnaq2019aeronautical}
O.~M. {Bushnaq}, A.~{Celik}, H.~{Elsawy}, M.~{Alouini}, and T.~Y.
  {Al-Naffouri}, ``Aeronautical data aggregation and field estimation in {IoT}
  networks: Hovering and traveling time dilemma of {UAV}s,'' {\em IEEE Trans.
  Wireless Commun.}, vol.~18, no.~10, pp.~4620--4635, 2019.

\bibitem{bushnaq2020optimal}
O.~M. Bushnaq, M.~A. Kishk, A.~Celik, M.-S. Alouini, and T.~Y. Al-Naffouri,
  ``Optimal deployment of tethered drones for maximum cellular coverage in user
  clusters,'' {\em arXiv:2003.00713v3 [cs.NI]}, 2020.

\bibitem{Zeng1}
Y.~{Zeng}, R.~{Zhang}, and T.~J. {Lim}, ``Throughput maximization for
  {UAV}-enabled mobile relaying systems,'' {\em IEEE Trans. Commun}, vol.~64,
  no.~12, pp.~4983--4996, 2016.

\bibitem{Zeng2}
Y.~{Zeng} and R.~{Zhang}, ``Energy-efficient {UAV} communication with
  trajectory optimization,'' {\em IEEE Trans. Wireless Commun}, vol.~16, no.~6,
  pp.~3747--3760, 2017.

\bibitem{Wu}
Q.~{Wu}, Y.~{Zeng}, and R.~{Zhang}, ``Joint trajectory and communication design
  for multi-{UAV} enabled wireless networks,'' {\em IEEE Trans. Wireless
  Commun}, vol.~17, no.~3, pp.~2109--2121, 2018.

\bibitem{He}
H.~{He}, S.~{Zhang}, Y.~{Zeng}, and R.~{Zhang}, ``Joint altitude and beamwidth
  optimization for {UAV}-enabled multiuser communications,'' {\em IEEE Commun.
  Lett.}, vol.~22, no.~2, pp.~344--347, 2018.

\bibitem{tvt}
S.~{Arzykulov}, G.~{Nauryzbayev}, T.~A. {Tsiftsis}, and B.~{Maham},
  ``Performance analysis of underlay cognitive radio nonorthogonal multiple
  access networks,'' {\em IEEE Trans. Veh. Technol.}, vol.~68, no.~9,
  pp.~9318--9322, 2019.

\bibitem{Hourani}
A.~{Al-Hourani}, S.~{Kandeepan}, and S.~{Lardner}, ``Optimal {LAP} altitude for
  maximum coverage,'' {\em IEEE Wireless Commun. Lett}, vol.~3, no.~6,
  pp.~569--572, 2014.

\bibitem{ITU}
{ITU-R}, ``Propagation data and prediction methods required for the design of
  terrestrial broadband radio access systems operating in a frequency range
  from 3 to 60 {GH}z,'' {\em ITU-R}, Feb. 2012.

\bibitem{simon_alouini_2005}
M.~K. Simon and M.-S. Alouini, {\em Digital communication over fading
  channels}, vol.~95.
\newblock Wiley-Interscience, 2005.

\bibitem{impCSI}
G.~{Nauryzbayev} and E.~{Alsusa}, ``Interference alignment cancellation in
  compounded {MIMO} broadcast channels with general message sets,'' {\em IEEE
  Trans. Commun.}, vol.~63, no.~10, pp.~3702--3712, 2015.

\bibitem{Stefania}
S.~Sesia, I.~Toufik, and M.~Baker, {\em {LTE} - {T}he {UMTS} Long Term
  Evolution: {F}rom Theory to Practice}.
\newblock Wiley Publishing, New York, USA, second~ed., 2011.

\bibitem{boyd2007tutorial}
S.~Boyd, S.-J. Kim, L.~Vandenberghe, and A.~Hassibi, ``A tutorial on geometric
  programming,'' {\em Optimization and engineering}, vol.~8, no.~1, p.~67,
  2007.

\bibitem{BurkardAssignProb}
R.~Burkard, M.~Dell'Amico, and S.~Martello, {\em Assignment Problems}.
\newblock Society for Industrial and Applied Mathematics, 2012.

\bibitem{gabow-tarjan}
H.~N. Gabow and R.~E. Tarjan, ``Algorithms for two bottleneck optimization
  problems,'' {\em Journal of Algorithms}, vol.~9, no.~3, pp.~411 -- 417, 1988.

\bibitem{cvx}
I.~CVX~Research, ``{CVX}: Matlab software for disciplined convex programming,
  version 2.0.'' \url{http://cvxr.com/cvx}, Aug. 2012.

\bibitem{function_wolfram}
``The {W}olfram functions site.'' \url{http://functions.wolfram.com}.
\newblock Accessed: 2020-08-13.

\bibitem{Adamchik}
V.~S. Adamchik and O.~I. Marichev, ``The algorithm for calculating integrals of
  hypergeometric type functions and its realization in {REDUCE} system,'' in
  {\em Proceedings of the International Symposium on Symbolic and Algebraic
  Computation}, ISSAC '90, (New York, NY, USA), pp.~212--224, ACM, 1990.

\end{thebibliography}
\end{document}